\documentclass[11pt,letterpaper]{article}
\usepackage[margin=1in]{geometry}

\usepackage{caption}
\usepackage{subcaption}

\usepackage{amsmath,amsthm,amssymb}
\usepackage{bbm}
\usepackage{booktabs}
\usepackage{xcolor}
\usepackage{tcolorbox}
\usepackage{natbib}

\newtheorem{proposition}{Proposition}[section]

\newtheorem{definition}{Definition}[section]
\usepackage[hypertexnames=false]{hyperref}
\hypersetup{
  colorlinks=true,
  linkcolor={blue!66!black},
  linkbordercolor=red,
  citecolor={blue!66!black},
}

\usepackage[utf8]{inputenc}
\usepackage{xifthen,enumitem,comment}
\usepackage{xcolor}
\usepackage{etoolbox}
\usepackage{algorithm,algpseudocode}
\usepackage{caption}
\usepackage{subcaption}
\usepackage[hypertexnames=false]{hyperref}
\hypersetup{
  colorlinks=true,
  linkcolor={blue!66!black},
  linkbordercolor=red,
  citecolor={blue!66!black},
}
 \usepackage{cleveref}
 \hypersetup{
 	colorlinks = true,
 	citecolor={blue!66!black},
 }
\usepackage{booktabs,tikz}
\usetikzlibrary{matrix,decorations.pathreplacing}

\definecolor{longhorn}{rgb}{0.8, 0.33, 0.0}

\newcommand{\ind}{\mathbbm{1}}

\ifx\argmax\undefined\DeclareMathOperator*{\argmax}{arg\,max}\fi
\ifx\argmin\undefined\DeclareMathOperator*{\argmin}{arg\,min}\fi

\providecommand{\E}{\mathbb{E}}

\providecommand{\cT}{\mathcal{T}}

\newcommand{\off}[1]{\left[#1\right]}
\newcommand{\offf}[1]{\left\{#1\right\}}

\newcommand{\Bigof}[1]{\Big(#1\Big)}

\providecommand{\Rev}{\mathrm{Rev}}

\providecommand{\opt}{\mathsf{OPT}}

\newcommand{\F}{\mathbb{F}}
\newcommand{\G}{\mathbb{G}}
\newcommand{\dists}{\mathbf{F}}
\newcommand{\OS}[1][k]{\Phi_{{#1}}}
\newcommand{\Fiid}[1][k]{\bar{\F}_{{#1}}}

\newcommand{\vals}{\mathbf{v}}

\newcommand{\mech}{\mathcal{M}}
\newcommand{\R}[1][k]{\mathcal{R}_{#1}}

\newcommand{\spa}{\mathsf{SPA}}
\newcommand{\pp}{\mathsf{PP}}
\newcommand{\myerson}{\mathsf{Myerson}}

\newcommand{\Ex}[2][]{\underset{{#1}}{\E}\off{{#2}}}
\renewcommand{\P}{\mathbb{P}}
\renewcommand{\Pr}[2][]{\underset{{#1}}{\P}\off{{#2}}}
\newcommand{\dd}[1]{\;\textrm{d}#1}

\newtheorem{theorem}{Theorem}[section]
\newtheorem{lemma}{Lemma}[section]
\newtheorem{claim}[theorem]{Claim}

\newtheorem{fact}{Fact}[section]

\title{Robust Mechanism Design with Anonymous Information}
\author{
Zhihao Gavin Tang \thanks{Key Laboratory of Interdisciplinary Research of Computation and Economics, Shanghai University of Finance and Economics \texttt{tang.zhihao@mail.shufe.edu.cn}}
\and
Shixin Wang \thanks{H. Milton Stewart School of Industrial and Systems Engineering, Georgia Institute of Technology \texttt{shixin.wang@isye.gatech.edu}} 
}
\date{}
\begin{document}
\maketitle

\begin{abstract}
  In practice, auction data are often endogenously censored and anonymous, revealing only limited outcome statistics rather than full bid profiles.
  We study robust auction design when the seller observes only aggregated, anonymous order statistics and seeks to maximize worst-case expected revenue over all product distributions consistent with the observed statistic. We show that simple and widely used mechanisms are robustly optimal. Specifically, posted pricing is robustly optimal given the distribution of the highest value; the Myerson auction designed for the unique consistent i.i.d.\ distribution is robustly optimal given the lowest value distribution; and the second-price auction with an optimal reserve is robustly optimal when an intermediate order statistic is observed and the implied i.i.d.\ distribution is regular above its reserve. More generally, 
 for a broad class of monotone symmetric mechanisms depending only on the top $k$ order statistics, including multi-unit and position auctions, the worst-case revenue is attained under the i.i.d.\ distribution consistent with the observed $k$-th order statistic. 
  Our results provide a tractable foundation for non-discriminatory auction design, where fairness and privacy are intrinsic consequences of the information structure rather than imposed constraints.
\end{abstract}

\section{Introduction}

Auction mechanisms are widely used to allocate goods and resources in practice, ranging from online advertising and e-commerce platforms to procurement, real estate, and spectrum sales. 
The seminal work \cite{myerson1981optimal} characterizes revenue-optimal mechanisms under the assumption that the seller has precise knowledge of each bidder’s value distribution. 
While this offers an elegant theoretical framework, in many real-world markets, however, bidder-level information is often unavailable or unusable: auction outcomes are endogenously censored by auction formats, bidder identities are anonymous, and privacy and fairness considerations restrict the use of discriminatory mechanisms. Consequently, transaction data in practice often produce anonymous data that record only order statistics rather than full bid distributions. 
In this work, we study a robust mechanism design problem in which the seller possesses only aggregated and anonymous information about buyers’ distributions, motivated by the following two fundamental features of real-world auction markets.

\paragraph{Limited Observability on Bidder-Level Data.}
First, in practice, sellers rarely observe bidder-level valuations when learning from historical transactions or public disclosures; instead, only aggregated and anonymous information is typically available and reliable due to endogenous censoring of observed auction outcomes and the anonymity and scarcity of bidder data.

\textit{Inherently censored auction outcomes.}
When sellers hope to collect bidders' data from historical transactions, the auction outcomes observed by the sellers are often heavily censored. Many widely used auction formats only elicit some order statistics information inherently. 
When a seller runs a posted-price mechanism, for instance,  he observes only a binary purchase decision, revealing whether the highest bidder's valuation exceeds a threshold, while the specific values of bidders remain entirely hidden.
In ascending (English) auctions, the winning bidder never reveals their true maximum valuation, and only a bid marginally higher than the second-highest bid is observed. Furthermore, losing bidders typically exit the market silently once the price exceeds their valuation, and their specific valuations are often unrecorded. Similarly, ad exchanges and procurement platforms often disclose only anonymous winning prices, which is the second-highest bid. 

\textit{Anonymous and sparse bidder data.} Even when richer data might be available in principle, bidder-level learning remains challenging. In many markets, buyers participate infrequently, and they may use multiple or temporary identifiers (e.g., false-name bidding, shill bidding). These behaviors introduce substantial noise, sparsity, and instability into individual-level bid observations, limiting their reliability for distributional inference, whereas winning prices in the transactions, which are often order statistics of bidders' valuations, are substantially more stable and reliable. 

\paragraph{Regulated Usage of Bidder-Level Data.}
Second, the growing emphasis on privacy and anonymity limits the practical applicability of discriminatory mechanisms, such as the optimal Myerson auction. 
In particular, when buyers’ prior distributions are not i.i.d., Myerson’s auction may allocate the item to a bidder with a lower bid, raising concerns about fairness, transparency and acceptability in practice. Much of the existing literature addresses these concerns by imposing symmetry, anonymity, or fairness as explicit constraints in the mechanism design problem. In contrast, we take an information-based approach and model the seller’s limited access to bidder-level data directly through a robust mechanism design framework grounded in aggregated information.

\textit{Legal constraints for privacy and fairness.} This perspective is motivated by the growing difficulty of obtaining or exploiting bidder-specific information in modern auction environments.
Privacy regulations such as the GDPR\footnote{General Data Protection Regulation \url{https://gdpr-info.eu}} and CCPA\footnote{California Consumer Privacy Act \url{https://oag.ca.gov/privacy/ccpa}}, alongside platform-level data minimization standards, frequently restrict sellers to observing or using only aggregated or anonymized transaction records. Anonymous data protects individual identities, avoids misuse of sensitive bidder information, and prevents discrimination of individuals based on their identities and past behaviors. As a result, platforms often release or rely on only coarse, aggregated information, such as final payments or clearing prices, rather than full bid profiles. 

\textit{Strategic motivations for anonymity.}
Beyond regulatory considerations, anonymity is also a deliberate design choice in many auction platforms. In ad exchanges, procurement platforms, eBay, and real estate auctions, buyer identities are often intentionally hidden from sellers and other participants. They typically disclose only anonymous outcomes, such as highest bid or winner payments, in order to mitigate strategic behaviors, including collusion, manipulation, and discriminatory pricing. 

Our framework directly addresses these privacy and anonymity considerations without restricting the mechanism space.
Specifically, we formulate a robust mechanism design framework in which the seller observes only aggregated order-statistic information about bidders’ valuations and imposes no ex ante constraints on the form of the mechanism. This framework protects data privacy instead of restricting mechanisms. Because the available information is symmetric across bidders, any robustly optimal mechanism designed within our framework is inherently anonymous and non-discriminatory. Thus, our model naturally addresses fairness and privacy considerations not as external constraints but as intrinsic features of the information structure.

\subsection{Main Results}

Motivated by the limited availability of bidder-level distributional information and the widespread use of anonymous, order-statistic data in practice, we study auction design when the seller observes the distribution of an order statistic of bidders’ values. Specifically, we develop a robust mechanism design framework in which the seller observes, for example, the highest or second-highest bid—but lacks knowledge of the underlying individual value distributions. This information structure reflects common empirical settings in which platforms can reliably estimate the distribution of transaction prices or winning bids from past auctions, yet cannot recover bidder-specific valuations.
Within this framework, the seller faces an ambiguity set consisting of all product distributions consistent with the observed order-statistic distribution and seeks to maximize worst-case expected revenue across this entire set. 
We give a complete picture of robust mechanism design under different possible order-statistic information. Our results show that the widespread use of simple mechanisms such as posted pricing and second price auctions in practice is supported by their robust performance under censored and aggregated data.

\begin{theorem}[Informal]
When the distribution of the first order statistic is known, posting a take-it-or-leave-it price optimized for this distribution is robustly optimal.
\end{theorem}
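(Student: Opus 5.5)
The argument splits into two inequalities that meet at the posted-price revenue $\max_p p(1-G(p))$, where $G$ is the known CDF of the highest value $\max_i v_i$.

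\emph{Achievability.} A posted price $p$ (first-come or any tie-breaking) sells exactly when $\max_i v_i \ge p$. So under \emph{every} product distribution $\dists$ consistent with $G$, its revenue is $p\,(1-G(p^-))$. Choosing $p^\ast$ to maximize this quantity, the worst-case revenue of posted pricing is $R^\ast := \sup_p p(1-G(p))$, with left limits handled by choosing the price at atoms.

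\emph{Converse.} I must show that for every mechanism $\mech$ (DSIC/IR, or BIC/IR) there is some product distribution consistent with $G$ under which $\mech$ earns at most $R^\ast$. Since the inequality has to hold for every $\mech$, I will exhibit such a bad distribution for each mechanism and not rely on a single universal adversary. The natural candidate is the i.i.d.\ distribution $\Fiid[n]$ with $F(v)=G(v)^{1/n}$, which is the unique i.i.d.\ distribution consistent with $G$.

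Against this candidate there is a known fact: for i.i.d.\ buyers the optimal revenue can strictly exceed posted pricing, e.g.\ a second-price auction with reserve. So the i.i.d.\ instance is not enough on its own, and the adversary must make the instance asymmetric. The plan is to use a sequence of product distributions that approach \emph{one dominant bidder}.

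Concretely, take bidder $1$ with CDF $F_1 = G^{1-(n-1)\epsilon}$ and bidders $j\ge 2$ with CDF $G^{\epsilon}$. The product of the CDFs is then exactly $G$, so the instance is consistent with the observation. As $\epsilon\to 0$, each bidder $j\ge 2$ has CDF $G^\epsilon\to \ind[v\ge \underline v]$, where $\underline v$ is the bottom of the support, so their values collapse to the lowest possible value.

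Revenue is then bounded by Myerson's virtual-value formula, $\Rev(\mech,\dists)\le \E[\max_i \phi_i(v_i)^+]$. Bidder $1$ contributes at most the single-buyer optimum for $F_1$. That optimum is $\sup_p p(1-G(p)^{1-(n-1)\epsilon}) \to R^\ast$, because $1-G^{a}$ is continuous in $a$ and $p\le \bar v$ on bounded support.

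For the low bidders I would use the revenue-curve form directly. Any mechanism's revenue from bidder $j$ is at most $\E[\text{price paid by } j]\le \sup_p p(1-G(p)^{\epsilon})$. This can fail to vanish when the support is unbounded, so I would first truncate: assume bounded support, or treat the tail by a separate truncation argument. Under bounded support $[0,\bar v]$ with $G(p)>0$ for $p>0$, we have $p(1-G^\epsilon)\le \bar v(1-G(p)^\epsilon)$. The supremum over $p$ need not vanish near $p=0$, but there $p$ itself is small. Splitting at $p=\delta$ gives a bound of $\delta + \bar v(1-G(\delta)^\epsilon)\to \delta$. So the total revenue is at most $R^\ast+o(1)$.

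\emph{Main obstacle: the limit is not attained.} Letting $\epsilon\to 0$ gives only the infimum, not an attained worst case. That is acceptable for a sup-inf statement, but the positivity assumption is delicate: it fails if $G$ has an atom at $0$ or if $G(\delta)=0$ for small $\delta$. If $G(p)=0$ on $[0,a)$, I would instead place the low bidders' mass at $a$. Their CDF is then $G^\epsilon$ on $[a,\bar v]$, and each low bidder can pay at most about $a$, not $\delta$. Here the additivity bound above is too crude, so I must use the joint bound $\E[\max_i \phi_i^+]$. When all low bidders sit at $a$ and bidder $1$ is almost always at least $a$, the mechanism's revenue is at most that of selling to bidder $1$ alone, with an $o(1)$ correction. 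So the step I expect to be hardest is writing a clean virtual-value bound for the nearly-degenerate bidders, possibly via ironing, since $G^\epsilon$ need not be regular. The fallback is a direct argument: in the limit, any IR/IC mechanism restricted to the profile $(v_1,a,\dots,a)$ is a single-buyer mechanism for bidder $1$, and its revenue is therefore at most $R^\ast$.
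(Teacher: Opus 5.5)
There is a genuine gap. Your adversary family uses $F_1=G^{1-(n-1)\epsilon}$ and $F_j=G^{\epsilon}$ for $j\ge 2$, and every one of these factors lives inside the support of $G$. When $a:=\inf \mathrm{supp}(G)>0$, the low bidders therefore always have value at least $a$, and a mechanism can collect that value as a fallback. Two of your claims fail here. The first is that ``the mechanism's revenue is at most that of selling to bidder~1 alone, with an $o(1)$ correction.'' The second is the fallback that on $(v_1,a,\dots,a)$ any IR/IC mechanism ``is a single-buyer mechanism for bidder~1.'' Both are false, because IR allows charging a bidder of value $a$ up to $a$ whenever she is served.

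Concretely, take $G$ uniform on $[1,2]$, so $R^\ast=\max_p p(1-G(p))=1$. Consider the mechanism ``offer bidder~1 the price $3/2$; if she declines, sell to bidder~2 at price $1$,'' which is DSIC and IR. On your instance it earns $\tfrac32-\tfrac{s}{2}$ with $s=2^{-(1-(n-1)\epsilon)}<1$, which is strictly more than $1$ and tends to $5/4$. It also earns $5/4$ on the limit $G\times\delta_1\times\cdots\times\delta_1$. So your adversaries cannot prove the converse in this case. Separately, the unbounded-support case is left as a truncation you never carry out.

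The missing idea is that the dummy bidders should sit at $0$, not at the bottom of the support of $G$. The product distribution $G\times\delta_0\times\cdots\times\delta_0$ is itself in $\Pi_1(G)$: values are nonnegative, so $v_{(1)}=v_1\sim G$. Against it, IR forces each dummy's payment to be nonpositive. Any mechanism's revenue is then at most bidder~1's expected payment, which is at most the single-buyer optimum $\max_p p(1-G(p))$.

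This is exactly the paper's saddle point: one universal adversary, and it is attained. Your ``limit is not attained'' obstacle therefore disappears, along with any boundedness, positivity, or ironing issues. Your premise that you need a separate bad distribution for each mechanism was unnecessary. Your $\epsilon$-argument does work in the subcase $G(\delta)>0$ for all $\delta>0$. There, $\opt(F_1)\le R^\ast$ even holds exactly, since $F_1\ge G$ pointwise. But it is a detour around this one-line observation.
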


We begin with the natural setting in which the seller knows the distribution of the first order statistic, that is, the highest valuation. Since the distribution of the first order statistic can be directly learned from repeated pricing experiments, a natural candidate mechanism we consider in this case is a posted-price mechanism. Intuitively, when only the distribution of the highest valuation is identified, posted pricing aligns closely with the underlying data-generating process and is therefore inherently robust to uncertainty about the remaining aspects of the value distribution.

Moreover, the ambiguity set under consideration includes a degenerate joint distribution consisting of one active bidder and $n-1$ dummy bidders with identically zero valuations. The presence of this distribution certifies the optimality of simple posted-price mechanisms.

A key observation is the following: among all joint distributions consistent with a given distribution of the first order statistic, this degenerate distribution is stochastically minimal. 
The real value of this observation lies not in the fact itself, but in the perspective it offers. Thinking in terms of stochastic minimality will be central to our later analysis. In particular, it implies that any mechanism that is monotone with respect to stochastic dominance attains its worst-case revenue on this degenerate distribution.

\begin{theorem}[Informal]
When the distribution of the last order statistic is known, Myerson’s auction designed for the consistent i.i.d.\ distribution is robustly optimal.
\end{theorem}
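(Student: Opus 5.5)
We argue by sandwiching: exhibit one consistent distribution on which no mechanism beats Myerson's auction, and show Myerson's auction never does worse than that on any consistent distribution. Let $G$ be the observed distribution of the lowest value, with survival $1-G(v)$. A product distribution $\F_1\times\cdots\times\F_n$ is consistent iff
\[
\prod_{i=1}^n \bigl(1-F_i(v)\bigr) = 1-G(v) \quad \text{for all } v .
\]
So the unique consistent i.i.d.\ distribution is $\bar F$ with $1-\bar F(v) = (1-G(v))^{1/n}$. Since $\bar F^n$ lies in the ambiguity set, the worst-case revenue of any mechanism is at most its revenue under $\bar F^n$. That revenue is in turn at most $\Rev(\myerson_{\bar F}, \bar F^n)$. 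This gives the upper bound, so it remains to show that $\myerson_{\bar F}$ earns at least this much on every consistent product distribution.

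\textbf{Key structural step.} I would compare each consistent $\dists=(F_1,\dots,F_n)$ with $\bar F^n$ through stochastic ordering, in the spirit of the first-order-statistic case, but with the roles reversed. The degenerate ``dummy'' extreme is no longer available, since every bidder must have value at least the minimum. Write $S_i = 1-F_i$ and take logarithms, $\log S_i(v) = a_i(v)$ with $\sum_i a_i(v) = n \log \bar S(v)$. The i.i.d.\ point is the ``average'' of the $a_i$.

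\textbf{Plan for comparing revenue.} Express the revenue of the symmetric mechanism $\myerson_{\bar F}$ (a second-price auction with the monopoly reserve $r$ of $\bar F$, ironed if necessary) via the order statistics of the values. The seller earns $\max(r, v_{(2)})$ when $v_{(1)} \ge r$. Its revenue is an integral of the survival functions of the top two order statistics, $\Pr{v_{(1)}\ge t}$ and $\Pr{v_{(2)}\ge t}$, against the ironed virtual-value structure of $\bar F$. By Myerson's lemma the revenue equals $\E[\max_i \bar\phi(v_i)^+]$ plus a correction term, since bidder $i$'s true virtual value differs from $\bar\phi$. I would instead write the payment directly as
\[
\int_r^\infty \Pr{v_{(2)}\ge t}\,dt + r\,\Pr{v_{(1)}\ge r}.
\]
Then I would show pointwise in $t$ that $\Pr{v_{(k)}\ge t}$ under $\dists$ is at least its value under $\bar F^n$, for $k=1,2$, given a fixed product of survivals. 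For $k=1$, $1-\prod_i F_i(t)$ with $\prod_i S_i$ fixed is minimized when all $S_i$ are equal: this is a Schur-type convexity argument in the $a_i$. For $k=2$, $\Pr{v_{(2)}\ge t} = 1-\prod_i F_i - \sum_i S_i\prod_{j\ne i}F_j$ should similarly be Schur-convex in $(a_1,\dots,a_n)$ on the hyperplane $\sum_i a_i = \text{const}$. If so, it is minimized at the balanced point.

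\textbf{Main obstacle.} The hard step is extending beyond the second-price-with-reserve form when $\bar F$ is irregular. It also covers establishing the Schur convexity of $\Pr{v_{(2)}\ge t}$ in log-survival coordinates. For the latter I would first verify the two-bidder case by direct computation, then use the Schur--Ostrowski criterion $(a_i-a_j)(\partial_i-\partial_j)\ge 0$. For the irregular case I would write Myerson's allocation as a symmetric, monotone rule depending on the top order statistics, whose revenue is an integral of $\Pr{v_{(k)}\ge t}$ terms with nonnegative weights. Then the same pointwise order-statistic domination would finish the argument.
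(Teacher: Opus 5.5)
You have a genuine gap in the irregular case. Your upper bound is the paper's. Your lower bound is valid when $\bar F$ is regular above its monopoly reserve $r$. In that case $\myerson(\bar F^n)=\spa(r)$, whose revenue $r\,\mathbb{P}[v_{(1)}\ge r]+\int_r^\infty\mathbb{P}[v_{(2)}\ge t]\,dt$ depends only on the marginals of $v_{(1)}$ and $v_{(2)}$, so your marginal comparisons finish. This is the argument of Theorem~\ref{thm:second} with $k=n$, not the one the paper uses here.

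The Schur-convexity you flag is immediate. Fix all bidders except $1,2$, let $s_i=e^{a_i}$ be bidder $i$'s survival at $t$, and let $S$ count the other bidders above $t$. Then $(\partial_{a_1}-\partial_{a_2})\,\mathbb{P}[v_{(2)}\ge t]=\mathbb{P}[S=1]\,(s_1-s_2)$, which has the sign of $a_1-a_2$.

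The theorem, however, covers every $\G$. Since $\bar F=1-(1-\G)^{1/n}$ can be any distribution, the irregular case is essential. There your plan rests on a false claim: with ironing, Myerson's payment is \emph{not} a nonnegative combination of the marginal tails $\mathbb{P}[v_{(k)}\ge t]$. Ties inside an ironing interval make the payment depend jointly on several order statistics. Suppose $\bar\varphi$ equals a positive constant on a maximal interval $[\alpha,\beta)$ and ties are broken uniformly.
\begin{itemize}
\item If $v_{(1)}\ge\beta>v_{(2)}\ge\alpha$, the winner pays $\beta-(\beta-\alpha)/(m+1)$, where $m$ is the number of non-winners in $[\alpha,\beta)$.
\item If $\alpha\le v_{(2)}\le v_{(1)}<\beta$, total payment is $\alpha$.
\end{itemize}
Take $n=3$ and $v_{(2)}\in[\alpha,\beta)$. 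Moving $v_{(3)}$ from below $\alpha$ into $[\alpha,\beta)$ raises revenue by $(\beta-\alpha)/6$ when $v_{(1)}\ge\beta$, but by $0$ when $v_{(1)}<\beta$. So the payment is not of the form $\sum_j f_j(v_{(j)})$, and its expectation is not determined by the marginal laws of the $v_{(j)}$. Pointwise domination of each $\mathbb{P}[v_{(k)}\ge t]$ therefore cannot give the revenue inequality.

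Closing the gap needs a comparison of joint laws. The paper uses two ingredients.
\begin{itemize}
\item Lemma~\ref{lem:myerson_monotone}: the i.i.d.\ Myerson payment is pointwise monotone under order-statistic-wise dominance. This is sound and you could use it.
\item Lemma~\ref{lem:minimal_n}: $\bar F^n$ is $\preceq$-minimal in $\Pi_n(\G)$, proved by pairwise geometric averaging.
\end{itemize}
Do not import the second ingredient as written. Claim~\ref{cl:average} checks only the upper-orthant events $\{\max\ge h,\min\ge l\}$, which is not sufficient for Strassen's theorem in two dimensions, and its conclusion fails. Take $F_1=\tfrac12\delta_1+\tfrac12\delta_{10}$ and $F_2=\tfrac12\delta_2+\tfrac12\delta_{10}$; the averaged $F$ has atoms $1-1/\sqrt2$, $1/\sqrt2-\tfrac12$, $\tfrac12$ at $1,2,10$. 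The two minima have the same law, so a dominating coupling would have to match them. Yet $\mathbb{P}[\min=1,\max=10]$ is $\tfrac14$ under $F_1\times F_2$ and $1-1/\sqrt2>\tfrac14$ under $F\times F$. So the irregular case needs an argument that combines the specific structure of the ironed payment with a joint comparison that actually holds. The paper's proof as written does not supply one.
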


Next, we study the other extreme when the distribution of the last order statistic is known. At a high level, this case is guided by a similar underlying intuition as the first-order statistic setting: robust optimality can be characterized by identifying a minimal distribution within the ambiguity set and evaluating the performance of an optimal mechanism against it.

We show that Myerson’s auction designed for the unique i.i.d.\ distribution consistent with the known order-statistic distribution is robustly optimal. Our analysis consists of two main components: the monotonicity of Myerson’s auction and the stochastic minimality of the consistent i.i.d.\ distribution. To facilitate comparisons across product distributions, we introduce the notion of \emph{order-statistic-wise dominance}, defined via the existence of a coupling under which every order statistic of one distribution is at least as large as the corresponding order statistic of the other.
This notion is a natural generalization of first-order stochastic dominance from one dimension to higher dimensions.
From this perspective, the analyses for first-order and last-order information can be treated in a unified manner.

\begin{theorem}[Informal]
When the distribution of the second order statistic is known, the second-price auction with a reserve designed for the consistent i.i.d.\ distribution is robust and conditionally optimal. In contrast, Myerson’s auction designed for the same consistent i.i.d.\ distribution is not robust when ironing is required.
\end{theorem}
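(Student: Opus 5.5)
Throughout, $G$ denotes the given distribution of the second-highest value and $\bar F$ the unique i.i.d.\ distribution consistent with it, so that $G(x)=\bar F(x)^n+n\bar F(x)^{n-1}(1-\bar F(x))$; $\bar F$ exists and is unique because the right-hand side is strictly increasing in $\bar F(x)\in[0,1]$. The proof has two parts: a robustness claim for the second-price auction with reserve $r$ optimized for $\bar F$ ($\spa_r$), and a counterexample for Myerson's auction with ironing.

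\textbf{Robustness of $\spa_r$.} The plan follows the stochastic-minimality approach used for the first and last order statistics, in three steps.
\begin{enumerate}
\item Write the revenue of $\spa_r$ under any product distribution $\dists$ as a function of the top two order statistics only:
\[
\Rev(\spa_r,\dists)=\E\bigl[\max(v_{(2)},r)\,\ind\{v_{(1)}\ge r\}\bigr].
\]
This quantity is nondecreasing under the order-statistic-wise coupling.
\item Show that among all product distributions whose second order statistic has distribution $G$, the i.i.d.\ distribution $\bar F^n$ is minimal with respect to the top order statistic, given the second. Concretely, I would show that $\Pr{v_{(1)}\ge r \,\&\, v_{(2)}< r}$ is minimized at $\bar F$ for every $r$; the event $v_{(2)}\ge r$ has fixed probability $1-G(r)$. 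The revenue then decomposes as
\[
\Rev(\spa_r,\dists)=\E\bigl[v_{(2)}\ind\{v_{(2)}\ge r\}\bigr]+r\,\Pr{v_{(1)}\ge r>v_{(2)}},
\]
where the first term is pinned down by $G$. So only the second term needs the minimization.
\item Verify the minimization. With $q_i=1-F_i(r)$, the quantity $\Pr{v_{(2)}<r}$ equals $P_0+P_1$, where $P_j$ is the probability that exactly $j$ of the independent events $\{v_i\ge r\}$ occur. This sum is fixed, and the goal is to minimize $P_1$ subject to it. Equivalently, maximize $P_0=\prod_i(1-q_i)$ subject to $P_0+P_1$ being fixed. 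I expect this to be a symmetric concave-type optimization whose extremum is at $q_i$ all equal, proved by a two-variable smoothing argument: fix all $q_j$ except two and show that equalizing them keeps $P_0+P_1$ fixed while increasing $P_0$. This smoothing is the main technical obstacle, since the direction of the inequality must be checked carefully. If it goes the wrong way, the worst case would instead be non-i.i.d.
\end{enumerate}
Given the three steps, the worst case of $\spa_r$ equals $\Rev(\spa_r,\bar F^n)$. This is the Myerson revenue for $\bar F^n$ when $\bar F$ is regular above $r$, and it is an upper bound on any mechanism's worst case because $\bar F^n$ lies in the ambiguity set. That gives conditional optimality.

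\textbf{Failure of Myerson with ironing.} Take $\bar F$ that requires ironing on some interval. Inside the ironed region Myerson's auction randomizes among tied bidders, so its revenue depends on $v_{(1)}$ beyond the event $\{v_{(1)}\ge r\}$. I would construct a non-i.i.d.\ product distribution with the same $G$, obtained by perturbing the $q_i$ at thresholds inside the ironed interval, that strictly lowers the revenue of this fixed mechanism below $\Rev(\spa_{r},\bar F^n)$. The worst-case revenue of $\spa_r$ is at least this value, so Myerson's worst case is strictly below that of $\spa_r$. A small explicit two-bidder, two-point example should suffice.
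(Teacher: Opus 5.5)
\paragraph{Robustness of $\spa$.} Your robustness argument follows the paper's proof of Theorem~\ref{thm:second}. For $k=2$ the term $\E[v_{(2)}\ind\{v_{(2)}\ge r\}]$ is fixed by $\G$, so everything reduces to showing that the i.i.d.\ point maximizes $\prod_i \F_i(r)$ subject to the constraint. Conditional optimality then follows from Fact~\ref{fact:spa}, as you say. The one weakness is that you leave the crux (Lemma~\ref{lem:second_first}) as an expectation whose direction you have not checked.

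It does go the right way, and the paper needs no pairwise smoothing or the limiting argument that iterating it would require. Write $x_i=\F_i(v)$ and $\bar x=(\prod_i x_i)^{1/n}$. AM--GM on the $n$ terms $\prod_{j\ne i}x_j$ gives
\[
\G(v)=\sum_{i}\prod_{j\ne i}x_j-(n-1)\prod_i x_i\;\ge\; n\bar x^{\,n-1}-(n-1)\bar x^{\,n}=:h(\bar x).
\]
Since $h$ is strictly increasing on $[0,1]$, $\bar x\le x^*$ where $h(x^*)=\G(v)$. That is, $\prod_i\F_i(v)\le \Fiid[2](\G)(v)^n$.

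\paragraph{Non-robustness of Myerson: a genuine gap.} First, a two-bidder example cannot work. With $n=2$ the second-highest value is the lowest value, and Theorem~\ref{thm:last} shows that $\myerson(\Fiid[2](\G)^2)$ is then robustly optimal, with or without ironing.

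This points to the real source of failure, which you misidentify. It is not extra dependence on $v_{(1)}$: the i.i.d.\ Myerson auction is monotone under $\preceq$ by Lemma~\ref{lem:myerson_monotone}. The source is dependence on the lower order statistics, namely how many bidders sit in the ironed interval, which sets the winner's critical payment. These statistics are unconstrained by $\G$ (the $i>k$ part of Lemma~\ref{lem:general-order-statistic}). So the adversary can replace a bidder by a dummy $\delta_0$ and rebalance the others to preserve $\G$, which requires $n\ge 3$.

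The paper does exactly this with three bidders and $\F_{\mathrm{disc}}$ on $\{1,2\}$. Set $w=1-3q^2+2q^3$, take $\F_1=\F_2$ with $\mathbb{P}[v=2]=\sqrt{w}$, and take $\F_3=\delta_0$. This distribution lies in $\Pi_2(\G_{\mathrm{disc}})$. Even the full-information optimum there is $\opt(\F_1\times\F_2)=1+\sqrt w$, which is less than $2-q^2=\opt(\F_{\mathrm{disc}}^3)$.

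Second, your target is stronger than what holds in general. You aim to push Myerson strictly below $\Rev(\spa(r),\Fiid[2](\G)^n)$ for an arbitrary ironed $\Fiid[2](\G)$. The paper notes that the two guarantees are not comparable in general. On the distribution above, Myerson still earns $1+\sqrt w$, which exceeds $\spa(1)$'s guarantee of $1+w$.

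Non-robustness should instead be measured against Myerson's own i.i.d.\ revenue $\opt(\Fiid[2](\G)^n)$. A strict comparison with $\spa$ needs a purpose-built instance such as the paper's $\G_{\mathrm{reg}}$. There $\F_{\mathrm{reg}}$ is regular, so $\spa$ is robustly optimal with guarantee $2-q^2$. Myerson, however, is implemented by pooling $[1,2)$, so its worst case coincides with the $\G_{\mathrm{disc}}$ one and falls strictly below $2-q^2$.
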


Then we turn to the most practically relevant setting in which the distribution of the second order statistic is known. This type of information can be naturally collected by a seller through ascending auctions and second price auctions, or observed by an outside third party from final payment disclosures. 
This setting is substantially more challenging, as we show that the associated ambiguity set does not admit an order-statistic-wise minimal distribution. Consequently, the approach used for the first and last order statistics, i.e., identifying a minimal distribution and optimizing against it, no longer applies.

Despite this difficulty, we show that the i.i.d.\ distribution consistent with the observed distribution of the second-highest value induces a minimal distribution for the highest value. Leveraging this property, we characterize the performance guarantees of a widely used mechanism, the second-price auction with a reserve, within this ambiguity set. In particular, we prove that if the implied i.i.d.\ distribution is regular above its optimal reserve, then the second-price auction with the corresponding reserve is robustly optimal among all distributions consistent with the given second-highest-value distribution. 

This result has a natural practical interpretation. When a seller observes final payments from repeated second-price or ascending auctions, he can estimate the distribution of the second-highest value and compute the optimal reserve under the i.i.d.\ assumption. Our analysis shows that introducing this reserve yields a uniform revenue improvement across all distributions consistent with the observed second-highest-value distribution. 
Our findings complement the classical result of \cite{BulowK1996}, which shows that when bidders are i.i.d.\ and fully regular, the revenue gain from introducing a reserve price in a second-price auction becomes negligible as the number of bidders grows large. In contrast, we show that when bidders are i.i.d.\ but regular only above the optimal reserve---for example, when a regular bidder appears with probability $p$---introducing a reserve can yield a constant-factor revenue improvement.

Moreover, we show that Myerson’s auction designed for the consistent i.i.d.\ distribution is not robust. In particular, its revenue on the consistent i.i.d.\ distribution only provides an upper bound on the worst-case revenue of the robust optimization problem. We further present concrete examples in which the second-price auction with a reserve is robustly optimal, whereas Myerson’s auction is not.

Finally, we extend our analysis to a general setting in which the seller observes the distribution of an arbitrary intermediate order statistic. This setting naturally arises in multi-unit auctions with unit-demand bidders. For example, the seller could observe the final payment after selling $k$ identical units to unit-demand bidders by a multi-unit auction, whose payment is the $(k+1)$-th order statistic. In practice, sellers can reliably observe the distribution of such payments from historical transactions and may seek to leverage this information to improve revenue by incorporating reserve prices. We prove that for any reserve, the worst-case revenue is attained under the i.i.d.\ distribution consistent with the observed distribution of the $(k+1)$-th order statistic. Therefore, the robustly optimal reserve can be characterized by solving the seller’s problem under the i.i.d.\ benchmark. More generally, we establish that any mechanism whose payment is separable and increasing in the first $k+1$ order statistics, the worst-case revenue over all distributions consistent with the observed $(k+1)$-th order statistic is achieved by the corresponding i.i.d.\ distribution. These results extend beyond standard multi-unit auctions and apply to position auctions, which are widely used in sponsored search and online advertising platforms.

Taken together, our results propose a unified framework for robust auction design under order-statistic information. By focusing on anonymous, aggregated statistics that naturally arise from auction outcomes, our approach aligns closely with the informational constraints imposed by privacy regulation and platform design. Moreover, because the seller’s information is symmetric across bidders, the resulting mechanisms are inherently anonymous, fair, and non-discriminatory, without the need to impose such properties as explicit constraints. This framework therefore provides both a tractable theoretical foundation and a practical design principle for auction environments in which detailed bidder-level data are unavailable or unusable.

\subsection{Literature Review}
\paragraph{Sequential posted-price mechanisms.}
Algorithms that rely only on the distribution of the first order statistic have been extensively studied in the context of sequential posted-price mechanisms, primarily through the prophet inequality framework~\citep{krengel1977semiamarts,krengel1978semiamarts}.
In this setting, prices are offered sequentially to arriving buyers, and performance is evaluated via competitive analysis against the optimal offline benchmark for social welfare, rather than revenue.\footnote{When the full distributional information is known, \cite{aaai/HajiaghayiKS07} and \citet{correa2019pricing} establish an equivalence between the objectives of maximizing social welfare and maximizing revenue.}

When buyers arrive in an adversarial order, it is known that optimal competitive ratios can be achieved by fixed pricing rules, such as posting the median of the maximum value or half of its expectation. Notably, these mechanisms depend only on the distribution of the first order statistic and are therefore inherently symmetric across buyers. In the prophet secretary setting with random arrival order, a family of blind strategies has been developed \citep{mp/CorreaSZ21}, which again relies solely on the distribution of the largest value.
Beyond these results, \cite{ec/EzraFT24} directly incorporates identity-blindness as an explicit design constraint, studying mechanisms whose pricing rules are blind to buyer identities.

\paragraph{Robust auction.}
Robust auction design has been studied primarily under i.i.d. buyers, where robustly optimal or approximately optimal mechanisms can be characterized under regularity or MHR conditions \citep{allouah2020prior,fu2015randomization,corr/HartlineJL25}, as well as under partial information about the common distribution, such as known means and upper bounds \citep{suzdaltsev2022distributionally,bachrach2022distributional} or known supports \citep{anunrojwong2022robustness,anunrojwong2023robust}.
Beyond i.i.d distributions, a related literature studies robust mechanisms given marginal information while allowing for arbitrary correlation across bidders \citep{bei2019correlation,he2022correlation,koccyiugit2020distributionally,koccyiugit2024regret}.
We complement this literature by studying an intermediate and practically relevant setting with heterogeneous but independent buyers. By incorporating order-statistic information into the ambiguity set, we characterize robustly optimal mechanisms in this setting.

\paragraph{Auctions with nondiscrimination and privacy considerations.}
Symmetric auction formats are appealing because they are anonymous and non-discriminatory. \cite{deb2017discrimination} characterize the set of outcomes implementable by symmetric auctions and show that, despite their anonymity, such mechanisms can endogenously generate highly discriminatory allocations—including the revenue-optimal auction—through bidders’ strategic behavior. \cite{eilat2023privacy} quantify privacy loss in auctions and demonstrate that the second-price auction minimizes privacy loss among all DSIC and ex-post IR mechanisms. \cite{chen2025optimal} study second-price auctions with flexible reserve prices and show that they are revenue-maximizing among all discrimination-free auctions. In contrast to approaches that impose privacy or non-discrimination as explicit design constraints, we propose a robust auction design framework based on anonymous, aggregated data, in which non-discrimination arises endogenously from the underlying information structure.

\section{Preliminaries}

Consider a single-item auction with $n$ buyers. Each buyer has a private valuation, and the valuation profile $\vals = (v_1,\dots,v_n)$ is drawn from a product distribution $\dists = \F_1 \times \cdots \times \F_n$.

For a valuation profile $\vals \in \mathbb{R}^n$, let $v_{(k)}$ denote the $k$-th order statistic, that is, the $k$-th largest value among $\{v_1,\dots,v_n\}$. We denote $\OS(\dists)$ as the marginal distribution of $v_{(k)}$ when $\vals \sim \dists$.

The seller does not observe the full joint distribution $\dists$. Instead, the seller observes only an aggregated statistic---namely, the distribution $\G$ of the $k$-th order statistic induced by $\dists$. Equivalently, the seller knows that the underlying distribution belongs to the ambiguity set 
\[
\Pi_k(\G) := \{ \dists : \OS(\dists) = \G \}~.
\]

The seller aims to design a mechanism $\mech$ that maximizes the worst-case expected revenue over all distributions in the ambiguity set $\Pi_k(\G)$:
\[
\R(\G) := \max_{\mech} \min_{\dists \in \Pi_k(\G)} \Rev(\mech, \dists)~,
\]
where $\Rev(\mech,\dists)$ denotes the expected revenue of the mechanism $\mech$ when buyers’ valuations are drawn from the distribution $\dists$.

We define the following two types of partial orders between two valuation profiles $\vals = (v_1,\dots,v_n)$ and $\vals' = (v_1',\dots,v_n')$:
\begin{itemize}
\item We write $\vals \le \vals'$ to denote \emph{coordinate-wise dominance}, that is, $v_i \le v_i'$ for every $i \in [n]$.
\item We write $\vals \preceq \vals'$ to denote \emph{order-statistic-wise dominance}, that is, $v_{(i)} \le v_{(i)}'$ for every $i \in [n]$.
\end{itemize}

We further write $\dists \preceq \dists'$ to denote order-statistic-wise dominance between two distributions, meaning that there exists a coupling between $\vals \sim \dists$ and $\vals' \sim \dists'$ such that $\vals \preceq \vals'$ almost surely.
Notice that when the distributions are one-dimensional, the notation $\preceq$ coincides with the standard notion of first-order stochastic dominance.

We denote $\Fiid(\G)$ as the i.i.d.\ distribution whose $k$-th order statistic has marginal distribution $\G$. 
Formally, for every $k \in [n]$ and any $\G$, we define $\Fiid(\G)$ to be the unique distribution satisfying
\[
\OS\left( \Fiid(\G)^n \right) = \G .
\]
The existence and uniqueness of such a distribution are guaranteed by the following claim.
\begin{claim}\label{clm:iid-inversion}
Fix $n \ge 1$ and $k \in [n]$. For every distribution $\G$, there exists a unique distribution $\F$ such that
\[
\OS(\F^n) = \G, \qquad\text{where }\F^n = \F \times \cdots \times \F~.
\]
\end{claim}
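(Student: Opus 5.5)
The plan is to reduce the claim to inverting a single strictly increasing, continuous bijection of $[0,1]$, applied pointwise to CDFs. Write $F(x)=\P[v\le x]$ for the CDF of $\F$. For i.i.d.\ values, the event $v_{(k)} \le x$ (the $k$-th largest is at most $x$) is the event that at most $k-1$ of the $n$ values exceed $x$. Hence
\[
\P[v_{(k)}\le x] \;=\; \sum_{j=0}^{k-1} \binom{n}{j} (1-F(x))^{j} F(x)^{n-j} \;=:\; \psi_{n,k}(F(x)),
\]
where $\psi_{n,k}:[0,1]\to[0,1]$ is the polynomial $\psi_{n,k}(p)=\P[\mathrm{Bin}(n,1-p)\le k-1]$. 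Thus $\OS(\F^n)=\G$ holds if and only if $G(x)=\psi_{n,k}(F(x))$ for every $x$, where $G$ is the CDF of $\G$.

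The key step is to show that $\psi_{n,k}$ is a strictly increasing continuous bijection of $[0,1]$ onto itself.
\begin{itemize}[label={}]
\item \emph{Continuity and endpoints.} Continuity is immediate since $\psi_{n,k}$ is a polynomial. At the endpoints, $\psi_{n,k}(0)=\P[\mathrm{Bin}(n,1)\le k-1]=0$ because $k\le n$, and $\psi_{n,k}(1)=1$ because $k\ge 1$.
\item \emph{Strict monotonicity.} This follows from the standard derivative identity for binomial tails,
\[
\frac{d}{dp}\,\psi_{n,k}(p) \;=\; n\binom{n-1}{k-1} p^{n-k}(1-p)^{k-1} \;>\;0 \qquad\text{on } (0,1).
\]
As a fallback, one can use a monotone coupling argument: increasing $p$ stochastically decreases the number of values above $x$.
\end{itemize}
Consequently $\psi_{n,k}^{-1}$ exists, and it is continuous and strictly increasing.

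Define $F := \psi_{n,k}^{-1}\circ G$. We then verify that $F$ is a valid CDF.
\begin{itemize}[label={}]
\item \emph{Monotonicity.} $F$ is nondecreasing as a composition of nondecreasing maps.
\item \emph{Right-continuity.} This holds because $\psi_{n,k}^{-1}$ is continuous and $G$ is right-continuous.
\item \emph{Limits.} $F(x)\to 0$ as $x\to-\infty$ and $F(x)\to 1$ as $x\to+\infty$, since $\psi_{n,k}^{-1}(0)=0$ and $\psi_{n,k}^{-1}(1)=1$. If values are required to be nonnegative, then $F$ inherits this support restriction from $G$, since $F(x)=0$ exactly where $G(x)=0$.
\end{itemize}
By the display above, $F$ satisfies $\OS(\F^n)=\G$, which proves existence.

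For uniqueness, suppose $\F'$ is another distribution with $\OS(\F'^n)=\G$. Then $\psi_{n,k}(F'(x))=G(x)=\psi_{n,k}(F(x))$ for every $x$. Since $\psi_{n,k}$ is injective, $F'=F$, and a distribution is determined by its CDF.

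The main point requiring care is the strict monotonicity of $\psi_{n,k}$. The derivative identity settles it, and can be checked directly by telescoping the differentiated binomial sum.
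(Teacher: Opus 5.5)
Your proposal is correct and takes the same route as the paper. Both arguments write $\mathbb{P}(v_{(k)}\le x)$ as the polynomial $H_{n,k}(F(x))$ (your $\psi_{n,k}$), observe that this polynomial is a continuous, strictly increasing bijection of $[0,1]$, set $F = H_{n,k}^{-1}\circ G$, and get uniqueness from injectivity. Beyond the paper, you give the explicit derivative $n\binom{n-1}{k-1}p^{n-k}(1-p)^{k-1}$ and check that $H_{n,k}^{-1}\circ G$ is a genuine CDF, both of which the paper leaves implicit.
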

\begin{proof}
Observe that $v_{(k)} \le v$ if and only if at most $k-1$ of the $n$ samples are greater than $v$. Hence, for every $v \in \mathbb{R}$,
\[
\Pr{v_{(k)} \le v} = \Pr{\# \{i : v_i > v\} \le k-1} = \sum_{t=0}^{k-1} \binom{n}{t}\, (1-\F(v))^t \cdot \F(v)^{n-t}.
\]
Define the polynomial map $H_{n,k}:[0,1]\to[0,1]$ by
\[
H_{n,k}(u) := \sum_{t=0}^{k-1} \binom{n}{t}\, (1-u)^t \cdot u^{n-t} = \Pr{\mathrm{Bin}(n,1-u) \le k-1}.
\]
Notice that $H_{n,k}$ is continuous and strictly increasing on $[0,1]$.
Moreover, $H_{n,k}(0)=0$ and $H_{n,k}(1)=1$, so $H_{n,k}$ is a bijection from $[0,1]$ to $[0,1]$ and admits a continuous and increasing inverse $H_{n,k}^{-1}$. Therefore, $\F(v) := H_{n,k}^{-1}(\G(v))$ is the unique distribution with the stated property.
\end{proof}

\subsection{Mechanisms}
To streamline the exposition in the subsequent analysis, we briefly review and formalize the auction mechanisms that play a central role in our results, including posted pricing, second-price auctions with a reserve, and Myerson’s auction.

\paragraph{Posted Pricing Mechanisms ($\pp$).}
A posted pricing mechanism posts a single price $p$ to all buyers. The item is sold to the first buyer whose valuation is at least $p$. Therefore, for any product distribution $\dists = \F_1 \times \cdots \times \F_n$, the expected revenue is given by
\[
\Rev(\pp(p), \dists) = p \cdot \Pr[\vals \sim \dists]{v_{(1)} \ge p} = p \cdot \Bigof{1 - \prod_{i=1}^n \F_i(p)}.
\]

\paragraph{Second-Price Auction with a Reserve ($\spa$).}
A second-price auction with a reserve price $p$ operates as follows. If no buyer’s valuation exceeds $p$, the item is not sold. If exactly one buyer has valuation at least $p$, that buyer wins the item and pays the reserve price $p$. If at least two buyers have valuations at least $p$, the buyer with the highest bid wins and pays the second-highest bid.
When $p=0$, the mechanism reduces to the standard second-price auction. Accordingly, for any distribution $\dists$, the expected revenue of the second-price auction with reserve $p$ is
\begin{equation}
    \label{eqn:revenue_spa}
\Rev(\spa(p), \dists) = p \cdot \Pr[\vals \sim \dists]{v_{(1)} \ge p > v_{(2)}} + \Ex[\vals \sim \dists]{v_{(2)} \cdot \ind[v_{(2)} \ge p]}
\end{equation}
\paragraph{Myerson's Auction~\citep{myerson1981optimal} ($\myerson$).}

To describe Myerson's auction, we recall the definition of regular distributions and revenue curves.
A distribution $\F$ with pdf $f$ is called regular, if its virtual value function $\varphi(v) := v - \frac{1-\F(v)}{f(v)}$ is non-decreasing, or equivalently, the revenue curve $r(q) := q \cdot \F^{\text{-}1}(1-q)$ is concave.

When all marginal distributions $\F_i$'s are regular, the Myerson's auction allocates the item to the buyer with highest virtual value (provided it is non-negative), and charges her the minimum winning bid, i.e., the smallest bid at which her virtual value weakly exceeds those of all other buyers (and 0).
Moreover, the cutoff point $p_i := \inf\{p : \varphi_i(p) \ge 0\}$ coincides with the optimal monopoly price $p_i^* := \argmax_p p \cdot (1-\F_i(p))$.

For irregular distributions, Myerson’s auction involves an additional ironing step.
Specifically, ironing replaces the revenue curve $r_i$ of distribution $\F_i$ by its concave envelope $\bar{r}_i$, which corresponds to flattening non-concave regions of $r_i$.
Equivalently, ironing pools together intervals of quantiles on which the virtual value function is non-monotone and assigns them a common ironed virtual value $\bar{\varphi}_i$.
The auction then allocates the item to the buyer with the highest ironed virtual value (provided it is non-negative), and charges the corresponding critical payment. 

We omit the implementation details of the ironing procedure. Intuitively, the ironed virtual value function coincides with the virtual value function induced by the ironed distribution.
In Section~\ref{sec:monotone_myerson}, we will only rely on the monotonicity of the ironed virtual value function $\bar{\varphi}$ to establish the monotonicity of Myerson’s auction. 
In Section~\ref{sec:example}, we present a concrete example that necessitates ironing, where the computation of the ironed virtual value function is explicitly illustrated.

\citet{myerson1981optimal} proved that this auction maximizes the expected revenue when the underlying distribution $\dists$ is known to the seller, namely,
\[
\Rev(\myerson(\dists), \dists) = \opt(\dists)~,
\]
where $\opt(\dists)$ denotes the optimal expected revenue achievable when buyers’ valuations are drawn from the distribution $\dists$.

In the single buyer setting, Myerson’s auction allocates the item if and only if the buyer’s (ironed) virtual value is non-negative. This rule is equivalent to a posted-price mechanism. Specifically, for every $\F \in \Delta(\mathbb{R})$, 
\[
\myerson(\F) = \pp(p^*), \qquad \text{where } \; p^* \in \argmax_{p} \; p \cdot \bigl(1-\F(p)\bigr)~. 
\]

In the multi-buyer setting, we focus on Myerson’s auction in the symmetric environment with $n$ i.i.d.\ buyers whose valuations are drawn from a common distribution $\F$. In this case, a single ironed virtual value function $\bar{\varphi}$ is applied symmetrically to all buyers. 
Moreover, when ironing does not affect values above the monopoly reserve price, Myerson’s auction admits a simple implementation as a second-price auction with a reserve. This observation motivates the following definition.

\begin{definition}
Let $\F$ be a distribution with virtual value function $\varphi$, and let $p^* := \argmax\limits_p p \cdot (1-\F(p))$ be the monopoly reserve price.
We say that $\F$ is \emph{regular above the reserve} if $\varphi$ is non-decreasing on $[p^*,\infty)$.
\end{definition}

Equivalently, $\F$ satisfies regularity above the reserve if and only if the ironed revenue curve coincides with the original revenue curve on all quantiles corresponding to values $v \ge p^*$, that is, no ironing interval intersects $[p^*,\infty)$.

\begin{fact}
\label{fact:spa}
Consider a single-item auction with $n$ i.i.d.\ bidders whose values are drawn from $\F$.
The second-price auction with a reserve price $p^*$ is optimal, i.e.,
\[
\Rev(\spa(p^*), \F^n) = \opt(\F^n)
\]
if and only if $\F$ is regular above the reserve.
\end{fact}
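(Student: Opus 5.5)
The plan is to reduce both directions of Fact~\ref{fact:spa} to Myerson's revenue characterization for i.i.d.\ bidders: expected revenue equals expected ironed virtual surplus whenever the allocation is constant on ironing intervals, and strictly less otherwise. I will work in quantile space. Let $\bar r$ be the concave envelope of $r$, let $\bar\varphi(q)=\bar r'(q)$ be the ironed virtual value, and let $q^*=1-\F(p^*)$ be the monopoly quantile, so that $\bar\varphi\ge 0$ exactly on $[0,q^*]$. With these in place, $\opt(\F^n)$ equals the expected value of $\max\bigl(\max_i \bar\varphi(q_i),0\bigr)$. This optimum is attained by allocating to the bidder of highest ironed virtual value with ties broken arbitrarily; any ordering of the bidders within an ironing interval is fine, since $\bar\varphi$ is constant there.

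The key observation concerns $\spa(p^*)$. It allocates to the highest-value bidder if that value is at least $p^*$, i.e.\ to the bidder with the smallest quantile $q_i\le q^*$. By the identity $\E[\text{revenue}]=\E[\sum_i \varphi(q_i)x_i]$, which holds for every monotone allocation, its revenue is $\E[\varphi(q_{\min})\ind[q_{\min}\le q^*]]$. Integrating by parts against the allocation probability $x(q)=(1-q)^{n-1}\ind[q\le q^*]$, I will rewrite this as
\[
\E[\bar\varphi(q_{\min})\ind[q_{\min}\le q^*]] \;-\; \int_0^{q^*} \bigl(\bar r(q)-r(q)\bigr)\, dx(q) \;+\; \text{boundary term at } q^*.
\]
The boundary term vanishes because $p^*$ maximizes $r$, so $r(q^*)=\bar r(q^*)$. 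The first term equals $\opt(\F^n)$, because on $[0,q^*]$ the ironed virtual value $\bar\varphi$ is non-increasing in $q$, so smallest quantile means highest ironed virtual value, and $\bar\varphi\ge0$ there.

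\emph{If direction.} If $\F$ is regular above the reserve, then $\bar r=r$ on $[0,q^*]$, the correction term is zero, and equality follows.

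\emph{Only-if direction.} Suppose regularity above the reserve fails. Then there is an ironing interval $(a,b)$ with $a<q^*$ on which $\bar r>r$; it lies inside $[0,q^*]$ because $r(q^*)=\bar r(q^*)$. On this interval $x$ is strictly decreasing whenever $n\ge 2$, so $-dx>0$ there. The correction integral is therefore strictly positive, giving $\Rev(\spa(p^*),\F^n)<\opt(\F^n)$.

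The main obstacle is making the integration by parts rigorous for general distributions, which may have atoms or lack a density (flat or discontinuous quantile functions). I would handle this with Myerson's general quantile-space formulation: revenue $=\E[\int r\,d(-x)]$, computed via the Stieltjes integral, rather than using pdfs. A second point is the degenerate case $n=1$: there $x$ is constant on $[0,q^*]$, so the ``only if'' direction fails and the fact should be read for $n\ge2$. In the one-bidder case, $\spa(p^*)=\pp(p^*)$ is always optimal.
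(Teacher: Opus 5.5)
The paper states this as a \emph{Fact} and gives no proof of its own; it is implicitly imported from Myerson's ironing theory. So there is no competing route to compare with. Your ironing-correction argument is the standard one and it is correct, once the following repairs are made.

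\textbf{Sign of the correction term.} The sign in your display is reversed. Since $x(q)=(1-q)^{n-1}\ind[q\le q^*]$ is non-increasing, $-\int(\bar r-r)\,dx\ge 0$, so your display would give $\Rev\ge\opt$. Integrating by parts per bidder gives $\int r\,d(-x)=\int \bar r\,d(-x)-\int(\bar r-r)\,d(-x)$, hence
\[
\Rev(\spa(p^*),\F^n)=\opt(\F^n)-n\int_{[0,q^*]}\bigl(\bar r(q)-r(q)\bigr)\,d(-x)(q).
\]
The atom of $d(-x)$ at $q^*$ contributes nothing because $\bar r(q^*)=r(q^*)$. Your verbal conclusion (``$-dx>0$, so the correction is strictly positive'') already uses the correct sign, so this is a slip, not a gap.

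Relatedly, your remark that the optimum is attained with ties inside ironing intervals ``broken arbitrarily'' holds for ironed virtual surplus but not for revenue. Indeed $\spa(p^*)$ is exactly such a rule and is suboptimal. You never rely on that remark, however.

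\textbf{The case $n=1$.} Your $n=1$ caveat is a genuine correction to the statement. With one bidder, $\spa(p^*)=\pp(p^*)$ is optimal for every $\F$. This includes distributions that are not regular above the reserve, e.g.\ $\F_{\mathrm{disc}}$ with $q>1/2$. So the ``only if'' direction needs $n\ge 2$. The paper only uses the ``if'' direction (in Theorem~\ref{thm:second}), so nothing downstream breaks.

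\textbf{Atoms.} Your Stieltjes formulation works under two conventions.
\begin{itemize}
\item Read $x(q)=(1-q)^{n-1}$ as the quantile-space allocation, with ties broken by quantile. This does not change $\spa$'s revenue.
\item Take $q^*$ to be the maximizer of $r$, i.e.\ $\P[v\ge p^*]$. This differs from $1-\F(p^*)$ when $\F$ has an atom at $p^*$, as $\F_{\mathrm{disc}}$ does at $p^*=1$.
\end{itemize}
With that reading the identity is exact. For $\F_{\mathrm{disc}}$ with three bidders, your correction term evaluates to $2q^2(1-q)$, where $q$ is the mass at $1$. This equals $\opt(\F_{\mathrm{disc}}^3)-\E[v_{(2)}]=(2-q^2)-(2-3q^2+2q^3)$. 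The loss is the information rent the value-$2$ type earns across the gap $(1,2)$ in the support.
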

 
\section{First Order Statistic: $k=1$}
We begin with the first order statistic to illustrate our framework, namely, the setting in which the distribution of the largest value $v_{(1)}$ is known.

We prove that a posted-price mechanism is optimal in this case.  Our proof proceeds by explicitly constructing a saddle point: we establish a revenue guarantee for the posted-price mechanism and exhibit a matching distribution under which no mechanism can achieve a higher revenue. The key intuition is that the first order statistic $v_{(1)}$ is naturally generated by posted-price mechanisms, as trade occurs if and only if the highest valuation exceeds the posted price. This close alignment between the data-generating process and the mechanism itself renders posted pricing inherently robust to uncertainty about the remaining aspects of the value distribution. Although the analysis appears standard in hindsight, it is not \emph{a priori} obvious that the optimal solution admits such an asymmetric construction.

\begin{theorem}
\label{thm:first}
Let $p^*$ be the optimal monopoly price with respect to $\G$, i.e., $p^* = \argmax_{p} p \cdot (1-\G(p))$. Then, the posted pricing mechanism with price $p^*$ is robustly optimal:
\[
\min_{\dists \in \Pi_1(\G)} \Rev\left(\pp(p^*), \dists\right) = p^* \cdot (1-\G(p^*)) = \R[1](\G)~.
\]
\end{theorem}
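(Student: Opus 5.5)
The plan is to exhibit a saddle point. There are two inequalities to prove: a lower bound saying the posted price $p^*$ earns at least $p^*(1-\G(p^*))$ on every $\dists\in\Pi_1(\G)$, and an upper bound saying no mechanism can guarantee more than this.

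For the lower bound, I will use the revenue formula for $\pp$ given in the preliminaries. For any product distribution $\dists$,
\[
\Rev(\pp(p^*),\dists) = p^* \cdot \Pr[\vals\sim\dists]{v_{(1)}\ge p^*}.
\]
Since $\OS[1](\dists)=\G$, this equals $p^*(1-\G(p^*))$ exactly for every $\dists\in\Pi_1(\G)$. Strictly this is $p^*(1-\G(p^{*-}))$, so I will fix the tie-breaking convention and treat $\G$ as continuous at $p^*$, or else interpret $\G(p)$ as $\Pr{v_{(1)}<p}$. The revenue of $\pp(p^*)$ is therefore constant over the ambiguity set, and
\[
\R[1](\G)\ \ge\ \min_{\dists\in\Pi_1(\G)}\Rev(\pp(p^*),\dists) = p^*(1-\G(p^*)).
\]

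For the upper bound, I will construct the degenerate distribution $\dists^\circ = \G\times\delta_0\times\cdots\times\delta_0$. Here buyer $1$ has value drawn from $\G$ and the other $n-1$ buyers have value identically $0$. Its maximum is distributed as $\G$, provided values are nonnegative, which I assume (values lie in $[0,\infty)$). Hence $\dists^\circ\in\Pi_1(\G)$.

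For any mechanism $\mech$, I then have
\[
\min_{\dists\in\Pi_1(\G)}\Rev(\mech,\dists)\le \Rev(\mech,\dists^\circ)\le\opt(\dists^\circ).
\]
To bound $\opt(\dists^\circ)$, I will argue that zero-value buyers contribute nothing. Individual rationality forces them to pay at most $0$, so any payments to them only lower revenue. Consequently $\opt(\dists^\circ)$ equals the optimal revenue from a single buyer with value distribution $\G$.

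By Myerson's result, quoted in the preliminaries as $\myerson(\F)=\pp(p^*)$ in the single-buyer case, this single-buyer optimum is $\max_p p(1-\G(p)) = p^*(1-\G(p^*))$. Taking the maximum over $\mech$ gives $\R[1](\G)\le p^*(1-\G(p^*))$. Combining this with the lower bound yields both equalities in Theorem~\ref{thm:first}.

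The main obstacle is the reduction from $n$ buyers, $n-1$ of whom have zero value, to a single-buyer problem. A general $n$-buyer mechanism could in principle use the dummy buyers' reports, and could make negative payments to them. I would handle this directly. Since the dummies' values are deterministically $0$, fixing their reports at $0$ turns $\mech$ into a one-buyer mechanism for buyer $1$ that is truthful and individually rational. Its revenue from buyer $1$ upper-bounds the total revenue, because IR payments from the dummies are nonpositive. The one-buyer optimum is then the monopoly price bound. A secondary technical point is the existence of the maximizer $p^*$ and the treatment of atoms of $\G$ at $p^*$; I will handle both with the convention above.
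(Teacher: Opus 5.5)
Your proposal is correct and follows the paper's own proof: the same saddle point, namely constant revenue $p^*(1-\G(p^*))$ of $\pp(p^*)$ across $\Pi_1(\G)$ for the lower bound, and the degenerate instance $\G\times\delta_0\times\cdots\times\delta_0$ with the single-buyer monopoly bound for the upper bound. Your justification of the reduction to one buyer is something the paper only asserts (fix the dummies' reports at $0$; IR forces their payments to be nonpositive). The same goes for your remarks on atoms of $\G$ at $p^*$ and on nonnegative values.
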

\begin{proof}
We prove the theorem by explicitly constructing a saddle point of the optimization problem. 
For any distribution $\dists$ whose first order statistic satisfies $\OS[1](\dists)=\G$, we have
\[
\Rev\left(\pp(p^*), \dists\right)
= p^* \cdot \Pr[\vals \sim \dists]{v_{(1)} \ge p^*}
= p^* \cdot \left(1-\G(p^*)\right),
\]
where the second equality follows from the fact that $v_{(1)}=\max \vals$ is distributed according to $\G$. 
Hence, the posted-price mechanism $\pp(p^*)$ achieves the same revenue for all distributions that are consistent with the given first order statistic~$\G$.

To prove optimality, consider the degenerate instance in which there is a single buyer with value distribution~$\G$, together with $(n-1)$ dummy bidders whose values are identically zero, i.e.,
\[
\dists^* = \G \times \delta_{0} \times \cdots \times \delta_{0}~.
\]
Against this distribution, no mechanism can achieve revenue exceeding
\[
\max_\mech \Rev(\mech, \dists^*) = \max_{p} \; p \cdot \left(1-\G(p)\right) = p^* \cdot (1-\G(p^*)).
\]
Therefore, $p^*$ together with $\F^*$ forms a saddle point, and the claimed optimality follows.
\end{proof}
Theorem~\ref{thm:first} effectively states that when the seller only knows the distribution of the highest value in the market, the robust solution collapses to treating the entire market as a single ``representative agent'' whose value is drawn from $\G$.
We next discuss several implications of \Cref{thm:first}, which highlight both the robustness of posted-price mechanisms and the scope of the theorem beyond the baseline setting.
\paragraph{Unknown Number of Bidders.}  The posted-price mechanism $\pp(p^*)$ is still robustly optimal when the number of bidders is unknown. This follows by allowing the adversary to add arbitrarily many dummy bidders whose values are deterministically zero.
\paragraph{Robustness to Correlation.} Although we assume independent private values throughout the paper, posted-price mechanisms are in fact robust to \emph{arbitrarily correlated} value distributions, as long as the distribution of the first order statistic is known.
\paragraph{Weak Revenue Improvements.} While posted pricing achieves the optimal worst-case revenue, a second-price auction (when feasible) with reserve price equal to $p^*$ can weakly improve the revenue guarantee. That is, it preserves the same worst-case guarantee, while potentially extracting strictly higher revenue for distributions that are not worst-case.
 
\section{Last Order Statistic: $k=n$}

Next, we study the opposite extreme in which the distribution of the smallest valuation is known. Information about the minimum valuation may be inferred from market entry fees or reserve prices.
In contrast to the case where the distribution of the largest valuation is known, we show that Myerson’s auction is robustly optimal
in this setting. Specifically, we construct a distribution $\Fiid[n](\G)$ whose $n$-fold
i.i.d.\ minimum is consistent with $\G$, and show that running the
Myerson auction designed for $\Fiid[n](\G)^n$ achieves the optimal robust revenue guarantee.
\begin{theorem}\label{thm:last}
Let $\Fiid[n](\G)$ be the unique distribution satisfying
$\OS[n]\left(\Fiid[n](\G)^n\right) = \G$.
Then the Myerson auction designed for $\Fiid[n](\G)^n$ is robustly optimal:
\[
\min_{\dists \in \Pi_n(\G)}
\Rev\left(\myerson(\Fiid[n](\G)^n), \dists\right)
= \Rev\left(\myerson(\Fiid[n](\G)^n), \Fiid[n](\G)^n\right)
= \R[n](\G).
\]
\end{theorem}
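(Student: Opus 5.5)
Write $\F := \Fiid[n](\G)$. The plan is a saddle-point argument in the spirit of \Cref{thm:first}, resting on two ingredients: stochastic minimality of $\F^n$ in $\Pi_n(\G)$ with respect to $\preceq$, and monotonicity of $\myerson(\F^n)$ with respect to $\preceq$. The inequality $\R[n](\G) \le \opt(\F^n) = \Rev(\myerson(\F^n),\F^n)$ is immediate, since $\F^n \in \Pi_n(\G)$ by \Cref{clm:iid-inversion} and no mechanism can beat Myerson on $\F^n$. It therefore suffices to show $\Rev(\myerson(\F^n),\dists) \ge \Rev(\myerson(\F^n),\F^n)$ for every $\dists \in \Pi_n(\G)$.

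\textbf{Step 1 (minimality).} Take $\dists = \F_1\times\cdots\times\F_n$ with $\prod_i (1-\F_i(v)) = 1-\G(v) = (1-\F(v))^n$ for all $v$. The goal is a coupling with $\vals' \sim \F^n$, $\vals \sim \dists$, and $\vals' \preceq \vals$ almost surely. It is enough to show stochastic dominance of every order statistic, $\Pr{v'_{(j)} > v} \le \Pr{v_{(j)} > v}$ for all $j$ and $v$. A coupling realizing all order statistics simultaneously then follows by sorting: couple the sorted vectors via a quantile construction, using a Strassen-type argument for the partial order $\preceq$ on sorted vectors, and randomly permute coordinates to handle the non-sorted profile.

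Fix $v$ and set $q_i := 1-\F_i(v)$ and $q := 1-\F(v)$, so that $\prod q_i = q^n$. The count $\#\{i: v_i>v\}$ is a Poisson-binomial variable with success probabilities $q_i$ whose geometric mean is $q$. The claim is that it stochastically dominates $\mathrm{Bin}(n,q)$, i.e.\ $\Pr{\#\ge j}$ is at least the binomial value for each $j$. I will prove this by a smoothing argument. Replace a pair $(q_a,q_b)$ by $(\sqrt{q_aq_b},\sqrt{q_aq_b})$, which preserves the product. Conditioned on the other coordinates, the tail probabilities of the count depend on the pair only through $q_a+q_b$ and $q_aq_b$. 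By AM--GM the sum weakly decreases while the product is unchanged, and the tail probabilities are monotone in the sum with the product held fixed. So each tail probability weakly decreases, and iterating (or taking a limit) reaches the all-equal vector.

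This step also covers single-coordinate dominance, though that is not needed beyond order statistics. Strictly, dominance of each marginal of the sorted vector does not by itself give a joint coupling. I would instead construct it directly: independently sample $v_i = \F_i^{-1}(U_i)$ and $v'_i = \F^{-1}(U'_i)$, and couple the counting processes $N(v)=\#\{i:v_i>v\}$ and $N'(v)$ so that $N'(v)\le N(v)$ for all $v$. The fallback is to use the fact that $\preceq$-dominance of product measures is equivalent to dominance of the counting process in the multivariate sense. This is where I expect the main technical obstacle.

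\textbf{Step 2 (monotonicity of Myerson).} For i.i.d.\ $\F$, $\myerson(\F^n)$ is symmetric. It allocates to the highest bidder if $\bar\varphi(v_{(1)}) \ge 0$, with ties broken uniformly among bidders sharing the top ironed value, and charges the critical payment. Its revenue equals $\E[\max(\bar\varphi(v_{(1)}),0)]$ only under $\F^n$, so for other $\dists$ I will work with realized payments. The realized revenue is a function of the sorted profile alone. I will show this function is nondecreasing in $\preceq$, using monotonicity of $\bar\varphi$. The winner's critical bid is the smallest $b$ with $\bar\varphi(b)\ge \max(\bar\varphi(v_{(2)}),0)$, adjusted for ties within ironing intervals, where the expected payment is an average that is still monotone. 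This critical bid is nondecreasing in $v_{(2)}$ and is paid whenever $\bar\varphi(v_{(1)})\ge 0$, and the event $\bar\varphi(v_{(1)})\ge 0$ is upward closed. The one subtle point is ties inside ironed intervals, where the expected payment must be checked to be monotone in the pooled values. Finally, combining with the coupling from Step 1 gives $\Rev(\myerson(\F^n),\dists)\ge\Rev(\myerson(\F^n),\F^n)$, which closes the saddle point.
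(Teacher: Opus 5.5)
\paragraph{The gap is in Step 1: the joint coupling you need does not exist.} Your plan is the paper's plan. You take the upper bound from $\Fiid[n](\G)^n\in\Pi_n(\G)$, then combine $\preceq$-minimality of $\Fiid[n](\G)^n$ with $\preceq$-monotonicity of the i.i.d.\ Myerson auction. Your smoothing argument correctly proves dominance of each order statistic separately (the case $k=n$ of Lemma~\ref{lem:general-order-statistic}). You rightly note that this gives no joint coupling. The problem is that the joint statement is false, so no sorting, quantile or counting-process construction can complete Step~1.

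Take $n=2$, $\F_1$ uniform on $\{1,2\}$ and $\F_2$ uniform on $\{0,2\}$. Then $\G$ puts mass $\tfrac12,\tfrac14,\tfrac14$ on $0,1,2$, and $\Fiid[2](\G)$ puts mass $1-\tfrac{1}{\sqrt2}$, $\tfrac{1}{\sqrt2}-\tfrac12$, $\tfrac12$ on $0,1,2$. The event $U=\{v_{(1)}\ge 2\}\cup\{v_{(2)}\ge 1\}$ is upward closed under $\preceq$. Under $\F_1\times\F_2$ it fails only at the profile $(1,0)$, so $\Pr{U}=\tfrac34$. Under $\Fiid[2](\G)^2$ we get $\Pr{U}=\tfrac32-\tfrac{1}{\sqrt2}\approx 0.793$. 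So no coupling with $\vals'\preceq\vals$ exists, and Lemma~\ref{lem:minimal_n} is false as stated.

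The paper's justification, Claim~\ref{cl:average}, breaks on this very example. It checks only the orthants $\{\max\ge h,\min\ge l\}$, which is not enough for Strassen's theorem. Here the two one-statistic events in $U$ have equal probabilities under both distributions. The orthant $\{v_{(1)}\ge2,\,v_{(2)}\ge1\}$ is more likely under $\F_1\times\F_2$ ($\tfrac12$ versus $\tfrac{1}{\sqrt2}-\tfrac14$). By inclusion--exclusion it enters $\Pr{U}$ with a minus sign, which flips the comparison.

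\paragraph{What can be salvaged.} You only need to compare the level sets of the specific payment. Let $R(\vals)$ be the tie-averaged revenue of $\myerson(\Fiid[n](\G)^n)$. It suffices that $\Pr[\vals\sim\dists]{R(\vals)\ge t}\ge\Pr[\vals\sim\Fiid[n](\G)^n]{R(\vals)\ge t}$ for every $t$.

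\emph{Regular above the reserve.} If $\Fiid[n](\G)$ is regular above its reserve, you may take the auction to be $\spa(p^*)$. Each of its level sets involves a single order statistic, so your marginal dominance finishes the proof. This is Theorem~\ref{thm:second} with $k=n$.

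\emph{With ironing, $n\ge 3$.} Here your Step~2 description of $R$ as the critical bid in $v_{(2)}$ is too coarse. Under uniform tie-breaking the expected payment depends on how many bidders tie at the second-highest ironed level. Let $[\ell,u)$ be an ironing interval with positive ironed virtual value, and let $v_{(1)}\ge u$. The winner pays $\tfrac12(\ell+u)$ when only $v_{(2)}$ lies in $[\ell,u)$, but $\tfrac13\ell+\tfrac23u$ when $v_{(2)}$ and $v_{(3)}$ both do. For $n=3$ this gives the level set $\{v_{(2)}\ge u\}\cup\{v_{(1)}\ge u,\,v_{(3)}\ge\ell\}$. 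That is a non-orthant upper set, exactly the kind on which the example shows the i.i.d.\ distribution need not be least likely. Monotonicity of $R$ itself is fine: the paper fixes a priority order, proves coordinatewise monotonicity, and averages.

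\emph{With ironing, $n=2$.} Every level set is a one-statistic event or an orthant $\{v_{(1)}\ge u,\,v_{(2)}\ge \ell\}$. The AM--GM computation inside the proof of Claim~\ref{cl:average} is valid for orthants, so it closes the argument.

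For $n\ge3$ with ironing, neither your proposal nor the paper's proof supplies a valid comparison, and a genuinely new idea is needed.
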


At a high level, our analysis consists of two main components. 
First, we establish a monotonicity property of the i.i.d.\ Myerson auction: its revenue is nondecreasing as the bidding profile increases.
We remark that this property relies on the common virtual value function and the uniform tie-breaking rule under i.i.d.\ value distributions, and does \emph{not} generally hold for Myerson auctions under non-i.i.d.\ value distributions.

Second, we show that among all product distributions whose smallest order statistic is consistent with a fixed distribution $\G$, the i.i.d.\ distribution is the \emph{minimal} one in the sense of stochastic dominance. Together, these two properties imply the robust optimality of running the Myerson auction with respect to the constructed i.i.d.\ distribution.

\subsection{Monotonicity of Myerson's auction}
\label{sec:monotone_myerson}

In the following two lemmas, we establish the monotonicity of the Myerson auction derived under i.i.d.\ distribution with respect to coordinate-wise dominance and
order-statistic-wise dominance, respectively.

\begin{lemma}
Fix $n$ and a distribution $\F$, and let $\myerson(\F^n)$ denote Myerson's auction designed for $n$ i.i.d.\ bidders drawn from $\F$.
Suppose $\myerson(\F^n)$ uses a fixed alphabetical tie-breaking rule: the
item is allocated to the buyer with the highest non-negative ironed virtual value, breaking ties in favor of the smallest index.
Let $p(\vals)$ denote the total payment (i.e., the seller’s revenue) of this fixed mechanism at valuation profile $\vals = (v_1,\dots,v_n)$.
Then for any two valuation profiles $\vals \le \vals'$, we have $p(\vals) \le p(\vals')$.
\label{lem:myerson-monotone-coordinate}
\end{lemma}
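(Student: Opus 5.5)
Write $\bar\varphi$ for the common ironed virtual value function of $\F$. It is non-decreasing, and this is the only property of ironing the argument will use. The payment $p(\vals)$ is zero if no buyer has $\bar\varphi(v_i)\ge 0$. Otherwise the winner $w$ is the smallest-index buyer attaining the maximum ironed virtual value $M(\vals):=\max_i \bar\varphi(v_i)$, and she pays her critical bid
\[
\theta_w(\vals_{-w}) = \inf\{b : \text{$w$ still wins at } (b,\vals_{-w})\}.
\]
Because $\bar\varphi$ is monotone and ties favour small indices, the winning condition for bidder $i$ depends on the others only through $m_{-i}:=\max_{j\neq i}\bar\varphi(v_j)$ and the indices that attain it. Concretely, $i$ wins iff $\bar\varphi(b)\ge 0$ and either $\bar\varphi(b)>m_{-i}$, or $\bar\varphi(b)=m_{-i}$ and every $j\neq i$ attaining $m_{-i}$ has $j>i$. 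So $\theta_i$ is an explicit threshold: $\inf\{b:\bar\varphi(b)\ge \max(m_{-i},0)\}$ or $\inf\{b:\bar\varphi(b)>\max(m_{-i},0)\}$, depending on the tie condition.

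It suffices to handle $\vals'$ differing from $\vals$ in a single coordinate $i$ with $v_i\le v_i'$, since chaining gives the general statement. I split into cases by whether $i$ is the winner.

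\textbf{Case (a): $i$ wins at $\vals$.} Raising $v_i$ keeps $\bar\varphi(v_i')\ge\bar\varphi(v_i)$, so $i$ still wins (ties still favour it against the same competitors). Its critical bid depends only on $\vals_{-i}$, which is unchanged, so the payment is identical.

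\textbf{Case (b): $i$ does not win at $\vals$.} If nobody wins at $\vals$, then $p(\vals)=0\le p(\vals')$. Otherwise let $w\neq i$ be the winner at $\vals$. If $w$ still wins at $\vals'$, then $w$'s competitors' profile has only increased: $m_{-w}$ weakly increases, and the tie condition can only get harder to satisfy in the sense that the threshold moves from ``$\ge$'' toward ``$>$'' at the same or a larger level. Hence $\theta_w$ weakly increases. If instead $i$ becomes the winner at $\vals'$, I need $\theta_i(\vals_{-i})\ge \theta_w(\vals_{-w})$.

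The subcase where $i$ overtakes $w$ is the main obstacle. Here $m_{-i}\ge \bar\varphi(v_w)=M(\vals)$, with $w$ attaining it, and $\theta_w$ is at most the smallest bid $b$ with $\bar\varphi(b)\ge \max(m_{-w},0)$, where $m_{-w}\le M(\vals)$. Meanwhile $\theta_i$ is at least $\inf\{b:\bar\varphi(b)\ge \max(M(\vals),0)\}$. Since $w$ wins with $\bar\varphi(v_w)=M(\vals)$, the quantity $\theta_w\le\inf\{b:\bar\varphi(b)\ge M(\vals)\}$ whenever $m_{-w}\le M(\vals)$, which holds. So $\theta_i\ge\theta_w$, and the only care needed is the strict-versus-weak threshold bookkeeping on flat (ironed) intervals of $\bar\varphi$. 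I will handle that bookkeeping by comparing the two infima explicitly and noting that $\{b:\bar\varphi(b)>c\}\subseteq\{b:\bar\varphi(b)\ge c\}$. This settles all cases and yields $p(\vals)\le p(\vals')$.
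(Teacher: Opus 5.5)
Your proposal is correct and follows essentially the same route as the paper. Both reduce to raising a single coordinate and split into the same cases: the raised bidder already wins (payment unchanged), the old winner $w$ survives (critical bid weakly rises), or the raised bidder $i$ overtakes $w$, where one shows $\theta_i \ge \inf\{b:\bar\varphi(b)\ge\bar\varphi(v_w)\}\ge\theta_w$.

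One small repair is needed. Your intermediate claim $\theta_w\le\inf\{b:\bar\varphi(b)\ge\max(m_{-w},0)\}$ is false when a lower-index bidder attains $m_{-w}$: the threshold is then strict and can sit at the right end of a flat ironed interval. However, the bound you actually need, $\theta_w\le\inf\{b:\bar\varphi(b)\ge\bar\varphi(v_w)\}$, holds directly, because $w$ wins at every bid whose ironed virtual value is at least $\bar\varphi(v_w)$.
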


\begin{proof}
It suffices to show that for any bidder $i$, increasing her valuation from $v_i$ to $v_i'$ weakly increases the total payment. Applying this coordinate-wise establishes the result for all $\vals \le \vals'$.

Let $\bar\varphi$ be the ironed virtual value function of $\F$. Fix bidder $i$, and compare $\vals$ with $(v_i',\vals_{-i})$. If $\max_l \bar\varphi(v_l) < 0$, then no allocation occurs and $p(\vals)=0$. Since payments are non-negative, we have $p(v_i',\vals_{-i}) \ge 0 = p(\vals)$.

Thus, assume $\max_l \bar\varphi(v_l) \ge 0$. Let $j$ be the winning bidder under $\vals$. Her payment is the critical threshold
\[
p_j(\vals)
= \inf\left\{
v : \bar\varphi(v)\ge 0,\ 
\bar\varphi(v) \ge \max_{l>j} \bar\varphi(v_l),\ 
\bar\varphi(v) > \max_{l<j} \bar\varphi(v_l)
\right\}.
\]
We distinguish two cases.

\medskip
\noindent\textbf{Case 1: $i=j$.}
Since $\bar\varphi$ is non-decreasing, bidder $i$ remains the highest virtual-value bidder after increasing her valuation to $v_i'$. Her critical payment does not change, so $p(v_i',\vals_{-i}) = p(\vals)$.

\medskip
\noindent\textbf{Case 2: $i \neq j$.}
We consider whether bidder $j$ remains the winner.

\smallskip
\emph{(a) Bidder $j$ remains the winner.}
Since bidder $i$'s virtual value has increased, bidder $j$ must maintain a (weakly) higher virtual value to continue winning. Thus her critical payment weakly increases:
\[
p_j(v_i',\vals_{-i}) \ge p_j(\vals).
\]

\smallskip
\emph{(b) Bidder $i$ becomes the winner.}
Let her payment be the critical value $\tau$. To win, her payment $\tau$ must satisfy
\[
\bar\varphi(\tau) \ge \bar\varphi(v_j) \ge \bar\varphi\bigl(p_j(\vals)\bigr).
\]
Since $\bar\phi$ is non-decreasing, this implies
\[
\tau \ge \inf\{v : \bar\varphi(v) = \bar\varphi(p_j(\vals))\}
= p_j(\vals).
\]

\medskip
In all cases, $p(v_i',\vals_{-i}) \ge p(\vals)$. This proves the monotonicity of the payment.
\end{proof}

Next, by using a uniform tie-breaking rule, we show that Myerson's auction is monotone with respect to order-statistic-wise dominance.

\begin{lemma}\label{lem:myerson_monotone}
Fix $n$ and a distribution $\F$, and let $\myerson(\F^n)$ denote
Myerson’s auction designed for $n$ i.i.d.\ bidders drawn from $\F$.
Suppose that ties are broken uniformly at random.
Let $p(\vals)$ denote the total payment (i.e., the seller’s revenue) of
this fixed mechanism at valuation profile $\vals = (v_1,\dots,v_n)$.
Then for any two valuation profiles $\vals \preceq \vals'$, we have
$p(\vals) \le p(\vals')$.
\end{lemma}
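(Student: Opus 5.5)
The plan is to reduce to Lemma~\ref{lem:myerson-monotone-coordinate} by combining two facts: the payment under uniform tie-breaking is a symmetric function of $\vals$, and order-statistic-wise dominance becomes coordinate-wise dominance once both profiles are sorted.

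\emph{Step 1: the payment is symmetric.} Under uniform random tie-breaking, the expected revenue $p(\vals)$ is invariant under permutations of the coordinates of $\vals$. The reason is that the mechanism treats all bidders through the same ironed virtual value function $\bar\varphi$, since the bidders are i.i.d. Fix $\vals$ and let $W$ be the set of bidders with the maximal, non-negative ironed virtual value. Each $j\in W$ wins with probability $1/|W|$ and pays her critical threshold. I will write this threshold explicitly. For $|W|\ge 2$, the threshold is $\inf\{v:\bar\varphi(v)\ge \max(0,\max_{l\ne j}\bar\varphi(v_l))\}$, with ties counting as winning with positive probability. The resulting random-allocation payment should be computed as the expectation over the uniform tie-breaking randomness. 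This is equivalent to averaging the alphabetical-tie-breaking mechanism of Lemma~\ref{lem:myerson-monotone-coordinate} over a uniformly random relabeling $\sigma$ of the bidders. That is,
\[
p(\vals)=\frac{1}{n!}\sum_{\sigma} p_{\mathrm{alph}}(\vals_\sigma),
\]
where $\vals_\sigma$ permutes the coordinates. Uniform tie-breaking is exactly ``draw a random priority order, then break ties alphabetically in that order.'' The right-hand side is manifestly symmetric in $\vals$.

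\emph{Step 2: reduce $\preceq$ to $\le$.} Given $\vals\preceq\vals'$, let $\mathbf{s}$ and $\mathbf{s}'$ be their sorted versions, in decreasing order. By definition, $s_i=v_{(i)}\le v'_{(i)}=s'_i$ for all $i$, so $\mathbf{s}\le\mathbf{s}'$ coordinate-wise. Symmetry gives $p(\vals)=p(\mathbf{s})$ and $p(\vals')=p(\mathbf{s}')$.

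\emph{Step 3: apply coordinate-wise monotonicity.} For each fixed $\sigma$, we have $\mathbf{s}_\sigma\le\mathbf{s}'_\sigma$ coordinate-wise. Lemma~\ref{lem:myerson-monotone-coordinate} then gives $p_{\mathrm{alph}}(\mathbf{s}_\sigma)\le p_{\mathrm{alph}}(\mathbf{s}'_\sigma)$, because alphabetical tie-breaking under a relabeled index order is still a fixed-priority rule, and that lemma's proof works for any fixed priority. Averaging over $\sigma$ yields $p(\mathbf{s})\le p(\mathbf{s}')$, which is the claim.

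The main obstacle is Step~1. I must verify that the uniformly-tie-broken Myerson auction's expected revenue really equals the average over priority orders of the fixed-priority auctions, including the payments. Critical payments for a fixed priority depend on the order: a bidder with lower priority needs a strictly higher $\bar\varphi$. I would define the uniform-tie-breaking mechanism as the randomization over priority orders. This is the natural truthful implementation, since each realized mechanism is DSIC. Then the identity holds by definition, and I only need to note that the revenue is the expectation over $\sigma$. The remaining technical point is to ensure that Lemma~\ref{lem:myerson-monotone-coordinate} is indeed stated for an arbitrary fixed priority. It is, up to renaming indices, since its proof never uses the specific ordering beyond it being fixed.
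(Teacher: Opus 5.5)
Your proposal is correct and follows essentially the same route as the paper. Both arguments use the relabeling-invariance of the uniformly tie-broken auction to sort $\vals$ and $\vals'$, so that $\vals \preceq \vals'$ becomes coordinate-wise dominance, then write uniform tie-breaking as a mixture over fixed priority orders, apply Lemma~\ref{lem:myerson-monotone-coordinate} to each order, and average. Your extra check that the random-priority implementation gives the right expected payments for uniform tie-breaking spells out a point the paper only asserts.
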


\begin{proof}
Because the mechanism is symmetric across bidders and uses a symmetric 
tie-breaking rule, the total payment is invariant under relabeling of agents. 
Thus, without loss of generality, we may assume that both $\vals$ and $\vals'$ 
are already sorted in non-increasing order and satisfy $v_j \le v'_j$ for all $j$.

The uniform tie-breaking rule can be implemented as a probability distribution 
over deterministic tie-breaking rules (each of which fixes a particular priority ordering). For each deterministic tie-breaking rule $\pi$, Lemma~\ref{lem:myerson-monotone-coordinate} shows that the payment is monotone with respect to coordinate-wise increases of the valuation profile, so we have
\[
p_\pi(\vals) \le p_\pi(\vals')~.
\]
Taking expectations over all deterministic tie-breaking rules that compose the 
uniform tie-breaking mechanism yields
\[
p(\vals) = \mathbb{E}_\pi[p_\pi(\vals)] 
\le \mathbb{E}_\pi[p_\pi(\vals')] = p(\vals')~.
\]
This completes the proof.
\end{proof}

\subsection{Minimal Distribution of $\Pi_n(\G)$}
Next, we show that $\Fiid[n](\G)^n$ is the minimal element, under order-statistic-wise dominance, among all product distributions whose
$n$-th order statistic has distribution $\G$.

\begin{lemma}
\label{lem:minimal_n}
For every $\dists \in \Pi_n(\G)$, we have $\dists \succeq \Fiid[n](\G)^n$.
\end{lemma}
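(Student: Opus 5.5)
The plan is to build the coupling between $\dists$ and $\Fiid[n](\G)^n$ through survival functions, reducing everything to an AM--GM comparison.

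\textbf{Setup.} Write $S_i(v) := 1-\F_i(v)$ for $\dists = \F_1\times\cdots\times\F_n \in \Pi_n(\G)$, and $\bar S(v) := 1-\Fiid[n](\G)(v)$. The constraint $\OS[n](\dists)=\G$ says that
\[
1-\G(v) = \prod_{i=1}^n S_i(v) \quad\text{for every } v,
\]
and by definition $\bar S(v)^n = 1-\G(v)$. So $\bar S$ is the pointwise geometric mean of the $S_i$: the minimal bidder has exactly the same law in both products, and the extra spread of the heterogeneous marginals should only push the higher order statistics up.

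\textbf{Step 1 (two-coordinate averaging).} Fix two coordinates $i,j$ and replace $\F_i,\F_j$ by two copies of $\mathbb{H}$, where $1-\mathbb{H}(v) := \sqrt{S_i(v)S_j(v)}$. This is a valid CDF because it is monotone and right-continuous. I claim $(\mathbb{H},\mathbb{H}) \preceq (\F_i,\F_j)$ as two-dimensional distributions.
\begin{itemize}
\item The minima have identical survival functions: $S_iS_j = (\sqrt{S_iS_j})^2$.
\item For the maxima, the survival functions are $S_i+S_j-S_iS_j$ versus $2\sqrt{S_iS_j}-S_iS_j$, so the comparison is exactly AM--GM.
\end{itemize}
To upgrade these marginal comparisons to a coupling of the sorted pair, I would construct it explicitly. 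First draw the common minimum $m$ with the same realization on both sides. Then draw each maximum from its conditional law given $\text{min}=m$, and verify that the conditional survival of $\max$ given $\min = m$ is larger under $(\F_i,\F_j)$ than under $(\mathbb{H},\mathbb{H})$. That conditional law is $\frac{f_iS_j+f_jS_i\cdot(\text{survival ratio})}{f_iS_j+f_jS_i}$-type mixtures. I expect a geometric-mean/AM--GM inequality again, applied to the conditional hazard rates. With this comparison in hand, the quantile coupling of the conditional laws gives the pair dominance.

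Independence of the other coordinates then lets me extend to the full vector. The other $n-2$ values are shared, and inserting a pair that is larger order-statistic-wise into a fixed multiset weakly raises every order statistic of the union, so $\preceq$ is preserved. The averaging step also preserves membership in $\Pi_n(\G)$, since the product of survival functions is unchanged.

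\textbf{Step 2 (iterate and pass to the limit).} Repeatedly average pairs in a round-robin fashion, say always averaging the current pair with the largest ratio $\log S_i/\log S_j$ at a fixed rational grid of $v$'s. The log-survival vector $(\log S_1(v),\dots,\log S_n(v))$ undergoes repeated pairwise averaging, which preserves its sum and converges pointwise to the constant vector with value $\log\bar S(v)$. The iterates are therefore product distributions converging weakly to $\Fiid[n](\G)^n$, and each is $\preceq$ the previous one, hence $\preceq \dists$ by transitivity. Finally, $\preceq$ is equivalent to the usual multivariate stochastic order on sorted vectors, which by Strassen's theorem is closed under weak limits. Therefore $\Fiid[n](\G)^n \preceq \dists$.

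\textbf{Main obstacle.} The delicate point is the two-coordinate coupling in Step 1: marginal dominance of $\min$ and $\max$ separately does not by itself give a joint coupling. If the conditional-hazard comparison turns out to be messy, my fallback is to work with cumulative hazards $c_i = -\log S_i$ and the representation $v_i = \inf\{v : c_i(v)\ge E_i\}$ with $E_i$ i.i.d.\ exponential. In this representation the pair's sorted values are determined by the two-point process of hazard crossings, and I would compare these processes via a Poisson-thinning argument in which the total hazard $c_i+c_j = 2\bar c$ is common to both sides.
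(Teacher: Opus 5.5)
\textbf{The gap is Step 1, and it cannot be closed, because the two-coordinate dominance $(\mathbb{H},\mathbb{H})\preceq(\F_i,\F_j)$ is false in general.} Your own reduction shows why.

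Both pairs have minimum distributed as $\G$. So in any coupling witnessing $\preceq$ the two minima are ordered and equal in law, hence equal almost surely. Conditional dominance of the maximum given the minimum is therefore necessary, not just sufficient. Now compute it. For continuous marginals with hazards $\lambda_i$, write $R_i=S_i(x)/S_i(m)$ and $w_i=\lambda_i(m)/(\lambda_i(m)+\lambda_j(m))$, which is the probability that coordinate $i$ is the minimum given $\min=m$. Then:
\begin{itemize}
\item under $\F_i\times\F_j$, $\mathbb{P}[\max>x\mid \min=m]=w_iR_j+w_jR_i$;
\item under $\mathbb{H}\times\mathbb{H}$, $\mathbb{P}[\max>x\mid \min=m]=\sqrt{R_iR_j}$.
\end{itemize}
The weights are hazard shares, not $\tfrac12$, so this is not AM--GM. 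It fails whenever the coordinate that is likely to be the minimum at $m$ has a partner with a short tail beyond $m$ ($w_i\approx 1$, $R_j\ll R_i$).

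\textbf{Counterexample.} Take $n=2$, $\F_1$ uniform on $\{0,2\}$ and $\F_2$ uniform on $\{1,2\}$. Then $\G$ puts mass $\tfrac12,\tfrac14,\tfrac14$ on $0,1,2$, and $\Fiid[2](\G)$ puts mass $1-\tfrac{1}{\sqrt2},\ \tfrac{1}{\sqrt2}-\tfrac12,\ \tfrac12$ on $0,1,2$. Consider the lower orthant $D=\{v_{(1)}\le 1,\ v_{(2)}\le 0\}$.
\begin{itemize}
\item Under $\F_1\times\F_2$, $D$ has probability $\tfrac14$ (only the profile $(0,1)$).
\item Under $\Fiid[2](\G)^2$, it has probability $\tfrac14-(\tfrac{1}{\sqrt2}-\tfrac12)^2=\tfrac{1}{\sqrt2}-\tfrac12\approx 0.207$.
\end{itemize}
A coupling with $\vals'\preceq\vals$ maps $\{\vals\in D\}$ into $\{\vals'\in D\}$, so it would force $\tfrac14\le 0.207$. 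Hence $\F_1\times\F_2\in\Pi_2(\G)$ but $\F_1\times\F_2\not\succeq\Fiid[2](\G)^2$, and the lemma fails for $n=2$. Smoothing each atom into a short interval gives exactly the same numbers, so continuity does not help. Neither your conditional-hazard comparison nor the Poisson-thinning fallback can work. Your Step 2 and the insertion argument would be fine if Step 1 held; they are not the problem.

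\textbf{The paper's proof has the same hole.} It has the same architecture as yours: pairwise geometric averaging of survival functions, consensus iteration, then a limit. Claim~\ref{cl:average} invokes Strassen's theorem but checks only the upper orthants $\{\max\ge h,\min\ge l\}$. Those inequalities are true; they are your AM--GM computation. But in two dimensions Strassen requires all up-sets, and upper-orthant dominance does not imply it. In the example:
\begin{itemize}
\item $\{v_{(1)}\ge 2\}$ and $\{v_{(2)}\ge 1\}$ have equal probabilities under both laws ($\tfrac34$ and $\tfrac12$);
\item their intersection is likelier under $\F_1\times\F_2$ ($\tfrac12$ versus $\tfrac{1}{\sqrt2}-\tfrac14$);
\item so their union, the complement of $D$, is less likely.
\end{itemize}
Your remark that separate comparisons of $\min$ and $\max$ do not yield a joint coupling was exactly the right worry. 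What survives is dominance of each order statistic marginally (Lemma~\ref{lem:general-order-statistic} with $k=n$). That suffices for payments that are separable increasing functions of the order statistics. It does not suffice for a general $\preceq$-monotone payment such as ironed Myerson's, so the route to Theorem~\ref{thm:last} through this lemma needs a different argument.
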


To establish the lemma, we first study a geometric averaging operation on two distributions that preserves the distribution of the smaller order statistic. We show that this operation produces the minimal joint distribution among all product distributions that are consistent with the given distribution of the minimum. 
\begin{claim}
\label{cl:average}
    For all distributions $\F_1, \F_2$, let $\F = 1-\sqrt{(1-\F_1)(1-\F_2)}$. Then we have $\F_1 \times \F_2 \succeq \F \times \F$.
\end{claim}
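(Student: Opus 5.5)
The plan is to prove the claim in two stages. First I compare the two order statistics separately. Then I upgrade this to a joint coupling of the vector $(v_{(2)}, v_{(1)})$, which is what $\succeq$ requires.

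\emph{Marginal comparisons.} For the minimum, $\Pr{\min(v_1,v_2) > x} = (1-\F_1(x))(1-\F_2(x)) = (1-\F(x))^2$. So the minimum has exactly the same law under $\F_1\times\F_2$ as under $\F\times\F$. This is the design principle behind the geometric average, and it is what will later make $\Fiid[n](\G)$ appear in \Cref{lem:minimal_n}. For the maximum, AM--GM gives
\[
1-\F = \sqrt{(1-\F_1)(1-\F_2)} \le \tfrac{1}{2}\bigl((1-\F_1)+(1-\F_2)\bigr),
\]
hence $\F \ge \tfrac12(\F_1+\F_2) \ge \sqrt{\F_1\F_2}$. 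Squaring yields $\F^2 \ge \F_1\F_2$, so the maximum under $\F\times\F$ is first-order stochastically dominated by the maximum under $\F_1\times\F_2$.

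\emph{From marginals to a joint coupling.} Marginal dominance of each order statistic does not by itself give a coupling with both inequalities holding simultaneously. A naive coupling that matches the minima exactly and then compares conditional laws of the maximum fails. Conditional on $\min = m$, the tail of the maximum under $\F_1\times\F_2$ is a hazard-weighted convex combination of $r_1 = \frac{1-\F_1(x)}{1-\F_1(m)}$ and $r_2 = \frac{1-\F_2(x)}{1-\F_2(m)}$, while under $\F\times\F$ it is $\sqrt{r_1 r_2}$. A convex combination can fall below the geometric mean, so this coupling does not work in general.

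Instead I will use Strassen's theorem on the ordered cone $\{(a,b): a\le b\}$. A coupling with $(v_{(2)},v_{(1)}) \ge (w_{(2)},w_{(1)})$ coordinate-wise exists iff $\Pr{(v_{(2)},v_{(1)})\in U} \ge \Pr{(w_{(2)},w_{(1)})\in U}$ for every increasing set $U$. Passing to complements and symmetrizing, it suffices to show
\[
\Pr[\F_1\times\F_2]{(v_1,v_2)\in L} \le \Pr[\F\times\F]{(v_1,v_2)\in L}
\]
for every \emph{symmetric} lower set $L\subseteq\mathbb{R}^2$. By approximation it is enough to treat finite staircases, i.e.\ finite unions of symmetric pairs of rectangles $(-\infty,s]\times(-\infty,t] \cup (-\infty,t]\times(-\infty,s]$ with $s\le t$.

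\emph{Main obstacle.} Verifying this inequality for staircases with several steps is the step I expect to be hardest. For one step it reduces to an inclusion--exclusion identity: it involves $\F_1(s)\F_2(t)+\F_1(t)\F_2(s)-\F_1(s)\F_2(s)$ versus the same expression in $\F$. Working in survival terms $a_i = 1-\F_i(s)$ and $b_i = 1-\F_i(t)$, one needs
\[
1 - a_1 a_2 - b_1 b_2 + \dots \le \text{the analogue with } \sqrt{a_1a_2}, \sqrt{b_1b_2}.
\]
After cancelling the common minimum term, this should reduce to AM--GM-type inequalities of the form $a_1b_2 + a_2 b_1 \ge 2\sqrt{a_1a_2b_1b_2}$. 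I would first attempt an induction on the number of steps, peeling off the outermost rectangle. If that becomes messy, I would fall back on a direct monotone (sequential quantile) coupling built from the cumulative hazards $\Lambda_i = -\log(1-\F_i)$, using $\Lambda_\F = (\Lambda_1+\Lambda_2)/2$.
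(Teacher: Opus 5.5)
The plan cannot be completed, because the claim itself is false. Your Strassen reformulation is the correct criterion: all increasing sets on the cone, or equivalently all symmetric lower sets $L$. But the inequality you intend to verify already fails for a single step, and your reduction of that case to AM--GM has the sign backwards. Carrying out your inclusion--exclusion for the one-step staircase $L_{s,t}$, with $a_i=1-\F_i(s)$ and $b_i=1-\F_i(t)$, gives exactly
\[
\mathbb{P}_{\F_1\times\F_2}[L_{s,t}]-\mathbb{P}_{\F\times\F}[L_{s,t}]
=\bigl(\sqrt{a_1b_2}-\sqrt{a_2b_1}\bigr)^2-\bigl(\sqrt{b_1}-\sqrt{b_2}\bigr)^2 .
\]
So the AM--GM gap $a_1b_2+a_2b_1-2\sqrt{a_1a_2b_1b_2}$ works \emph{against} you. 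The difference is strictly positive whenever $b_1=b_2>0$ and $a_1\neq a_2$.

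Concretely, let $\F_1$ be uniform on $\{1,2\}$ and $\F_2$ uniform on $\{0,2\}$. Then $\F$ has masses $1-1/\sqrt2$, $1/\sqrt2-1/2$, $1/2$ at $0,1,2$. The increasing set $U=\{v_{(2)}\ge 1\}\cup\{v_{(1)}\ge 2\}$ has probability $3/4$ under $\F_1\times\F_2$ but $3/2-1/\sqrt2\approx 0.793$ under $\F\times\F$. Hence no coupling with both order statistics ordered exists. Neither an induction on the number of steps nor the cumulative-hazard coupling can rescue the argument.

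Your ``naive coupling'' observation was in fact decisive rather than a detour. Write $\mathbf{w}\sim\F\times\F$. The minimum has the same law under both products, as you noted, so any coupling with $v_{(2)}\ge w_{(2)}$ a.s.\ must have $v_{(2)}=w_{(2)}$ a.s. The claim is therefore \emph{equivalent} to conditional dominance of the maximum given the minimum, and you showed that this can fail. In the example, $\mathbb{P}[v_{(1)}=2\mid v_{(2)}=0]$ is $1/2$ under $\F_1\times\F_2$ versus $2-\sqrt2\approx 0.586$ under $\F\times\F$.

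The paper's proof invokes Strassen's theorem but checks only upper orthants $\{v_{(1)}\ge h,\ v_{(2)}\ge l\}$. For those sets the difference is exactly $\bigl(\sqrt{(1-\F_1(h))(1-\F_2(l))}-\sqrt{(1-\F_2(h))(1-\F_1(l))}\bigr)^2\ge 0$: the same AM--GM gap that breaks your staircases makes the orthant inequalities hold. In two dimensions, however, orthant inequalities yield only the upper-orthant order, not a coupling. That order compares $\mathbb{E}[\phi(v_{(2)},v_{(1)})]$ only for $\phi$ nondecreasing and supermodular. This is precisely the gap you suspected, and the example shows it cannot be closed. Since Lemma~\ref{lem:minimal_n} with $n=2$ is literally this claim, its use in Theorem~\ref{thm:last} would have to be rerouted through a weaker comparison of this kind.
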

\begin{proof}
By Strassen's theorem, it suffices to show that for all $h \ge l$,
\[
\Pr[(v_1,v_2)\sim \F_1 \times \F_2]{\max(v_1,v_2)\ge h,\ \min(v_1,v_2)\ge l} \ge \Pr[(v_1',v_2')\sim \F \times \F]{\max(v_1',v_2')\ge h,\ \min(v_1',v_2') \ge l}~.
\]
For $(v_1,v_2)\sim \F_1\times\F_2$, the left-hand side is
\begin{align*}
\text{LHS} & = (1-\F_1(h))(1-\F_2(l)) + (1-\F_2(h))(1-\F_1(l)) - (1-\F_1(h))(1-\F_2(h)) \\
& \ge 2\sqrt{(1-\F_1(h))(1-\F_2(h))(1-\F_1(l))(1-\F_2(l))} - (1-\F_1(h))(1-\F_2(h))
\end{align*}
Since $1-\F(x) = \sqrt{(1-\F_1(x))(1-\F_2(x))}$, we obtain
\[
\text{LHS} \ge 2(1-\F(h))(1-\F(l))-(1-\F(h))^2.
\]
For $(v_i',v_j')\sim \F\times \F$, the right-hand side equals
\[
\text{RHS} = (1-\F(l))^2 - (\F(h)-\F(l))^2 = 2(1-\F(h))(1-\F(l)) - (1-\F(h))^2.
\]
Thus $\text{LHS} \ge \text{RHS}$, completing the proof.
\end{proof}

We shall apply the averaging operation to every pair of distributions $(\F_i,\F_j)$ infinitely many times. Intuitively, this procedure drives all distributions toward a common limit. The following lemma formalizes this convergence property.

\begin{claim}
\label{cl:pair_averaging}
Let $F_1^0,\dots,F_n^0$ be real-valued functions. Consider the iterative procedure: at each time $t$, pick a pair $(i_t,j_t)$ and update
\[
F_{i_t}^{t+1} = F_{j_t}^{t+1} = 1 - \sqrt{(1-F_{i_t}^t)(1-F_{j_t}^t)}, \qquad F_k^{t+1} = F_k^t \;\text{ for } k\notin\{i_t,j_t\}.
\]
Assume every unordered pair $\{i,j\}$ is selected infinitely often. Then for every $x$, the sequence $F_1^t(x),\dots,F_n^t(x)$ converges, and
\[
\lim_{t\to\infty} F_i^t(x) = 1 - \prod_{k \in [n]} (1-F_k^0(x))^{\frac{1}{n}}
\quad\text{for all } i=1,\dots,n.
\]
\end{claim}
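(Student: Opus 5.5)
Fix $x$ and pass to survival values: set $a_k^t := 1-F_k^t(x)$. Then the update becomes $a_{i_t}^{t+1}=a_{j_t}^{t+1}=\sqrt{a_{i_t}^t a_{j_t}^t}$, a pairwise geometric mean, and the target limit is the geometric mean $g:=\prod_{k\in[n]}(a_k^0)^{1/n}$. We will work with $a_k^0\ge 0$; this is the case of interest, since the $F_k^0$ are CDFs in the application.

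\textbf{Positive case.} Assume first that all $a_k^0>0$, and take logarithms $b_k^t:=\log a_k^t$. The update now replaces $b_{i_t},b_{j_t}$ by their arithmetic mean. This is a pairwise averaging (gossip) process, so:
\begin{itemize}[leftmargin=*]
\item the sum $\sum_k b_k^t$ is invariant and equals $n\log g$;
\item the quantity $V^t:=\sum_k (b_k^t-\log g)^2$ is non-increasing;
\item each step removes exactly $\tfrac12 (b_{i_t}^t-b_{j_t}^t)^2$ from $V^t$.
\end{itemize}
We show $V^t\to 0$. Monotonicity gives $V^t\downarrow V^\infty$. Suppose $V^\infty>0$; write $D^t := \max_k b_k^t - \min_k b_k^t$ for the spread. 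The spread never increases, because averaging stays within $[\min,\max]$. Also $V^t\le n (D^t)^2$, so $D^t\ge\delta:=\sqrt{V^\infty/n}>0$ for all $t$.

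Fix $t_0$ and let $T$ be a time by which every pair has been selected at least once after $t_0$; $T$ exists because every pair is selected infinitely often. We claim $V$ drops by some $c(\delta,n)>0$ during $[t_0,T]$. Suppose not: then every selected pair differs by at most $\sqrt{2c}$. Consider the original max coordinate and min coordinate at $t_0$. The pair consisting of these two is selected at some time in the window. Before that, each of them has only been averaged with coordinates within $\sqrt{2c}$ of it, and there are at most $\binom n2$ per-pair selections in the window. By induction every coordinate stays within a $\varepsilon$-drift with $\varepsilon=O(n^2\sqrt{c})$.

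The delicate point is that the number of steps in the window may be unbounded, so the drift bound must not depend on window length. We control it by the potential instead. Along any path, movement is bounded by $\sum|\Delta b|\le \sum |b_i-b_j|/2$, and by Cauchy–Schwarz the only thing to rule out is many tiny steps. The cleaner route is the ordering argument: the maximum coordinate value is non-increasing and the minimum is non-decreasing. When the pair realizing a gap of at least $D^t/n$ between consecutive sorted values is averaged, $V$ drops by a definite amount. Concretely, sort the values at $t_0$ and pick a gap $\ge \delta/n$ separating a set $S$ from its complement. Let $m^t$ be the largest value in the lower group $S$ and $M^t$ the smallest value in the upper group; until some cross pair is selected, the two groups are averaged only internally. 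This preserves $\max_S\le m^{t_0}<M^{t_0}\le\min_{S^c}$. The first cross pair selected therefore has difference $\ge\delta/n$, so $V$ drops by $\ge \delta^2/(2n^2)$. This happens in every window, contradicting $V^t\ge V^\infty$. Hence $V^t\to0$, so $b_k^t\to\log g$ and $a_k^t\to g$ for all $k$.

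\textbf{Zero case.} If some $a_k^0=0$, then $g=0$. Zeros propagate: once a zero coordinate is averaged with another, both become $0$. Since every pair is selected infinitely often, eventually all coordinates are $0$, which is the claimed limit.

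Translating back, $F_i^t(x)=1-a_i^t\to 1-g$ for every $i$. The main obstacle is establishing a uniform drop in $V$ per window independent of window length, and the gap-separation argument above resolves it.
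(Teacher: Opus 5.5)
Your proof is correct and uses the same reduction as the paper. Passing to $\ln(1-F_k^t(x))$ turns pairwise geometric averaging into pairwise arithmetic averaging with an invariant sum.

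The difference is in the convergence step. The paper invokes the average-consensus result for pairwise gossip dynamics and cites Boyd et al. You prove it directly instead:
\begin{itemize}
\item the potential $V^t$ drops by exactly $\tfrac12(b_{i_t}^t-b_{j_t}^t)^2$ per step;
\item if $V^t\not\to 0$, the spread stays at least $\delta$, so at any time some gap of size at least $\delta/n$ splits the coordinates into two groups;
\item internal averaging within a group cannot close this gap, so the first cross pair selected costs at least $\delta^2/(2n^2)$ of potential;
\item this happens in every window, which is impossible infinitely often.
\end{itemize}
Your route gives a self-contained argument that works as written for an arbitrary deterministic schedule. It also only needs the graph of infinitely-often-selected pairs to be connected, since some such edge must cross every cut. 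Your separate treatment of $a_k^0=0$ covers a case the paper's logarithmic argument silently skips, because $\ln 0$ is not real. The paper's version is shorter, at the cost of relying on an external result stated for gossip dynamics.

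When you write it up, delete the abandoned first attempt: the window/drift paragraph with $\varepsilon=O(n^2\sqrt{c})$ and the ``along any path'' remark. As you note yourself, that drift bound depends on the number of steps in the window and is not valid. The gap-separation argument alone suffices. Also state explicitly that the claim needs $F_k^0(x)\le 1$ for the square roots to be defined. Both your proof and the paper's assume this implicitly, even though the statement says ``real-valued functions.''
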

\begin{proof}
Fix an arbitrary $x$. Let $f_i^t(x) = \ln(1-F_i^t(x))$. Then the geometric averaging procedure between $(i_t,j_t)$ is equivalent to 
\[
f_{i_t}^{t+1}(x) = f_{j_t}^{t+1}(x) = \frac{1}{2} \cdot (f_{i_t}^{t}(x) + f_{j_t}^{t}(x))~.
\]
Then the vector $(f_1^t(x),\dots,f_n^t(x)) \in \mathbb{R}^n$ evolves under standard pairwise averaging dynamics. Under the assumption that every pair is selected infinitely often and the interaction graph is connected, these dynamics converge to the average consensus $(\bar{f}(x),\dots,\bar{f}(x))$, where 
\[
\bar{f}(x) = \frac{1}{n}\sum_{k \in [n]} f_k^0(x) = \frac{1}{n} \sum_{k \in [n]} \ln(1-\F_k^0(x))~.
\]
See, e.g., \citet{infocom/BoydGPS05}. 
Consequently, $F_i^t(x)$ converges to 
\[
1-\exp(\bar{f}(x)) = 1 - \prod_{k \in [n]} (1-F_k^0(x))^{\frac{1}{n}}~.
\]
\end{proof}

Now, we are ready to prove Lemma~\ref{lem:minimal_n}.
\begin{proof}[Proof of Lemma~\ref{lem:minimal_n}]
Fix an arbitrary product distribution
$\dists = \F_1 \times \cdots \times \F_n \in \Pi_n(\G)$ that is consistent
with the distribution $\G$ of the smallest order statistic.
By definition, for every $v \in \mathbb{R}$,
\[
\prod_{i \in [n]} \left(1-\F_i(v)\right) = 1-\G(v).
\]
Consider any pair of marginal distributions $(\F_i,\F_j)$, and define a
new distribution $\F$ by
\[
\F(x) := 1 - \sqrt{(1-\F_i(x))(1-\F_j(x))}.
\]
Replacing the pair $(\F_i,\F_j)$ by $(\F,\F)$ preserves consistency with
$\G$, since
\[
(1-\F(x))^2 = (1-\F_i(x))(1-\F_j(x)).
\]
Let $\dists'$ denote the resulting product distribution after this
replacement.

By Claim~\ref{cl:average}, we have $\dists \succeq \dists'$.
By repeatedly applying this pairwise averaging operation to all pairs
$(i,j)$, Claim~\ref{cl:pair_averaging} implies that the sequence of
resulting product distributions converges to an i.i.d.\ product
distribution with common marginal $\bar{\F}$, where
\[
\bar\F(v) = 1 - \prod_{i \in [n]} \bigl(1-\F_i(v)\bigr)^{\frac{1}{n}} = 1 - (1-\G(v))^{\frac{1}{n}} = \Fiid[n](\G)(v).
\]
Since order-statistic-wise dominance is preserved throughout the averaging procedure, we conclude that $\dists \succeq \Fiid[n](\G)^n$, which completes the proof.
\end{proof}

\subsection{Proof of Theorem~\ref{thm:last}}
By definition, the product distribution $\Fiid[n](\G)^n$ is consistent with
the distribution $\G$ of the last order statistic.
Therefore,
\begin{align*}
\R[n](\G) &= \max_{\mech} \min_{\dists \in \Pi_n(\G)} \Rev(\mech, \dists) \\
&\le \max_{\mech} \Rev(\mech, \Fiid[n](\G)^n) \\
&= \Rev\left(\myerson(\Fiid[n](\G)^n), \Fiid[n](\G)^n\right),
\end{align*}
where the last equality follows from the optimality of Myerson’s auction
for the distribution $\Fiid[n](\G)^n$.

It remains to show that Myerson’s auction designed for $n$ i.i.d.\ buyers
with distribution $\Fiid[n](\G)$ is robust over the entire ambiguity set
$\Pi_n(\G)$. Fix any $\dists \in \Pi_n(\G)$. Then
\begin{align*}
\Rev\left(\myerson(\Fiid[n](\G)^n), \dists\right)
&= \Ex[\vals \sim \dists]{\Rev(\myerson(\Fiid[n](\G)^n), \vals)} \\
&\ge \underset{\vals' \sim \Fiid[n](\G)^n}{\E}\off{\Rev(\myerson(\Fiid[n](\G)^n), \vals')}\\
&= \Rev\left(\myerson(\Fiid[n](\G)^n), \Fiid[n](\G)^n\right),
\end{align*}
where the inequality follows from Lemma~\ref{lem:myerson_monotone} and
Lemma~\ref{lem:minimal_n}.
This establishes that $\Fiid[n](\G)^n$ is the worst-case distribution for
$\myerson(\Fiid[n](\G)^n)$ within $\Pi_n(\G)$, and completes the proof.

\subsection{Discussion}
\paragraph{Unknown Number of Bidders.} We remark that when the distribution $\G$ of the smallest value is fixed, the optimal revenue is non-decreasing in the number of buyers. To see this, we compare the worst-case revenue guarantees with $n$ and $n+1$
buyers.

According to our main theorem, the revenue guarantees are given by the optimal revenues achievable under the i.i.d.\ product distributions $\Fiid[n](\G)^n$ and $\Fiid[n+1](\G)^{n+1}$, where
\[
\Fiid[n](\G)(v) = 1-(1-\G(v))^{\frac{1}{n}}
\quad\text{and}\quad
\Fiid[n+1](\G)(v) = 1-(1-\G(v))^{\frac{1}{n+1}}~.
\]
Observe that $\Fiid[n](\G) \preceq \Fiid[n+1](\G)$. Consequently,
\[
\underbrace{\Fiid[n](\G) \times \cdots \times \Fiid[n](\G)}_{n \text{ bidders}} \times \delta_0 \preceq \underbrace{\Fiid[n+1](\G) \times \cdots \times \Fiid[n+1](\G)}_{n+1 \text{ bidders}}~.
\]
Therefore, the optimal revenue with $n$ i.i.d.\ bidders drawn from $\Fiid[n](\G)$ is no larger than the optimal revenue with $n+1$ i.i.d.\ bidders drawn from $\Fiid[n+1](\G)$.

An economic insight from this observation is that when the number of buyers is uncertain, a conservative strategy is to design the auction assuming a smaller number of participants.

\paragraph{Comparison with Known First Order Statistic.}
Observe that the proof of Theorem~\ref{thm:first} follows the same
conceptual structure as that of Theorem~\ref{thm:last}, albeit in an
implicit manner.
Specifically, the argument proceeds in two steps:
(1) posted pricing mechanisms are monotone with respect to
order-statistic-wise dominance; and
(2) the minimal distribution in the ambiguity set $\Pi_1(\G)$ is given by
$\dists^* = \G \times \delta_0 \times \cdots \times \delta_0$.
Combining these two observations, we obtain
\[
\min_{\dists \in \Pi_1(\G)} \Rev(\pp(p), \dists) = \Rev(\pp(p), \dists^*), \qquad \forall p \in \mathbb{R}~.
\]
Thus, setting the posted price equal to the monopoly price of $\G$ is robustly optimal.

\section{Second Order Statistic: $k=2$}

In practice, when the seller does not know the buyers' valuation distributions, a commonly adopted mechanism is a second-price auction or an ascending auction. These mechanisms are attractive because of their simplicity, transparency, and the existence of dominant strategies for bidders. Importantly, after running a second-price auction repeatedly, the seller can form reliable estimates of the distribution $\G$ of the final transaction price, i.e., the distribution of the second order statistic of bidders’ values. 

In this section, we study the setting in which the distribution of an intermediate $k$-th order statistic is known and aim to design a robust mechanism. Readers may keep the case $k=2$ in mind as a motivating example, although our analysis applies to general $k$. Beyond its theoretical interest, we also discuss applications of general $k$-th order statistics at the end of the section.
This setting is significantly more challenging than the cases $k=1$ and $k=n$, due to the non-existence of an order-statistic-wise minimal distribution.

\paragraph{Non-Existence of an Order-Statistic-Wise Minimal Distribution.}
In our earlier analysis, we showed that for $k=1$ and $k=n$, the corresponding ambiguity sets admit a well-defined order-statistic-wise minimal distribution.
In contrast, for intermediate order statistics with $k \neq 1,n$, such a minimal distribution no longer exists.

Observe that a necessary condition for the existence of an order-statistic-wise minimal distribution is the following:
\begin{enumerate}
\item for every $i \in [n]$, there exists a product distribution
$\dists_i^*$ such that $\OS[i](\dists) \succeq \OS[i](\dists_i^*)$ for all
$\dists \in \Pi_k(\G)$;
\item all such distributions $\dists_i^*$ coincide.
\end{enumerate}

This condition is satisfied in the extreme cases $k=1$ and $k=n$.
When $k=1$, the minimal distribution is given by
$\dists^* = \G \times \delta_0 \times \cdots \times \delta_0$.
When $k=n$, the minimal distribution is $\dists^* = \Fiid[n](\G)^n$.
We remark that this condition is only necessary and not sufficient for
order-statistic-wise dominance, since correlations may exist between
different order statistics $v_{(i)}$ and $v_{(j)}$.

The key technical ingredient of this section is the following observation.
Although an order-statistic-wise minimal joint distribution does not
exist in general, for each index $i \in [n]$ there does exist a
distribution that is minimal with respect to the $i$-th order statistic.
Crucially, however, these minimal distributions depend on whether $i$ is smaller or larger than $k$ and therefore do not coincide, unless $k=1$ or $k=n$, in which case either $i \ge k$ or $i \le k$ holds for all
$i \in [n]$.

\begin{lemma}
\label{lem:general-order-statistic}
For every $\dists \in \Pi_k(\G)$, the following statements hold.
\begin{itemize}
    \item If $i < k$, then
    $\OS[i](\dists) \succeq \OS[i]\left(\Fiid[k](\G)^n\right)$.
    Moreover, equality is attained when
    $\dists = \Fiid[k](\G)^n$.
    \item If $i > k$, then
    $\OS[i](\dists) \succeq \delta_0$.
    Moreover, equality is attained when $\dists = \F_1 \times \cdots \F_k \times \delta_0 \times \cdots \times \delta_0$, provided that $\OS(\dists) = \G$.
\end{itemize}
\end{lemma}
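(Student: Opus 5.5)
The plan is to reduce the statement to a pointwise comparison of tail probabilities. Since $\OS[i](\dists)$ is one-dimensional, $\succeq$ is ordinary first-order stochastic dominance, so it suffices to fix a threshold $x$ and compare $\Pr{v_{(i)} \ge x}$. Set $q_j := 1-\F_j(x)$ and let $N := \#\{j : v_j \ge x\}$, a Poisson-binomial variable with parameters $(q_1,\dots,q_n)$. (If atoms at $x$ matter I will use $>$ throughout instead; the argument is identical.) Then $\Pr{v_{(m)} \ge x} = \Pr{N \ge m}$, and consistency with $\G$ fixes $c := \Pr{N \ge k}$.

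The first bullet thus becomes a claim about Poisson-binomial variables. Among all parameter vectors with $\Pr{N\ge k}=c$, the quantity $\Pr{N\ge i}$ for $i<k$ is minimized by the equal vector $q_1=\dots=q_n=\bar q$, where $\bar q$ solves $\Pr{\mathrm{Bin}(n,\bar q)\ge k}=c$. This $\bar q$ is unique by the monotonicity used in Claim~\ref{clm:iid-inversion}, and it equals $1-\Fiid[k](\G)(x)$. The equality case $\dists=\Fiid[k](\G)^n$ is then immediate.

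\textbf{Step 1 (perturbation analysis for one pair).} Fix a pair, say $q_1,q_2$, and let $M$ be the count among the remaining $n-2$ bidders. With $s=q_1+q_2$ and $p=q_1q_2$,
\[
\Pr{N\ge m} = \Pr{M\ge m} + (s-p)\Pr{M=m-1} + p\,\Pr{M=m-2}.
\]
I will move $(q_1,q_2)$ toward each other while keeping $\Pr{N\ge k}$ fixed. This forces
\[
ds\,\Pr{M=k-1} + dp\,\bigl(\Pr{M=k-2}-\Pr{M=k-1}\bigr)=0.
\]
Substituting this constraint, the change in $\Pr{N\ge i}$ becomes
\[
dp\,\Bigl(\Pr{M=i-2} - \Pr{M=i-1}\,\tfrac{\Pr{M=k-2}}{\Pr{M=k-1}}\Bigr).
\]
$M$ is a Poisson-binomial variable, hence log-concave, so $\Pr{M=j-1}/\Pr{M=j}$ is nondecreasing in $j$. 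For $i<k$ the bracket is therefore $\le 0$. Hence equalizing the pair weakly decreases $\Pr{N\ge i}$ provided $dp\ge 0$ along the path.

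\textbf{Step 2 (sign of $dp$ and degenerate cases).} I must check that on the constrained path toward equality, $p$ increases even though $s$ also moves. The constraint shows $ds$ and $dp$ have opposite or controlled signs. I would parametrize by the gap $q_1-q_2$ and use that $(s-p)\Pr{M=k-1}+p\Pr{M=k-2}$ is increasing in each $q_j$. I also need to handle the degenerate cases where $\Pr{M=k-1}=0$.

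\textbf{Step 3 (global extremality).} I will iterate the pairwise equalization over all pairs, in the spirit of Claim~\ref{cl:pair_averaging}, and pass to the limit by continuity of the Poisson-binomial tails in $q$. Each step preserves $\Pr{N\ge k}=c$ and weakly decreases $\Pr{N\ge i}$, so the limit is the equal vector $\bar q$. A cleaner alternative is compactness: a minimizer of $\Pr{N\ge i}$ on the constraint set exists, and Step 1 shows it can be taken with all coordinates equal. I expect Steps 1 and 2, namely establishing the sign of the constrained derivative, to be the main obstacle, with log-concavity being the key input.

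\textbf{Second bullet.} This is immediate since values are nonnegative, so $v_{(i)}\ge 0$. For equality, take $\F_1=\dots=\F_k$ equal to the i.i.d.\ distribution for $k$ bidders whose minimum is $\G$, i.e.\ $1-(1-\G)^{1/k}$, and set the remaining $n-k$ coordinates to $\delta_0$. Then $v_{(k)}$ is the minimum of the $k$ active values, which has distribution $\G$, and $v_{(i)}=0$ for all $i>k$.
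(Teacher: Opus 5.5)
Your reduction and Step~1 follow the paper's proof: fix a threshold, pass to Bernoulli parameters, expand both tails in $(s,p)$ for one pair, and sign the coefficient by log-concavity of the Poisson-binomial $M$. You do not even need a path. Put $\beta:=\Pr{M=k-1}$ and $\gamma:=\Pr{M=k-2}$. Both tails are affine in $(s,p)$, so for $\beta>0$ the objective $\Pr{N\ge i}$, restricted to the constraint, is an affine function of $p$ alone with slope $\Pr{M=i-2}-\Pr{M=i-1}\,\gamma/\beta\le 0$. This is the paper's observation that the normalized objective equals a constant plus $(a-b)$ times the product.

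The gap is Step~2, which carries all the remaining content and which you leave as a plan. You must show that the equal pair satisfying the constraint has a larger product than $(q_1,q_2)$. The tools you name (the gap parametrization plus coordinatewise monotonicity of the constraint) do not suffice on their own. For example, under the symmetric, coordinatewise nondecreasing constraint $\min(q_1,q_2)=t$, the product is maximized at $(t,1)$, off the diagonal. So the specific bilinear form of the constraint must be used.

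The paper closes this as follows. For $\beta>0$ the constraint reads $s=s_0+bp$ with $b:=1-\gamma/\beta\le 1$; only $\gamma\ge 0$ is used here. Set $\eta(t):=s_0+bt-2\sqrt{t}$. Then:
\begin{itemize}
\item by AM--GM, $\eta(p)=s-2\sqrt{p}\ge 0$, with equality iff $q_1=q_2$;
\item $\eta'(t)=b-1/\sqrt{t}<0$ on $(0,1)$;
\item $\eta(1)=s+b(1-p)-2\le s-p-1=-(1-q_1)(1-q_2)\le 0$.
\end{itemize}
Hence some $p'\in[p,1]$ has $\eta(p')=0$, with $p'>p$ whenever $q_1\neq q_2$. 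The pair $(\sqrt{p'},\sqrt{p'})\in[0,1]^2$ satisfies the constraint, and moving there changes the objective by the nonpositive slope times $p'-p$. In the degenerate case $0<c<1$, $\beta=0$, one gets $\gamma>0$; the constraint then fixes $p$, and minimizing $s$ gives the equal pair.

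Step~3 also needs more than you state, though the paper is equally brief there. Claim~\ref{cl:pair_averaging} is specific to geometric averaging, whereas here the new common value $\sqrt{p'}$ is dictated by the constraint. Bare compactness does not finish either, because equalization only weakly lowers the objective. One fix is to show that at a minimizer (with $0<c<1$), equalizing any unequal pair strictly lowers the objective. This follows from strict log-concavity of $M$ on its support once degenerate supports are ruled out.

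Your second bullet matches the paper.
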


The proof of the lemma is technical and is deferred to Section~\ref{sec:k-statistic}. Below, we provide a simpler proof tailored to the special case $k=2$ and $i=1$, which suffices for the analysis of the setting in which the distribution of the second order statistic is known.
\begin{lemma}
\label{lem:second_first}
For every $\dists \in \Pi_2(\G)$, we have $\OS[1](\dists) \succeq \OS[1](\Fiid[2](\G)^n)$.
\end{lemma}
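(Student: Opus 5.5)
Since both sides are one-dimensional distributions, $\succeq$ here is ordinary first-order stochastic dominance. So the goal is the pointwise inequality
\[
\Pr[\vals\sim\dists]{v_{(1)}\le x}\;\le\;\Fiid[2](\G)(x)^n \qquad \text{for every } x.
\]
The plan is to fix $x$ and work only with the numbers $a_i := \F_i(x)\in[0,1]$. In these terms,
\[
\Pr{v_{(1)}\le x}=\prod_i a_i =: P,
\qquad
\G(x)=\Pr{v_{(2)}\le x}=\prod_i a_i+\sum_i (1-a_i)\prod_{j\ne i}a_j = P\Bigl(1+\sum_i \tfrac{1-a_i}{a_i}\Bigr),
\]
where the last form is for $a_i>0$; the boundary cases are handled separately below. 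Write $u:=\Fiid[2](\G)(x)$. By the proof of Claim~\ref{clm:iid-inversion}, $u$ is defined by $H_{n,2}(u)=u^n+n(1-u)u^{n-1}=\G(x)$. So the problem becomes: among all $(a_1,\dots,a_n)\in[0,1]^n$ with $\G(x)$ fixed, the product $P$ is at most $u^n$.

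The key step is an AM--GM (or Jensen) argument that compares the product with the sum. Set $s:=\sum_i \tfrac{1}{a_i}$. Then $\G(x)=P(1-n+s)$, and AM--GM gives $s\ge nP^{-1/n}$. Hence
\[
\G(x)\ \ge\ P\bigl(1-n+nP^{-1/n}\bigr)=w^n+n(1-w)w^{n-1}=H_{n,2}(w), \qquad w:=P^{1/n}.
\]
Since $H_{n,2}$ is strictly increasing (Claim~\ref{clm:iid-inversion}), $H_{n,2}(w)\le\G(x)=H_{n,2}(u)$ gives $w\le u$, that is, $P\le u^n$, which is what we want. Equality holds exactly when all $a_i$ are equal, which matches the claim that the i.i.d. distribution is extremal for this order statistic.

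The main obstacle is the boundary cases where some $a_i=0$, since then the reciprocal form is undefined.
\begin{itemize}
\item If two or more of the $a_i$ vanish, then $\G(x)=0$ and $P=0$, so the inequality is trivial.
\item If exactly one vanishes, then $P=0\le u^n$, again trivial.
\end{itemize}
Therefore the AM--GM step only needs all $a_i>0$. Finally, since the inequality holds for every $x$, it holds pointwise between the CDFs, which is exactly $\OS[1](\dists)\succeq\OS[1](\Fiid[2](\G)^n)$.
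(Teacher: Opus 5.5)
Your proposal is correct and is essentially the paper's proof. The paper applies AM--GM directly to the $n$ nonnegative terms $\prod_{j\ne i}x_j$, whose geometric mean is $\bar x^{n-1}$. This gives $\G(v)\ge n\bar x^{n-1}-(n-1)\bar x^n=H_{n,2}(\bar x)$, and the paper then inverts the strictly increasing $H_{n,2}$. Your reciprocal form is the same inequality divided through by $P$, which is why you need the separate (trivial) handling of zero coordinates that the paper's undivided version avoids.
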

\begin{proof}
It suffices to show that for every $v \in \mathbb{R}$, the probability $\Pr[\vals \sim \dists]{v_{(1)} \le v}$ is maximized when $\dists = \Fiid[2](\G)^n$.

Fix $v$ and denote $x_i := \F_i(v)=\Pr{v_i \le v}$ for each $i\in[n]$. Since $v_{(2)} \sim \G$, we have 
\begin{align*}
\G(v) & = \prod_{i=1}^n\Pr{v_i \le v} +  \sum_{i=1}^n \Big( \Pr{v_i > v} \cdot \prod_{j\neq i}\Pr{v_j \le v} \Big) = \prod_{i=1}^n x_i +  \sum_{i=1}^n \Big( (1-x_i)\cdot \prod_{j\neq i}x_j \Big) \\
& = \sum_{i=1}^n \prod_{j \ne i} x_j - (n-1) \prod_{i=1}^n x_i
\end{align*}
Moreover, the distribution of the largest order statistic satisfies 
\[
\OS[1](\dists)(v) = \Pr[\vals \sim \dists]{v_{(1)} \le v} = \prod_{i=1}^{n} \F_i(v) = \prod_{i=1}^{n} x_i.
\]
Therefore, it suffices to show that the following optimization problem achieves its maximum when $x_1=\cdots=x_n$:
\begin{align}
\begin{aligned}
    \max_{x_1,\cdots,x_n\in [0,1]} & \prod_{i=1}^n x_i \\
    \text{subject to: } & \sum_{i=1}^n \prod_{j \ne i} x_j - (n-1) \prod_{i=1}^n x_i = \G(v)
\end{aligned}
\label{eq:second}
\end{align}
Let $\bar{x}$ denote the geometric mean of $\{x_i\}_{i=1}^n$, i.e., $\bar{x} := \left( \prod_{i=1}^n x_i \right)^{1/n}$.
By the AM-GM inequality,
\[
\G(v) = \sum_{i=1}^n \prod_{j \ne i} x_j - (n-1) \prod_{i=1}^n x_i \ge n \cdot \bar{x}^{n-1} - (n-1) \bar{x}^n~.
\]
Define $h(x) := n x^{n-1} - (n-1)x^{n}$. The function $h$ is strictly increasing on $[0,1]$, with $h(0)=0$ and $h(1)=1$.
Since $0 \le \G(v) \le 1$, there exists a unique $x^* \in [0,1]$ such that
$h(x^*) = \G(v)$. By the monotonicity of $h$, the above inequality implies $\bar{x} \le x^*$.

Finally, note that setting $x_1=\cdots=x_n=x^*$ is feasible for \eqref{eq:second} and yields $\prod_{i=1}^n x_i = (x^*)^n \ge \prod_{i=1}^n x_i$ for any feasible $(x_1,\dots,x_n)$. This proves that the maximum is attained when all $x_i$ are equal, and thus completes the proof.
\end{proof}

\subsection{Robustness of Second Price Auction with a Reserve}
\label{sec:second}
If there existed a universal distribution that is minimal under order-statistic-wise dominance for general order-statistic constraints, then the worst-case performance of any monotone and symmetric mechanism---including posted pricing and Myerson’s auction---would be attained at this minimal distribution.
However, as we show above, such a universal minimal distribution does not exist for intermediate order statistics, and this approach therefore cannot succeed in general.

The contrast between the first and last order cases provides useful insights. When first order information is available, robust optimality favors mechanisms that effectively ignore competition and screen a single high type. When last order information is available, robust optimality instead favors mechanisms that fully exploit competition among symmetric bidders. Guided by this intuition, when the seller observes the second order statistic, it is natural to focus on mechanisms that extract revenue from the interaction between the highest and second-highest bidders.
For the widely used second-price auction with a reserve price, we show that the worst-case revenue is attained by the i.i.d.\ distribution $\Fiid[2](\G)^n$. This is because the revenue of a second-price auction with reserve depends only on the first and second order statistics of the valuation profile $\vals$.
Consequently, the second-price auction with reserve admits a sharp and tractable worst-case characterization, allowing the seller to secure a robust lower bound on revenue across all independent distributions consistent with the observed winning price distribution $\G$.

\begin{theorem}
\label{thm:second}
For every $k \ge 2$ and price $p \in \mathbb{R}$,
\[
\min_{\dists \in \Pi_k(\G)} \Rev(\spa(p), \dists) = \Rev(\spa(p), \Fiid(\G)^n)~.
\]
Moreover, if $\Fiid(\G)$ is regular above the reserve $p^* = \argmax_v v \cdot (1-\Fiid(\G)(v))$, we have
\[
\min_{\dists \in \Pi_k(\G)} \Rev(\spa(p^*), \dists) = \Rev(\spa(p^*), \Fiid(\G)^n) = \R(\G)~.
\]
\end{theorem}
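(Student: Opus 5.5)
The plan is to rewrite the revenue of $\spa(p)$ so that it depends only on the first $k$ order statistics. We then compare $\dists\in\Pi_k(\G)$ with $\Fiid(\G)^n$ through the marginals of $v_{(i)}$ for $i<k$ (given by Lemma~\ref{lem:general-order-statistic}) together with the fixed marginal of $v_{(k)}$.

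First, assume $p\ge 0$ and values are nonnegative. The revenue of $\spa(p)$ at a profile is $\max(v_{(2)},p)\cdot\ind[v_{(1)}\ge p]$. Taking expectations,
\[
\Rev(\spa(p),\dists)=p\cdot\Pr{v_{(1)}\ge p}+\Ex{(v_{(2)}-p)^+}.
\]
This is a sum of a functional of the marginal of $v_{(1)}$ and a functional of the marginal of $v_{(2)}$, and both functionals are monotone under first-order stochastic dominance. (For $p<0$ the reserve is irrelevant with nonnegative values, so we may reduce to $p=0$.)

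Next, fix $\dists\in\Pi_k(\G)$ and handle each term.
\begin{itemize}
\item If $k=2$: the marginal of $v_{(2)}$ equals $\G$, the same as under $\Fiid(\G)^n$, so the second term coincides. For the first term, Lemma~\ref{lem:second_first} (or the first bullet of Lemma~\ref{lem:general-order-statistic} with $i=1$) gives $\OS[1](\dists)\succeq\OS[1](\Fiid(\G)^n)$, hence $\Pr[\dists]{v_{(1)}\ge p}\ge\Pr[\Fiid(\G)^n]{v_{(1)}\ge p}$.
\item If $k>2$: both indices $1$ and $2$ are below $k$, so the first bullet of Lemma~\ref{lem:general-order-statistic} gives stochastic dominance for the marginals of both $v_{(1)}$ and $v_{(2)}$, and each term is weakly larger under $\dists$.
\end{itemize}
Since no joint coupling of order statistics is needed, this additive separability is what makes the argument avoid the nonexistence of an order-statistic-wise minimal distribution. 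Because $\Fiid(\G)^n\in\Pi_k(\G)$, the minimum is attained there, which proves the first display.

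For the second claim, the upper bound is the same as in the proof of Theorem~\ref{thm:last}: $\R(\G)\le\max_\mech\Rev(\mech,\Fiid(\G)^n)=\opt(\Fiid(\G)^n)$. If $\Fiid(\G)$ is regular above the reserve, Fact~\ref{fact:spa} gives $\opt(\Fiid(\G)^n)=\Rev(\spa(p^*),\Fiid(\G)^n)$. By the first part, this quantity equals $\min_{\dists\in\Pi_k(\G)}\Rev(\spa(p^*),\dists)\le\R(\G)$. The inequalities therefore collapse to equalities.

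The main obstacle is the first bullet of Lemma~\ref{lem:general-order-statistic} for general $k$, which was deferred. Assuming it as given, the only delicate points are checking that the revenue formula separates additively into monotone functionals of single marginals, and handling sign conventions for $p$ and atoms at $p$ (using $\ge$ versus $>$ consistently with \eqref{eqn:revenue_spa}).
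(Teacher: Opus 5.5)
Your proposal is correct and follows essentially the same route as the paper. The paper also writes the revenue of $\spa(p)$ as $p\cdot\mathbb{P}[v_{(1)}\ge p]+\int_p^\infty \mathbb{P}[v_{(2)}\ge v]\,\mathrm{d}v$ (your $\mathbb{E}[(v_{(2)}-p)^+]$), bounds each term via Lemma~\ref{lem:general-order-statistic} for $i=1,2$, and gets the second part from the sandwich $\R(\G)\le\opt(\Fiid(\G)^n)=\Rev(\spa(p^*),\Fiid(\G)^n)$ using Fact~\ref{fact:spa}; your explicit $k=2$ case split, where $i=k$ is not covered by the lemma but the $v_{(2)}$ marginal is just $\G$, handles cleanly a point the paper only makes in a remark.
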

\begin{proof}
By equation~\eqref{eqn:revenue_spa}, the seller's revenue under $\spa(p)$ is 
\begin{align*}
\Rev(\spa(p), \dists) & = p \cdot \Pr[\vals \sim \dists]{v_{(1)} \ge p > v_{(2)}]} + \Ex[\vals \sim \dists]{v_{(2)} \cdot \ind[v_{(2)} \ge p]} \\
& = p \cdot \Pr[\vals \sim \dists]{v_{(1)} \ge p} + \Ex[\vals \sim \dists]{(v_{(2)}-p) \cdot \ind[v_{(2)} \ge p]} \\
& = p \cdot \Pr[\vals \sim \dists]{v_{(1)} \ge p} + \int_p^\infty \Pr[\vals \sim \dists]{v_{(2)} \ge v}  \dd v
\end{align*}

Applying Lemma~\ref{lem:general-order-statistic} to $i=1,2$, we conclude
that the terms $\Pr{v_{(1)} \ge p}$ and $\Pr{v_{(2)} \ge p}$ attain their
minimum values at the distribution $\Fiid(\G)^n$. Therefore,
\begin{align*}
\Rev(\spa(p), \dists) & \ge p \cdot \Pr[\vals \sim \Fiid(\G)^n]{v_{(1)} \ge p} + \int_p^\infty \Pr[\vals \sim \Fiid(\G)^n]{v_{(2)} \ge v}  \dd v  = \Rev(\spa(p), \Fiid(\G)^n)~.
\end{align*}
Together with the fact that $\Fiid(\G)^n \in \Pi_k(\G)$, this establishes the first part of the statement.

We remark that when $k=2$, the second component of the revenue,
\[
\int_p^\infty \Pr[\vals \sim \dists]{v_{(2)} \ge v} \dd v = \int_p^\infty 1-\G(v) \dd v,
\]
is constant across all distributions $\dists \in \Pi_2(\G)$. 
Consequently, the weaker Lemma~\ref{lem:second_first} suffices for the above analysis.

The second part of the statement comes from the following observation:
\[
\R(\G) = \max_\mech \min_{\dists \in \Pi_k(\G)} \Rev(\mech,\dists) \le \max_\mech \Rev(\mech,\Fiid(\G)^n) = \Rev(\spa(p^*), \Fiid(\G)^n)~,
\]
where the last equality holds by Fact~\ref{fact:spa}.

\end{proof}

\paragraph{Discussion.}
Our theorem holds for general $k \ge 2$. Below, we make a few remarks in the context of $k=2$, i.e., when the distribution of the second order statistic is known.

\paragraph{Unknown Number of Bidders.} 
Fixing the distribution $\G$ of the second order statistic, we observe that as the number of bidders increases, the revenue achieved by $\spa(p)$ for any fixed reserve price $p$ weakly decreases. Indeed, given any joint distribution of valuations among $n$ bidders, one can construct a feasible joint distribution for $n+1$ bidders by introducing an additional bidder whose valuation is identically zero. This construction leaves the distribution of the second order statistic unchanged.
Consequently, any second-order-statistic distribution that is feasible with $n$ bidders remains feasible in the robust auction problem with $n+1$ bidders. Since the robust benchmark for $n+1$ bidders minimizes revenue over a larger ambiguity set, the worst-case revenue attained by $\spa(p)$ with $n+1$ bidders is weakly lower than that with $n$ bidders.

This suggests that when the number of bidders is unknown, the optimal reserve price $p$ in the limiting regime as $n \to \infty$ obtains a lower bound for the robust revenue for all $n$. In Proposition ~\ref{prop: unknown n}, we provide a reserve price that guarantees a constant revenue improvement for all $n$ uniformly. 
\begin{proposition}
For any reserve price $p$, the second-price auction with a reserve price $p$ guarantees the following worst-case expected revenue for every $n$ and every distribution $\dists \in \Pi_2(\G)$:
\[
\Rev(\spa(p), \dists) \ge p \cdot (1-z^*(p)) + \int_p^\infty 1-\G(v) \dd v~,
\]
where $z^*(p)$ is solved by $z(1-\ln z)=\G(p)$.
\label{prop: unknown n}
\end{proposition}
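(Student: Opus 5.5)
From the proof of Theorem~\ref{thm:second} with $k=2$, the revenue decomposes as
\[
\Rev(\spa(p),\dists) = p\cdot \Pr[\vals\sim\dists]{v_{(1)}\ge p} + \int_p^\infty \bigl(1-\G(v)\bigr)\dd v ,
\]
and the integral term is the same for every $\dists\in\Pi_2(\G)$. So the whole task is to show that $\Pr{v_{(1)}\ge p}\ge 1-z^*(p)$, uniformly in $n$. Equivalently, we must show that $\prod_i \F_i(p)\le z^*(p)$ for every $n$ and every $\dists\in\Pi_2(\G)$.

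First I apply Lemma~\ref{lem:second_first} at the point $v=p$. It says that, for fixed $n$, the product $\prod_i\F_i(p)$ is maximized by the i.i.d.\ choice. Its value there is $(x_n^*)^n$, where $x_n^*\in[0,1]$ solves
\[
h_n(x) := n x^{n-1}-(n-1)x^n = \G(p).
\]
Set $z_n := (x_n^*)^n$ and write $x=z^{1/n}$, so that $x^{n-1} = z/x = z\cdot z^{-1/n}$. The constraint then becomes
\[
\phi_n(z) := z\bigl(n z^{-1/n}-(n-1)\bigr) = \G(p).
\]
The bracket equals $1+n(z^{-1/n}-1)$, and $n(z^{-1/n}-1)$ increases in $n$ toward its limit $-\ln z$. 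This uses the convexity of $t\mapsto (e^{ty}-1)/t$, with $y=-\ln z\ge 0$. Hence $\phi_n(z)\le \phi_\infty(z) := z(1-\ln z)$ for every $z\in(0,1]$, and $\phi_n$ increases pointwise to $\phi_\infty$.

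Next I compare the roots. Both $\phi_n$ and $\phi_\infty$ are increasing on $[0,1]$ and map $0\mapsto0$, $1\mapsto1$. For $\phi_\infty$ this follows from $\phi_\infty'(z) = -\ln z\ge 0$; for $\phi_n$ it follows from the monotonicity of $h_n$ already noted in the paper. Now $\phi_n(z_n) = \G(p) = \phi_\infty(z^*)$ and $\phi_n\le\phi_\infty$, so $\phi_\infty(z_n)\ge \G(p) = \phi_\infty(z^*)$. Monotonicity of $\phi_\infty$ then gives $z_n\ge z^*$.

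This comparison goes the wrong way. It says the finite-$n$ i.i.d.\ maximum $z_n$ is at least $z^*$, whereas the bound we need is $\prod_i\F_i(p)\le z^*$. This is consistent with the discussion before the proposition: worst-case revenue weakly decreases in $n$, so the $n\to\infty$ limit is the lower bound. Concretely, I should show that $z_n$ decreases in $n$, which would give $z_n\ge \lim_m z_m$, and then identify that limit as $z^*$, where $z^*$ solves $z(1-\ln z)=\G(p)$. The limit identification follows from the pointwise convergence $\phi_n\to\phi_\infty$ together with continuity and strict monotonicity of $\phi_\infty$.

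The real obstacle is the direction of the inequality. With $\phi_n\uparrow\phi_\infty$, roots satisfy $z_n\downarrow z^*$, so the limit is the \emph{smallest} maximal product, and $z^*$ bounds $z_n$ from below rather than above. The claimed guarantee $1-\prod_i\F_i(p)\ge 1-z^*$ would therefore hold only in the limiting regime. I therefore recheck the monotonicity of $n(z^{-1/n}-1)$ in $n$. Writing $y=-\ln z$, this is $(e^{y/n}-1)/(1/n)$, which decreases as $1/n\downarrow0$, that is, it decreases in $n$. So in fact $\phi_n\downarrow\phi_\infty$, hence $\phi_n\ge\phi_\infty$, and the root comparison gives $z_n\le z^*$, which is exactly the needed bound $\prod_i\F_i(p)\le z_n\le z^*$ for every $n$.

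The remaining care is the endpoints: the case $\G(p)\in\{0,1\}$ and the case $z=0$, where $\phi_\infty(0)=0$ by continuity.
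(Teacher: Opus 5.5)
Your final argument is correct and is essentially the paper's proof. You reduce to the i.i.d.\ case via Lemma~\ref{lem:second_first}. The needed bound $\phi_n(z)\ge z(1-\ln z)$ is just $n(z^{-1/n}-1)\ge -\ln z$, which is the paper's inequality $\frac{1-x}{x}\ge -\ln x$ at $x=\F(p)=z^{1/n}$. Then $z_n\le z^*$ follows from monotonicity of $z\mapsto z(1-\ln z)$, exactly as in the paper.

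In a write-up, delete the wrong-direction detour. The quantity $n(z^{-1/n}-1)$ decreases in $n$, and the reason is convexity of $t\mapsto e^{ty}$, not of $(e^{ty}-1)/t$. Only the pointwise bound $e^{t}-1\ge t$ is actually needed.
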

\begin{proof}
According to the proof of Theorem~\ref{thm:second}, for any reserve price $p$,
\[
\Rev(\spa(p), \dists)= p \cdot \Pr[\vals \sim \dists]{v_{(1)} \ge p} + \int_p^\infty 1-\G(v) \dd v
\]
Thus, it suffices to provide a lower bound on $\P\off{v_{(1)}\ge p}$ when the number of bidders $n$ is itself a parameter of the optimization.
We prove that for every $n$,
\begin{equation}
\label{eqn:unknown_second}
    \P[v_{(1)} \ge p] \ge 1-z^*(p)~, \quad \text{where } z^*(p)\cdot(1-\ln(z^*(p)))=\G(p)~.
\end{equation}

Since for every fixed $n$, the worst-case distribution is the i.i.d.\ distribution $\Fiid[2](\G)^n$, it suffices to show that if
\[
\F^n(p) + n \F^{n-1}(p)(1-\F(p)) = \G(p)~,
\]
then $\Pr{v_{(1)} \le p} = \F^n(p) \le z^*$. Let $z=\F^n(p)$. Then we can write
\begin{align*}
\G(p) & = \F^n(p) + n \F^{n-1}(p)(1-\F(p)) \\
& = z + z \cdot n(1-\F(p))/\F(p) \\
& \ge z + z \cdot (- n\ln(\F(p))) = z - z \ln z~,
\end{align*}
where the inequality follows from the bound $\frac{1-x}{x} \ge -\ln x$.
The inequality~\eqref{eqn:unknown_second} then follows from the fact that function $z \to z-z\ln z$ is non-decreasing in $[0,1]$. Moreover, equality is attained in the limit as $n \to \infty$. 
To summarize, when the number of bidders is unknown, the second-price auction with a reserve achieves a uniform revenue guarantee for all $n$.
\end{proof}

\paragraph{Regularity vs.\ Regularity above the Reserve.}
Most of the literature states the optimality of the second-price auction with a reserve price under the assumption that the underlying distribution is regular.
In contrast, we deliberately adopt the more precise notion of \emph{regularity above the reserve}. Indeed, the most relevant applications of our result arises when $\Fiid[2](\G)$ is regular above the reserve but not regular.

A classical result of \cite{BulowK1996} shows that when bidders are i.i.d.\ and regular, the revenue achieved by the second-price auction without a reserve under $n+1$ bidders exceeds that obtained under $n$ bidders with the optimal reserve.
Conceptually, this result suggests that increasing competition is often more effective than optimizing the auction format.

Viewed through the lens of our framework, this result implies that when $\Fiid[2](\G)$ is regular, although optimizing a reserve price is robustly optimal, the resulting revenue gain over the second-price auction without a reserve (from which the data are collected) is negligible when the number of buyers is large or unknown---at most a $1/n$ fraction. Formally,
\begin{align*}
& \Rev(\spa(p^*), \Fiid[2](\G)^n) - \Rev(\spa(0), \Fiid[2](\G)^n) \\
\le{} &
\Rev(\spa(0), \Fiid[2](\G)^{n+1}) - \Rev(\spa(0), \Fiid[2](\G)^n) \\
\le{} &
\frac{1}{n} \cdot \Rev(\spa(0), \Fiid[2](\G)^n)
= \frac{1}{n} \cdot \Ex[v \sim \G]{v}.
\end{align*}

In contrast, when $\Fiid[2](\G)$ is irregular but regular above the reserve, our results point to the opposite conclusion. In this case, if the seller has collected sufficient data by running a second-price auction without a reserve, optimizing the reserve price can lead to a \emph{constant-factor} revenue improvement, even when the number of buyers is large or unknown.
This phenomenon is driven by the fact that the value distribution of an infrequent regular buyer is no longer regular. More formally, a mixture of a regular distribution with a point mass $\delta_0$ is not regular, which serves as a key motivation for our results.
We provide a few examples to illustrate the potential revenue improvement. 
\paragraph{Example 1: Bernoulli Distribution.} Consider the extreme case in which $\G$ is a Bernoulli distribution, taking values $0$ and $1$ with equal probability. In this case, the induced distribution $\Fiid[2](\G)$ is also a Bernoulli distribution, taking values $0$ and $1$ with probabilities $q$ and $1-q$, respectively, where $q$ satisfies 
\[
n q^{n-1} - (n-1) q^n = 0.5~. 
\]
This distribution can be interpreted as a regular buyer whose valuation is deterministically $1$, but who appears only with probability $1-q$. Importantly, this distribution is regular above the reserve price but not regular. 
Against this distribution, the second-price auction with a reserve price of $1$ is equivalent to a posted-price mechanism with price $1$. According to Proposition~\ref{prop: unknown n}, for every $n$, this yields a revenue of at least
\[
1 - z^* \approx 0.813, \qquad \text{where } z^*(1 - \ln z^*) = 0.5~.
\]
This represents a substantial improvement over the expected revenue of $0.5$ obtained by running a second-price auction without a reserve.

\paragraph{Example 2: Uniform Distribution.}
Consider the case when $\G$ is the uniform distribution on $[0,1]$. We first prove that the induced distribution $\Fiid[2](\G)$ is not regular but is regular above the reserve for every $n$, and then establish the constant revenue improvement.
\begin{claim}
    For every $n \ge 2$, the distribution $\Fiid[2](U[0,1])$ is regular above the reserve, but not regular.
\end{claim}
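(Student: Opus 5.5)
The plan is to work in quantile space, where regularity means concavity of the revenue curve. Write $\F := \Fiid[2](U[0,1])$. By Claim~\ref{clm:iid-inversion}, $\F(v) = u$ exactly when $h(u) = v$, where $h(u) := n u^{n-1} - (n-1)u^n = H_{n,2}(u)$ is the strictly increasing bijection of $[0,1]$ used in Lemma~\ref{lem:second_first}.

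So $\F^{-1}(u) = h(u)$, and in the variable $u = 1-q$ the revenue curve is
\[
R(u) := (1-u)\,h(u).
\]
Since $u \mapsto 1-u$ is affine, concavity of the revenue curve in $q$ is the same as concavity of $R$ in $u$.

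\textbf{Step 1: locate the non-concave region.} I would compute
\[
h'(u) = n(n-1)u^{n-2}(1-u),\qquad h''(u) = n(n-1)u^{n-3}\bigl[(n-2)-(n-1)u\bigr].
\]
From these,
\[
R''(u) = -2h'(u) + (1-u)h''(u) = n(n-1)\,u^{n-3}(1-u)\bigl[(n-2)-(n+1)u\bigr].
\]
Hence $R$ is convex on $[0,u_0]$ and concave on $[u_0,1]$, where $u_0 := \frac{n-2}{n+1}$.

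For $n \ge 3$ we have $u_0 > 0$, so there is a genuine convex piece at low values (small $u$). On it the virtual value is decreasing, and $\F$ is not regular.

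\textbf{Step 2: the reserve lies in the concave region.} We have $R(0) = 0$ and $R'(0) = -h(0) + h'(0) = h'(0) \ge 0$. Because $R$ is convex on $[0,u_0]$, $R'$ is nondecreasing there, so $R' \ge 0$ on $[0,u_0]$ and $R$ is nondecreasing on that interval. Therefore a maximizer $u^*$ of $R$ can be taken in $[u_0,1]$, and the monopoly reserve is $p^* = h(u^*)$.

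Values $v \ge p^*$ correspond to $u \ge u^* \ge u_0$, where $R$ is concave. On this range the concave envelope coincides with $R$. Equivalently, $\varphi$ is non-decreasing on $[p^*,\infty)$, which is regularity above the reserve.

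I would check this against the paper's equivalent formulation: no ironing interval meets $[p^*,\infty)$. Any ironing interval must contain a point of strict convexity, and all such points lie in $u < u_0 \le u^*$. An interval reaching past $u^*$ would push the envelope strictly above $R(u^*)$, which is impossible since $R(u^*) = \max R$.

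\textbf{Main obstacle: the case $n=2$.} Here $u_0 = 0$ and the formula gives $R''(u) = -6(1-u) \le 0$, so $R$ is concave everywhere. Directly, $\F(v) = 1-\sqrt{1-v}$ gives $\varphi(v) = 3v-2$, which is increasing.

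So the computation gives regularity for $n=2$, and the "not regular" half of the claim appears true only for $n \ge 3$. Before writing the proof I would recheck whether the authors intend some other convention, for example including an atom at $0$. Failing that, I would state the non-regularity part for $n \ge 3$ and note that for $n=2$ the distribution is fully regular, which is still regular above the reserve.
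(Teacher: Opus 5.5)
Your proposal is correct. On one point it is more accurate than the paper.

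\textbf{Same core computation.} The paper writes the virtual value as a function of $y=\F(v)$ and differentiates, obtaining the sign factor $(n+1)y-(n-2)$. Since $R'(u)=-\varphi(h(u))$, your $R''(u)=-\varphi'(h(u))\,h'(u)$ carries the same sign information and the same inflection point $u_0=\frac{n-2}{n+1}$.

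\textbf{Step 2 improves on the paper.} The paper characterizes the reserve as ``the unique'' root $y^*$ of the virtual value and asserts ``by inspection'' that $y^*>\frac{n-2}{n+1}$. Your argument actually proves this: $R(0)=0$, $R'(0)=h'(0)\ge 0$, and convexity on $[0,u_0]$ force $R$ to be nondecreasing there. For $n\ge 3$ the convexity is strict on $(0,u_0)$, so $R$ is strictly increasing there. Hence every maximizer lies in $[u_0,1]$, not merely some maximizer. This matters because the definition is stated relative to $p^*=\argmax$. Your formulation also avoids a small inaccuracy in the paper: for $n\ge 3$, $y=0$ is a second root of its expression for the virtual value, so ``unique'' is not quite right.

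\textbf{The case $n=2$.} Your treatment is right. There $\frac{n-2}{n+1}=0$, and the paper's own derivative reduces to $6(1-y)>0$. The interval on which the virtual value decreases is empty, so the paper's sentence ``Hence, $\F$ is not regular'' does not follow. Concretely, $\F(v)=1-\sqrt{1-v}$ has no atom (so no alternative convention rescues the claim), $\varphi(v)=3v-2$, and the revenue curve $q-q^3$ is concave.

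The non-regularity half of the claim therefore holds only for $n\ge 3$, and you should state it that way. This does not affect the paper's use of the claim. Invoking Theorem~\ref{thm:second} for $\G=U[0,1]$ needs only regularity above the reserve, which your argument establishes for every $n\ge 2$. The uniform-in-$n$ revenue bound that follows relies on the proposition for an unknown number of bidders, which uses no regularity at all.
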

\begin{proof}
Consider an arbitrary $n \ge 2$. For notation simplicity, we use $\F$ to denote $\Fiid[2](U[0,1])$ and $f$ to denote the corresponding probability density function. By definition, we have
\[
\F(v)^n + n \F(v)^{n-1} (1-\F(v)) = \G(v) = v, \qquad \forall v \in [0,1]~.
\]
Taking derivatives on both sides yields
\[
n(n-1) \F(v)^{n-2}(1-\F(v)) f(v) = 1, \qquad \forall v \in [0,1]~.
\]
The virtual value function associated with $\F$ is therefore
\[
\varphi(v) = v - \frac{1-\F(v)}{f(v)} = n \F(v)^{n-1} - (n-1) \F(v)^{n} - n(n-1) \F(v)^{n-2} (1-\F(v))^2~.
\]
Since $\F$ is strictly increasing in $v$, the monotonicity of $\varphi(v)$ is equivalent to the monotonicity of the function 
\[
h(y) := n y^{n-1} - (n-1)y^n -n(n-1)y^{n-2}(1-y)^2~.
\]
A direct computation shows that
\begin{align*}
h'(y) & = n(n-1)y^{n-2}(1-y) - n(n-1) ((n-2)y^{n-3}(1-y)^2 - 2y^{n-2}(1-y)) \\
& = n(n-1)y^{n-3}(1-y) \cdot ((n+1)y - (n-2))~.
\end{align*}
It follows that $h'(y) < 0$ for $y < \frac{n-2}{n+1}$ and $h'(y) > 0$ for $y > \frac{n-2}{n+1}$.
Equivalently, the virtual value function $\varphi(v)$ is decreasing for $v \le \F^{-1}\!\left(\frac{n-2}{n+1}\right)$ and increasing thereafter. Hence, $\F$ is not regular.

We now turn to regularity above the reserve. The monopoly reserve price $p^*$ is characterized by the condition $\varphi(p^*) = 0$, or equivalently by the unique solution $y^* = \F(p^*)$ satisfying
\[
h(y) = n(n-1)y^{n-2} \left( -\left(1+\frac{1}{n}\right) y^2 + \left(2+\frac{1}{n-1}\right) y - 1 \right) = 0~.
\]
By inspection, this solution satisfies $y^* > \frac{n-2}{n+1}$. Therefore, $\varphi(v)$ is non-decreasing for all $v \ge p^*$, and consequently $\F$ is regular above the reserve.
\end{proof}
Our theorem then implies that the second-price auction with a reserve is robustly optimal. 
Moreover, the preceding discussion for an unknown number of buyers in \Cref{prop: unknown n} shows that by introducing a reserve price $p \in [0,1]$, the seller can guarantee an expected revenue of
\[
p (1-z^*(p)) + 0.5 (1-p)^2~, \quad \text{where } z^*(p)(1-\ln z^*(p)) = \G(p) = p~.
\]
Instead of optimizing over $p$, we can equivalently optimize over $z$ by substituting the expression for $p$ in terms of $z$, which yields the objective
\[
z(1-\ln z)(1-z) + 0.5 (1-z(1-\ln z))^2~.
\]
The maximum of this function is attained at $z^* \approx 0.198$, or equivalently $p^* \approx 0.519$, resulting in a revenue guarantee of approximately $0.531$. Once again, this represents a constant-factor revenue improvement over the second-price auction without a reserve for all $n$ uniformly.

\paragraph{More Examples by Numerical Illustrations.}
Regularity above reserve has many applications, and our analysis extends well beyond the above mentioned distribution classes of $\G$. In particular, we show through representative examples that when $\G$ belongs to widely used distribution families such as exponential, Beta, or normal distributions, the induced i.i.d. distribution $\Fiid[2](\G)$ is also regular above its optimal reserve. Since the computation is complicated and does not provide additional insights, we illustrate the regular-above-reserve property directly through the corresponding revenue curves in Figure~\ref{fig:regular_above_example}. Within this broad class of distributions, Theorem~\ref{thm:second} implies that the second-price auction with the optimal reserve is robustly optimal. 
\begin{figure}[t]
\centering
\begin{subfigure}[t]{0.32\textwidth}
\includegraphics[width=\textwidth,keepaspectratio]{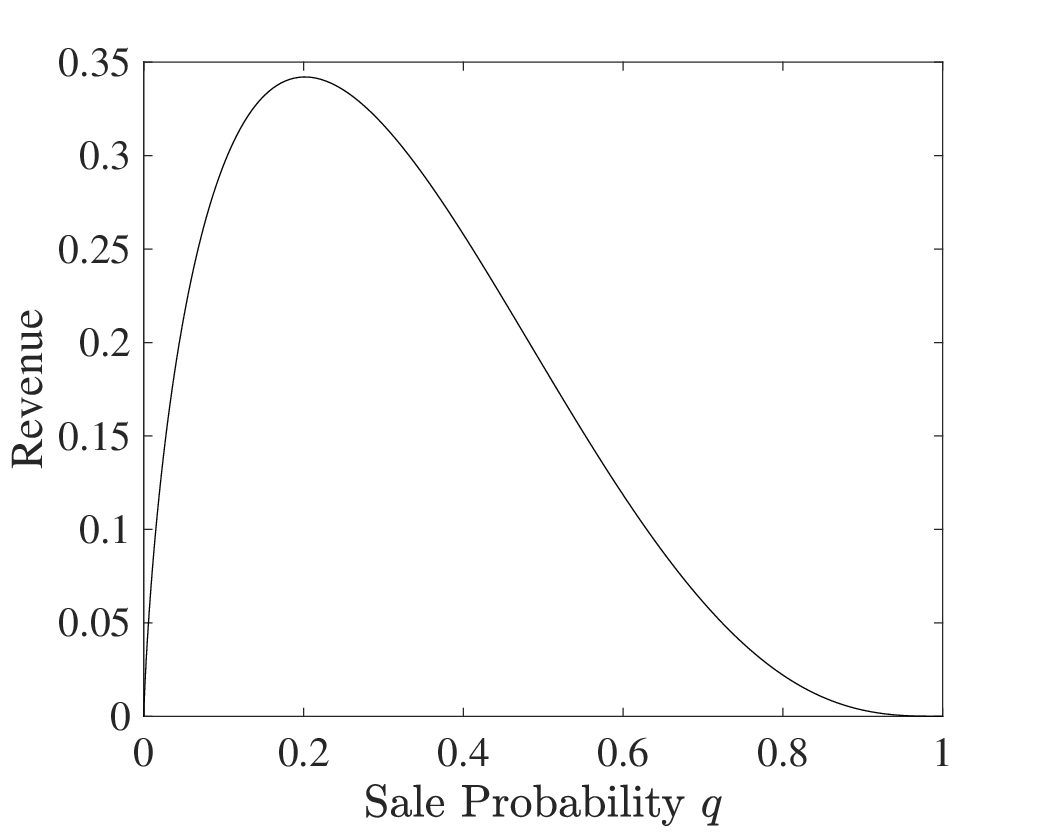}
\caption{$\G\sim \mathrm{exp}(1), n =4$}
\end{subfigure}
\begin{subfigure}[t]{0.32\textwidth}
\includegraphics[width=\textwidth,keepaspectratio]{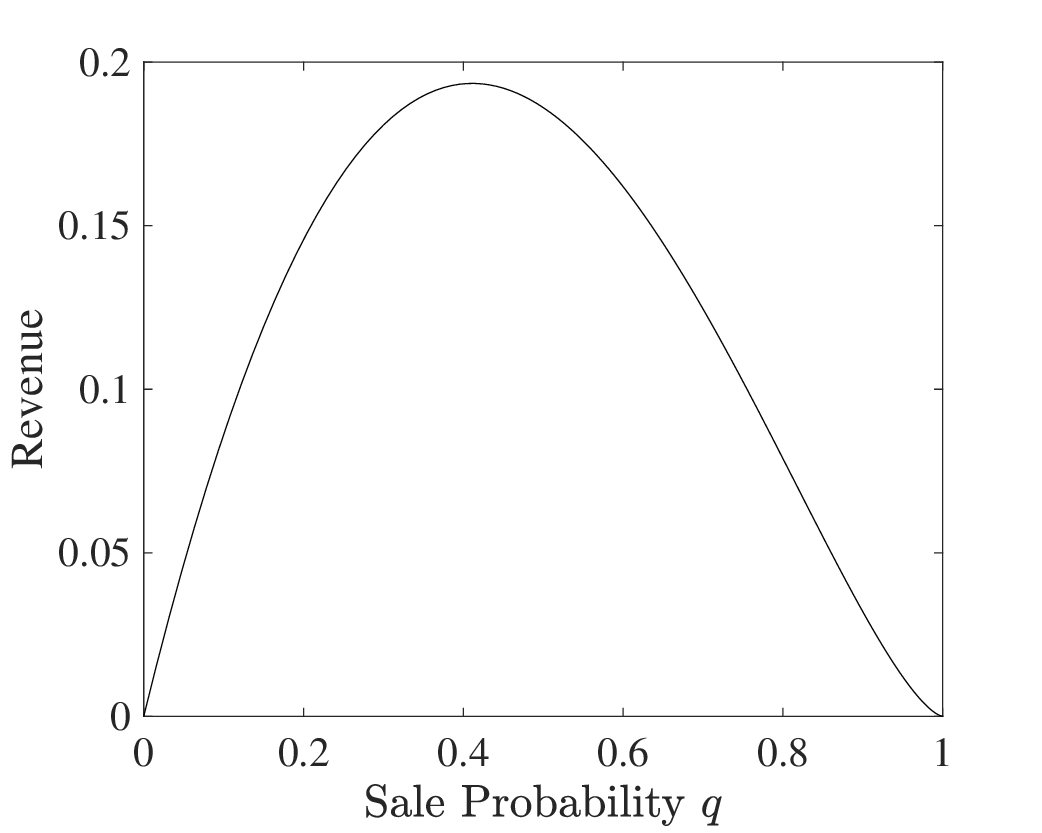}
\caption{$\G\sim\mathrm{Beta}(2,2), n =4$}

\end{subfigure}
\begin{subfigure}[t]{0.32\textwidth}
\includegraphics[width=\textwidth,keepaspectratio]{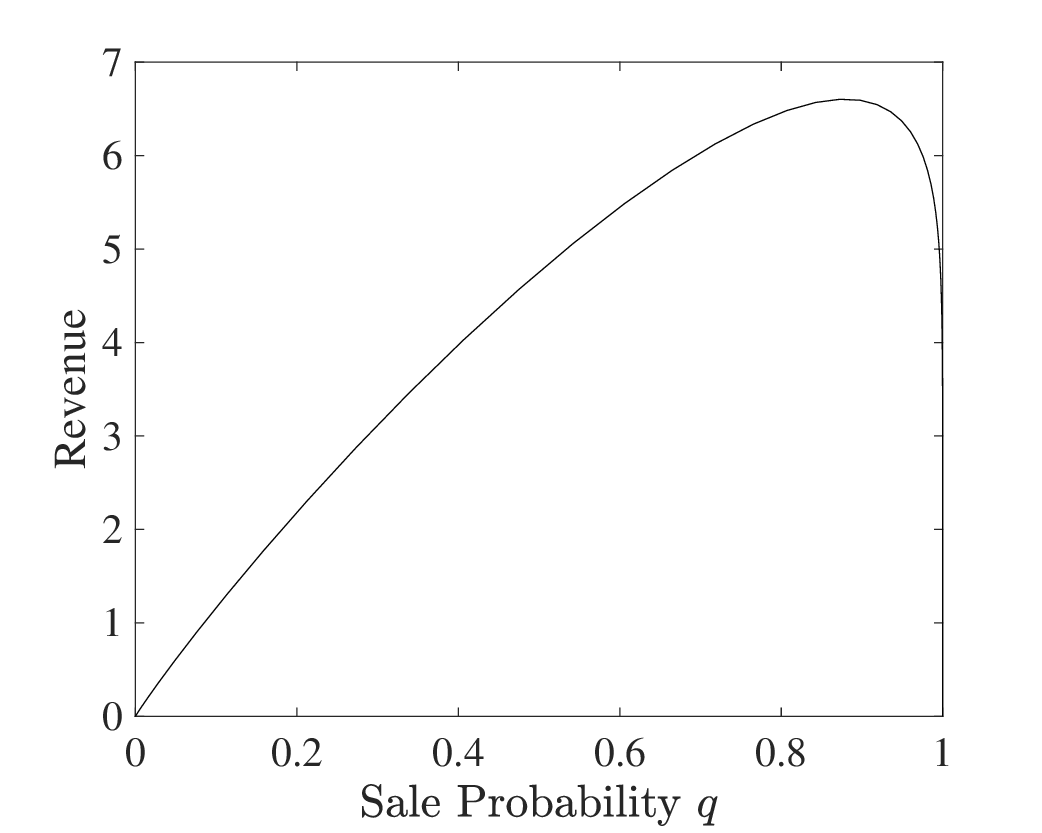}
\caption{$\G\sim \mathrm{Normal}(10,1), n =4$}
\end{subfigure}

\begin{subfigure}[t]{0.32\textwidth}
\includegraphics[width=\textwidth,keepaspectratio]{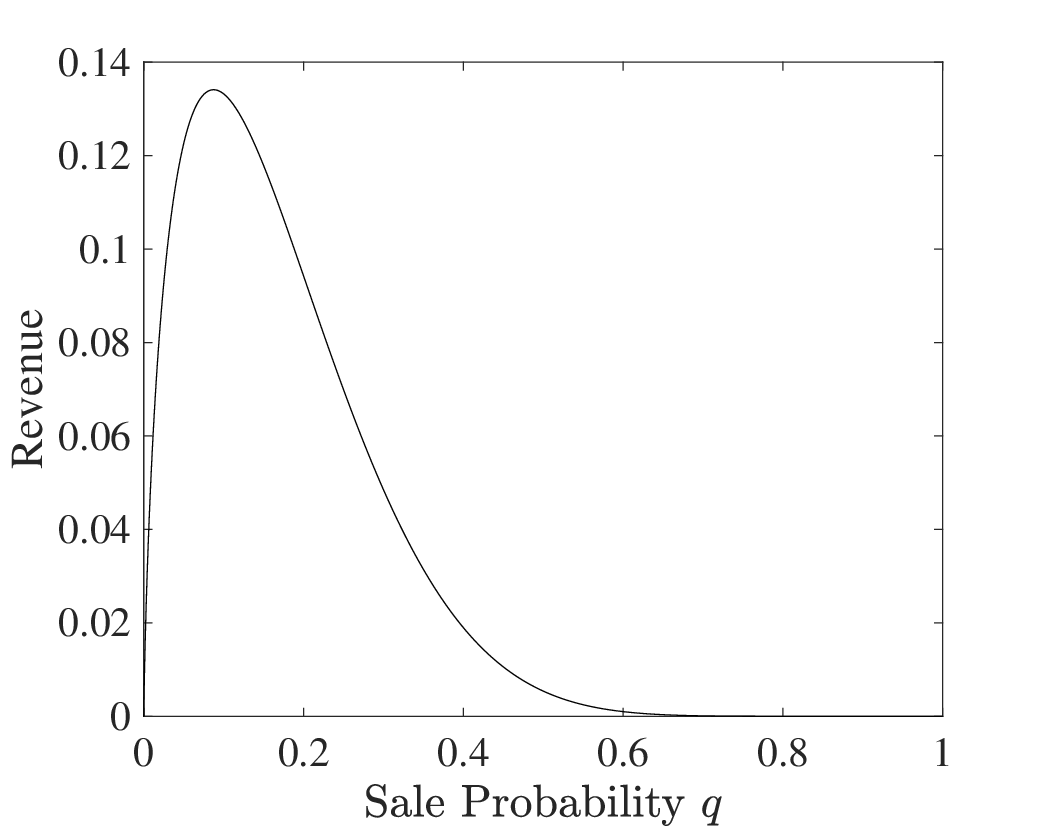}
\caption{$\G\sim \mathrm{exp}(1), n =10$}
\end{subfigure}
\begin{subfigure}[t]{0.32\textwidth}
\includegraphics[width=\textwidth,keepaspectratio]{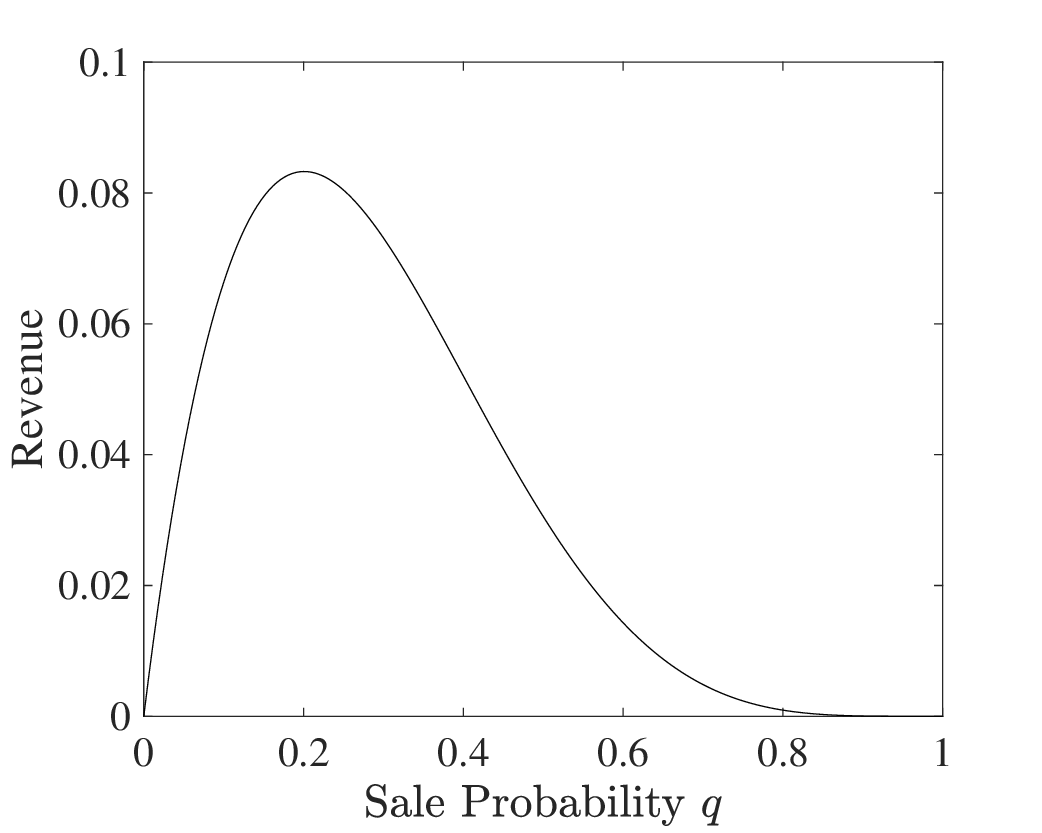}
\caption{$\G\sim \mathrm{Beta}(2,2), n =10$}
\end{subfigure}
\begin{subfigure}[t]{0.32\textwidth}
\includegraphics[width=\textwidth,keepaspectratio]{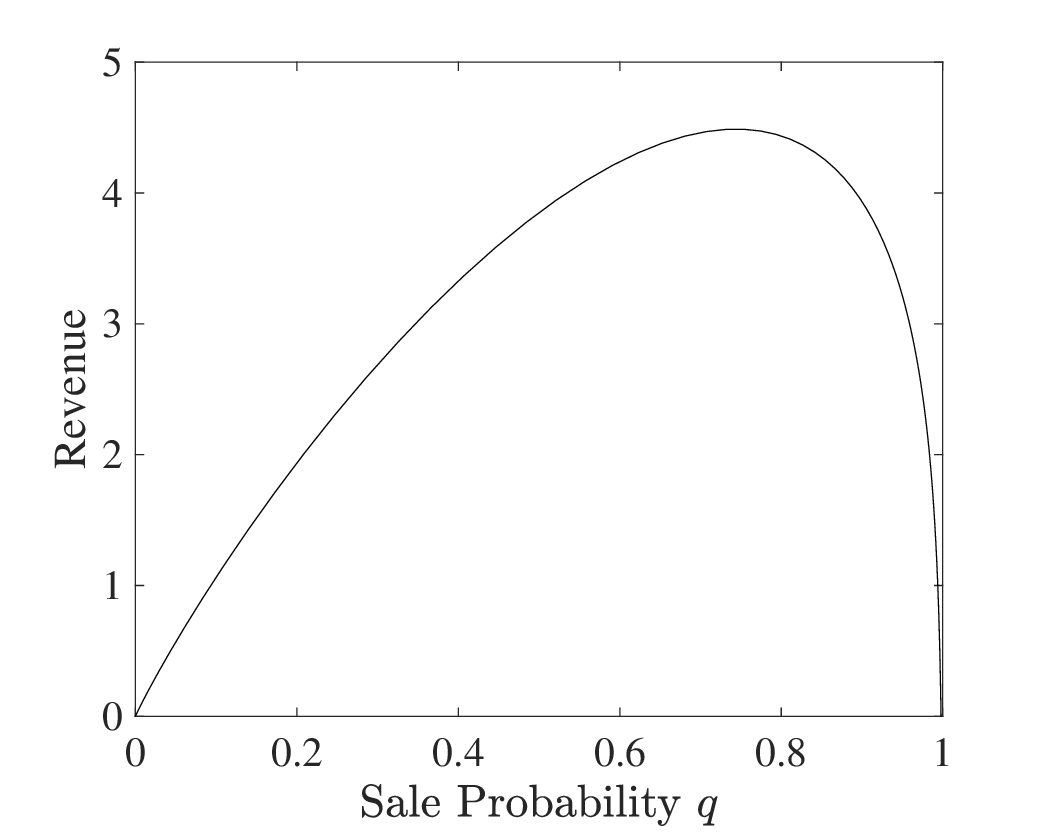}
\caption{$\G\sim \mathrm{Normal}(10,1), n =10$}
\end{subfigure}

\captionsetup{justification=centering}
\caption{Revenue Curves of $\Fiid[2](\G)$ when $\G$ is exponential, Beta or normal distribution}
\label{fig:regular_above_example}
\end{figure}

\subsection{Non-Robustness of Myerson's Auction}
\label{sec:example}

Next, we examine the robustness of Myerson’s auction. When $\Fiid(\G)$ is regular above the reserve, Myerson’s auction coincides with the second-price auction with a reserve price, and its robustness follows from our previous analysis. In this subsection, we focus on more subtle cases in which ironing is required.

At first glance, since Myerson’s auction allocates the item to the buyer with the highest ironed virtual value and charges the corresponding minimum winning bid, it may appear that the auction depends only on the first and second order statistics.
However, this intuition is misleading. The ironing procedure introduces subtle tie-breaking issues. In particular, within each ironed interval, ironing creates a flat region in the ironed virtual value function, which induces point masses
in the distribution of ironed virtual values. As a consequence, there is a positive probability that multiple buyers share the same ironed virtual value.

These ties must be resolved by a tie-breaking rule, and the resulting allocation and payments depend on the number of bidders whose bids tie at the second-highest level.
In this way, the revenue guarantee of Myerson’s auction implicitly depends on lower order statistics, rendering its robust analysis significantly more delicate.

In the remainder of this section, we construct a concrete counterexample showing that
\[
\min_{\dists \in \Pi_k(\G)} \Rev(\myerson(\Fiid(\G)^n), \dists) < \Rev(\myerson(\Fiid(\G)^n), \Fiid(\G)^n)~.
\]
In other words, identifying the unique i.i.d.\ distribution $\Fiid(\G)$ that is consistent with $\G$ and running Myerson’s auction designed for $\Fiid(\G)^n$ can be overly optimistic.

In fact, we establish a stronger statement in our example: no mechanism can guarantee the revenue achieved by Myerson’s auction on the consistent i.i.d.\ distribution $\Fiid(\G)$ across all distributions in $\Pi_k(\G)$.
Moreover, a variant of the same example admits the second-price auction with a reserve as the robustly optimal mechanism, whereas Myerson’s auction fails to be robustly optimal.

\paragraph{Example.}
We define two distributions on $[1,2]$.
Let $\F_{\mathrm{disc}}$ denote the two-point distribution supported on $\{1,2\}$, where
\[
v =
\begin{cases}
1, & \text{with probability } q,\\
2, & \text{with probability } 1-q.
\end{cases}
\]

Let $\F_{\mathrm{reg}}$ denote the continuous distribution with cumulative distribution function
\[
\F_{\mathrm{reg}}(v) =
\begin{cases}
0, & v \in [0,1),\\[6pt]
\dfrac{v - 1}{\,v - 2 + 1/q\,}, & v \in [1,2),\\[10pt]
1, & v \in [2,\infty).
\end{cases}
\]
We focus on the regime where $q$ is sufficiently large:
\[
3q^2 - 2q^3 \ge \tfrac{3}{4} \iff q \ge \sim 0.673
\]
This is a technical assumption, whose use shall be clear from the analysis. 
The revenue curve of $\F_{\mathrm{reg}}$ is the concave envelope of the revenue curve of $\F^{\mathrm{disc}}$. Equivalently, $\F_{\mathrm{reg}}$ is the ironed version of $\F_{\mathrm{disc}}$. 
The ironed virtual value function $\bar{\varphi}_{\mathrm{disc}}$ of $\F_{\mathrm{disc}}$ coincides with the virtual value function $\varphi_{\mathrm{reg}}$ of $\F_{\mathrm{reg}}$ and is equal to the derivative of the revenue curve:
\[
\bar{\varphi}_{\mathrm{disc}}(2) = \varphi_{\mathrm{reg}}(2) = 2, \qquad 
\bar{\varphi}_{\mathrm{disc}}(v) = \varphi_{\mathrm{reg}}(v) = 2 - 1/q, \quad \forall v \in [1,2)~.
\]
In particular, $\F_{\mathrm{reg}}$ is a regular distribution, whereas $\F_{\mathrm{disc}}$ is irregular. Refer to Figure~\ref{fig:reg_disc} for a geometric illustration of the revenue curves.

\begin{figure}[t]
\begin{subfigure}[t]{0.45\textwidth}
\includegraphics[width=\textwidth,keepaspectratio]{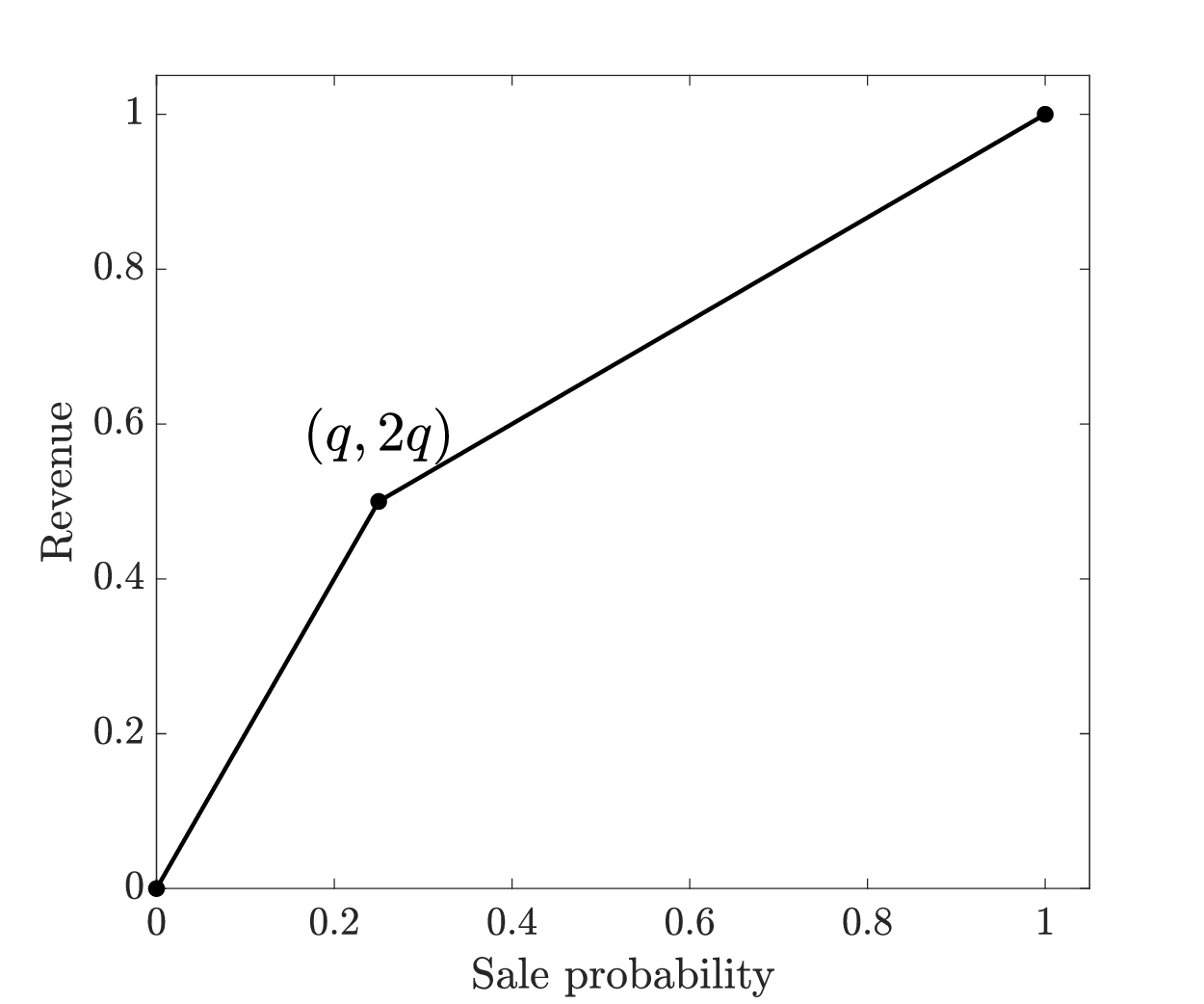}
\caption{$\F_{\mathrm{reg}}$}
\end{subfigure}
\begin{subfigure}[t]{0.45\textwidth}
\includegraphics[width=\textwidth,keepaspectratio]{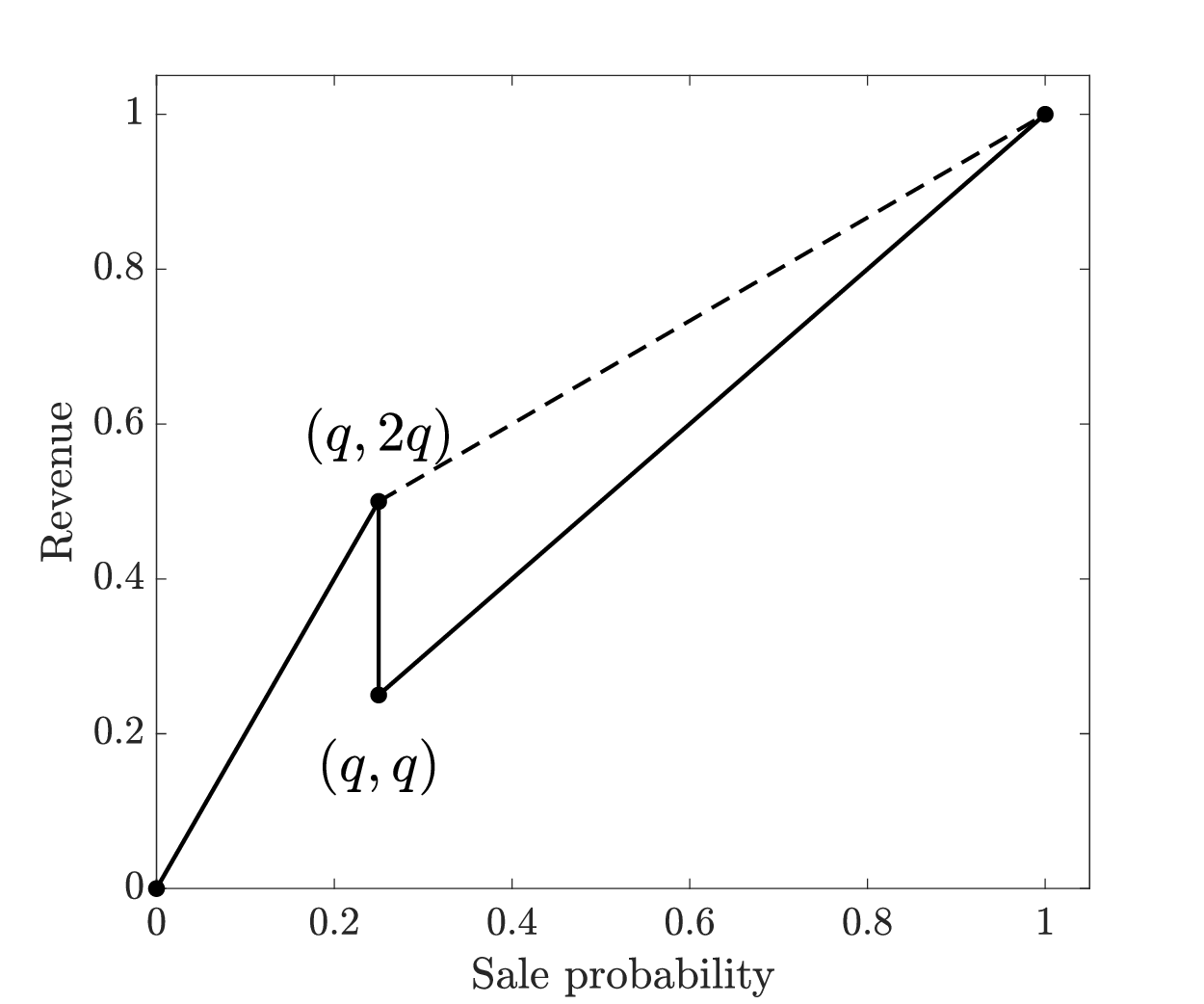}
\caption{$\F_{\mathrm{disc}}$}
\end{subfigure}
\centering
\captionsetup{justification=centering}
\caption{Revenue Curves of $\F_{\mathrm{reg}}$ and $\F_{\mathrm{disc}}$. The solid lines represent the revenue curves of $\F_{\mathrm{reg}}$ and $\F_{\mathrm{disc}}$. Ironing corresponds to taking the concave envelope of this curve, which is illustrated by the dashed line. By construction, the concave envelope of $\F_{\mathrm{disc}}$ coincides with the revenue curve of $\F_{\mathrm{reg}}$.}
\label{fig:reg_disc}
\end{figure}

Finally, consider a setting with three buyers. Let $\G_{\mathrm{disc}}$ (resp. $\G_{\mathrm{reg}}$) be the distribution of the second order statistic of three i.i.d.\ draws from $\F_{\mathrm{disc}}$ (resp. $\F_{\mathrm{reg}}$).

\begin{proposition}
We have the following:
\begin{itemize}
    \item $\R[2](\G_{\mathrm{disc}}) < \opt(\F_{\mathrm{disc}}^3)$; 
    \item $\R[2](\G_{\mathrm{reg}}) = \min_{\dists \in \Pi_2(\G_{\mathrm{reg}})} \Rev(\spa, \dists) > \min_{\dists \in \Pi_2(\G_{\mathrm{reg}})} \Rev(\myerson(\F_{\mathrm{reg}}^3), \dists)$.
\end{itemize}
\end{proposition}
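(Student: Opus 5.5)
We treat the two bullets separately. Both rest on one feature of the example: all values lie in $[1,2]$, and the ironed virtual value $\bar\varphi_{\mathrm{disc}}=\varphi_{\mathrm{reg}}$ is the constant $2-1/q>0$ on $[1,2)$ and jumps to $2$ at $v=2$. Note that $\F_{\mathrm{reg}}(2^-)=q$, so $\F_{\mathrm{reg}}$ has an atom of mass $1-q$ at $2$. Consequently, under any of the relevant Myerson auctions every bidder with value in $[1,2)$ ties with every other such bidder, and the revenue depends on how many bidders fall into the flat region. That count is information about $v_{(3)}$, which $\G$ does not pin down.

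\emph{Second bullet.} First, $\F_{\mathrm{reg}}$ is regular and its virtual value is positive on the whole support, so its monopoly reserve is $1$, the bottom of the support. Hence \Cref{thm:second} with $k=2$ gives
\[
\R[2](\G_{\mathrm{reg}})=\min_{\dists\in\Pi_2(\G_{\mathrm{reg}})}\Rev(\spa,\dists)=\Rev(\spa,\F_{\mathrm{reg}}^3)=\opt(\F_{\mathrm{reg}}^3).
\]
Second, we write $\myerson(\F_{\mathrm{reg}}^3)$ explicitly with uniform tie-breaking, using the payment identity $p=v\,x(v)-\int_1^v x$.
\begin{itemize}
\item If no bidder has value $2$ and $m$ bidders lie in $[1,2)$, the winner is uniform among them and revenue is $1$.
\item If exactly one bidder has value $2$, she wins and pays $2-\int_1^2 x=2-\tfrac{1}{m+1}$.
\item If $j\ge 2$ bidders have value $2$, revenue is $2$.
\end{itemize}
Thus $\Rev(\myerson(\F_{\mathrm{reg}}^3),\dists)$ equals a term determined by the joint law of $(v_{(1)},v_{(2)})$ plus a correction that decreases in the number $m$ of bidders in $[1,2)$ when $v_{(1)}=2>v_{(2)}$.

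Third, we build $\dists^\dagger\in\Pi_2(\G_{\mathrm{reg}})$ that makes $m$ large on that event. The first candidate is $\delta_1\times \F'\times\F'$, with $\F'$ supported on $[1,2]$ and solved pointwise from $\Pr{v_{(2)}\le v}=\G_{\mathrm{reg}}(v)$, which is a quadratic equation in $\F'(v)$. The alternative is a product with one bidder pushed toward the atom at $2$ and the others spread over $[1,2)$. We then check $\Rev(\myerson(\F_{\mathrm{reg}}^3),\dists^\dagger)<\Rev(\spa,\F_{\mathrm{reg}}^3)$ by explicit integration.

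\emph{First bullet.} Everything is supported on $\{1,2\}$, so a mechanism is a finite object: allocation and payment on $\{1,2\}^3$, subject to IC and IR. We will bound $\R[2](\G_{\mathrm{disc}})$ from above by $\max_\mech\min_{\ell}\Rev(\mech,\dists_\ell)$ over a small family $\dists_1=\F_{\mathrm{disc}}^3,\dists_2,\dots\in\Pi_2(\G_{\mathrm{disc}})$ of products of Bernoulli distributions on $\{1,2\}$. This is a finite LP, so it suffices to exhibit dual weights $\lambda_\ell$ with
\[
\max_\mech\sum_\ell\lambda_\ell\Rev(\mech,\dists_\ell)<\opt(\F_{\mathrm{disc}}^3)=2-q^2.
\]
Here $2-q^2=2(1-q^3)+(2-1/q)q^3$ is the expected maximum ironed virtual value.

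The natural companion distributions are $\delta_1\times B(a)^2$ and $\delta_2\times B(b)^2$, where $B(a)$ puts mass $a$ on $1$ and $1-a$ on $2$. Consistency with $\G_{\mathrm{disc}}$ requires $(1-a)^2=1-(3q^2-2q^3)$ and $b^2=3q^2-2q^3$. The assumption $3q^2-2q^3\ge 3/4$ is exactly what guarantees $a\ge 1/2$, so these exist with the needed irregularity. The driving intuition is that achieving $2-q^2$ on $\F_{\mathrm{disc}}^3$ forces, by revenue equivalence, the ironed interim allocation, meaning symmetric random pooling of low types. Against $\delta_1\times B(a)^2$ this pooling wastes allocation on the sure-$1$ bidder. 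A deviation that prevents the waste breaks symmetry and loses revenue on $\F_{\mathrm{disc}}^3$.

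\emph{Main obstacle.} The main obstacle is choosing the companion distributions and the weights $\lambda_\ell$ so that the LP bound is strict for every $q$ with $3q^2-2q^3\ge 3/4$. Comparing optimal revenues of single distributions alone is not enough: for example $\opt(\delta_1\times B(a)^2)=2-a^2$, and $a<q$ is possible. I would first solve the small LP numerically at, say, $q=0.7$ to identify the binding distributions. I would then read off closed-form dual weights and verify them symbolically in $q$.
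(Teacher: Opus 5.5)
\textbf{First bullet.} Your plan cannot succeed as set up, so the ``main obstacle'' is an impossibility rather than a difficulty. No choice of companions or weights works while all companions are products of Bernoulli laws on $\{1,2\}$. On $\{1,2\}^3$, a lone bidder at $2$ always faces exactly two pooled rivals. Take the uniform-tie-breaking Myerson auction for $\F_{\mathrm{disc}}^3$; it is DSIC and IR on $\{1,2\}^3$, so it is feasible in your LP. It earns $2$, $5/3$ or $1$ according as at least two, exactly one, or no bidder is at $2$. Its revenue is therefore $1+\beta+\tfrac23\gamma$, where $\beta=\P[v_{(2)}=2]=1-3q^2+2q^3$ is fixed and $\gamma=\P[v_{(1)}=2>v_{(2)}]$. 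By Lemma~\ref{lem:second_first}, $\beta+\gamma=\P[v_{(1)}\ge 2]$ is minimized at $\F_{\mathrm{disc}}^3$, where this revenue equals $2-q^2$. So the auction earns at least $2-q^2$ on every member of your family; for instance it earns $1+(1-a)^2+\tfrac43 a(1-a)\ge 2-q^2$ on $\delta_1\times B(a)^2$. Hence $\max_\mech\sum_\ell\lambda_\ell\Rev(\mech,\dists_\ell)\ge 2-q^2$ for all weights. The missing idea is that $\Pi_2(\G_{\mathrm{disc}})$ also contains laws with values below the support of $\G_{\mathrm{disc}}$. The paper replaces your sure-$1$ bidder by a dummy at $0$: it uses $B(a)\times B(a)\times\delta_0$ with your same $a$, $(1-a)^2=1-3q^2+2q^3$. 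The assumption on $q$ gives $a\ge 1/2$, so the low type's ironed virtual value $2-1/a$ is nonnegative. Then $\opt=2-a=1+(1-q)\sqrt{1+2q}<2-q^2$, since $\sqrt{1+2q}<1+q$. A single distribution suffices. Your remark that single-distribution comparisons are ``not enough'' is an artifact of restricting supports to $\{1,2\}$.

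\textbf{Second bullet.} Your payment formula $2-\tfrac{1}{m+1}$ is correct, but it is increasing in $m$, so the adversary wants $m$ small, not large. On the event $v_{(1)}=2>v_{(2)}$ the i.i.d.\ law already has the maximal value $m=2$. Both of your candidates also have $m=2$ there, because they are supported on $[1,2]$. For them the revenue is again $1+\beta+\tfrac23\gamma$, with $\gamma$ minimized at $\F_{\mathrm{reg}}^3$, so every such candidate yields at least $\opt(\F_{\mathrm{reg}}^3)$. Your planned integration would therefore come out the wrong way. Getting $m=1$ requires $v_{(3)}<1$, which again means a dummy at $0$. Take $\F'\times\F'\times\delta_0$ with $\F'=1-\sqrt{1-\G_{\mathrm{reg}}}$; it lies in $\Pi_2(\G_{\mathrm{reg}})$ and $\F'$ has an atom of mass $1-a$ at $2$. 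On it the pooled Myerson auction earns $2(1-a)^2+\tfrac32\cdot 2a(1-a)+a^2=2-a<2-q^2=\R[2](\G_{\mathrm{reg}})$. The paper reaches the same conclusion without computing. The pooled auction cannot distinguish values in $[1,2)$, so its worst case over $\Pi_2(\G_{\mathrm{reg}})$ equals that of $\myerson(\F_{\mathrm{disc}}^3)$ over $\Pi_2(\G_{\mathrm{disc}})$. By the first bullet that is below $\opt(\F_{\mathrm{disc}}^3)=\opt(\F_{\mathrm{reg}}^3)$. Your $\spa$ half, via Theorem~\ref{thm:second} with reserve $1$, is fine.
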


The first statement suggests that the optimal revenue on $n$ i.i.d.\ $\Fiid[2](\G)$ might not be achievable when $\Fiid[2](\G)$ does not satisfy regularity above the reserve. Notice that this example is minimal in the following sense.  
(1) With only two buyers, the second highest value coincides with the smallest value. In this case, Theorem~\ref{thm:last} implies that Myerson’s auction is robustly optimal.  
(2) If the distribution $\G$ has support size one, the problem becomes degenerate and admits a trivial solution. Our construction $\G_{\mathrm{disc}}$ has support size $2$.

The second statement shows that for certain distributions,  the second-price auction with a reserve is robustly optimal, while Myerson's auction, despite being optimal for the i.i.d. distribution $\Fiid[2](\G)^n$, fails to achieve the same worst-case revenue guarantee. In other words, second price auction is more robust.

\begin{proof} 

We start with the first statement and compute the optimal revenue for $\F_{\mathrm{disc}}^3$.
Since the ironed virtual value of $\bar{\varphi}_{\mathrm{disc}}(1) = 2-1/q > 0$, and $\bar{\varphi}_{\mathrm{disc}}(2) = 2$.
Recall that the optimal revenue equals the optimal (ironed) virtual welfare, we have
\[
\opt(\F_{\mathrm{disc}}^3) = \Ex[\vals \sim \F_{\mathrm{disc}}^3]{\max_{i \in [3]} \bar{\varphi}_{\mathrm{disc}}(v_i)} = 2 - q^2~.
\]
One feasible implementation of the Meryson's auction is to allocate the item to the highest bidder and to charge the winner the minimum winning bid, under the deterministic lexicographic tie-breaking. 

A subtle distinction between Myerson's auction and the standard second-price auction arises when the valuation profile $\vals$ is $(1,2,1)$ or $(1,1,2)$. In these cases, Myerson's auction charges the winner a payment of $2$, whereas the second-price auction charges $1$, interpreting the winning bid as $1+\varepsilon \to 1$.
Taking this into account, the expected optimal revenue can as well be calculated in the following way:
\begin{align*}
\opt(\F_{\mathrm{disc}}^3) &= 2 \cdot \Big(
\underbrace{(1-q)^3 + 3(1-q)^2 q}_{\text{at least two bids equal }2}
+ \underbrace{2(1-q)q^2}_{\vals \in \{(1,2,1),(1,1,2)\}} \Big) + 1 \cdot \Big( \underbrace{(1-q)q^2}_{\vals=(2,1,1)} + \underbrace{q^3}_{\vals = (1,1,1)} \Big) \\
& = 2 - q^2~.
\end{align*}

Next, we construct an alternative independent distribution whose second order statistic distribution is consistent with $\G_{\mathrm{disc}}$, but yields strictly smaller optimal revenue.
Notice that the distribution $\G_{\mathrm{disc}}$ of the second order statistic $v_{(2)}$ is supported on two points:
\[
v_{(2)} =
\begin{cases}
1 & \text{with probability } 3q^2 - 2q^3, \\
2 & \text{otherwise}.
\end{cases}
\]
Let $\F_1 = \F_2$ be a two-point distribution that assigns value $1$ with probability
\[
1 - \sqrt{\,1 - 3q^2 + 2q^3\,},
\]
and value $2$ otherwise. Let the third buyer be a dummy bidder with deterministic value $0$, i.e., $\F_3 = \delta_0$.
By direct inspection, when $\vals \sim \F_1 \times \F_2 \times \F_3$, the induced distribution of the second order statistic $v_{(2)}$ is exactly $\G_{\mathrm{disc}}$.

Since the third buyer never affects allocation or payments, the seller's optimal revenue coincides with that in the two-buyer setting $\F_1 \times \F_2$. Moreover, the ironed virtual value function of $\F_1$ is 
\[
\bar{\varphi}_{\F_1}(v) = 2-\frac{1}{1-\sqrt{1-3q^2+2q^3}} \ge 0, \qquad \forall v \in [1,2)
\]
Here the inequality holds by our assumption on $q$. Hence, the expected optimal revenue can be calculated as the following:
\begin{align*}
\R[2](\G_{\mathrm{disc}})
&\le \opt(\F_1 \times \F_2 \times \F_3) = \opt(\F_1 \times \F_2) \\
& = 2 \cdot \big( \P[v_1=v_2=2] + \P[v_1=1,v_2=2] \big)
+ 1 \cdot \big( \P[v_1=2,v_2=1] + \P[v_1=v_2=1] \big) \\
&= 1 + \P[v_2=2] = 1 + \sqrt{\,1 - 3q^2 + 2q^3\,}.
\end{align*}
Finally, notice that for every $q \in (0,1)$,
\[
1 + \sqrt{\,1 - 3q^2 + 2q^3\,} < 2 - q^2,
\]
which implies
\[
\R[2](\G_{\mathrm{disc}}) < \opt(\F_{\mathrm{disc}}^3).
\]
This establishes that solving for a consistent i.i.d.\ distribution and running Myerson's auction is not robustly optimal when only the distribution of the second order statistic is known.

Next, we study the second statement. Since $\F_{\mathrm{reg}}$ is a regular distribution (its virtual value function is monotone), Theorem~\ref{thm:second} implies that the second-price auction is robustly optimal for $\G_{\mathrm{reg}}$. In particular, given only the distribution $\G_{\mathrm{reg}}$ of the second order statistic, the second-price auction maximizes the seller's worst-case expected revenue:
\[
\R[2](\G_{\mathrm{reg}}) = \R[2](\spa, \G_{\mathrm{reg}}) = \opt(\F_{\mathrm{reg}}^3)~.
\]

On the other hand, because the virtual value function is constant on $[1,2)$, there exists an alternative optimal mechanism for $\F_{\mathrm{reg}}^3$. In particular, Myerson's auction can be implemented by pooling all values in $[1,2)$ together. More formally, consider the following auction:
\begin{itemize}
    \item Treat all bids in the interval $[1,2)$ as identical, and allocate the item to the bidder with the highest bid, breaking ties lexicographically (or uniformly at random).
    \item Charge the winner the minimum winning bid.
\end{itemize}
This is the same as the Myerson's auction designed for $\F_{\mathrm{disc}}^3$, i.e.,
\[
\myerson(\F_{\mathrm{reg}}^3) = \myerson(\F_{\mathrm{disc}}^3)~.
\]
Notice that this pooled version of Myerson's auction achieves the same optimal revenue as the second-price auction on $\F_{\mathrm{reg}}^3$, i.e., $\Rev(\myerson(\F_{\mathrm{reg}}^3), \F_{\mathrm{reg}}^3) = \opt(\F_{\mathrm{reg}}^3)$. 

We remark that by slightly perturbing the revenue curve of $\F_{\mathrm{reg}}$ to make it non-concave, the pooling over the interval $[1,2)$ becomes the unique implementation of Myerson’s auction. The subsequent analysis shows that the robust performance guarantee of the Myerson’s auction is a constant factor smaller than its revenue on $\F_{\mathrm{reg}}^3$.
In contrast, the second-price auction with a reserve price incurs only a negligible loss in its performance guarantee due to the perturbation.

We now examine its robust performance. Consider
\[
\min_{\dists \in \Pi_2(\G_{\mathrm{reg}})} \Rev(\myerson(\F_{\mathrm{reg}}^3),\dists)~.
\]
Notice that this Myerson auction does not distinguish among values in the interval $[1,2)$. As a consequence,
\[
\min_{\dists \in \Pi_2(\G_{\mathrm{reg}})} \Rev(\myerson(\F_{\mathrm{reg}}^3),\dists)
=
\min_{\dists \in \Pi_2(\G_{\mathrm{disc}})} \Rev(\myerson(\F_{\mathrm{disc}}^3),\dists)~.
\]
By the first statement of the proposition, we further obtain
\begin{align*}
\min_{\dists \in \Pi_2(\G_{\mathrm{disc}})} \Rev(\myerson(\F_{\mathrm{disc}}^3),\dists) & < \Rev(\myerson(\F_{\mathrm{disc}}^3), \F_{\mathrm{disc}}^3)  = \Rev(\myerson(\F_{\mathrm{reg}}^3), \F_{\mathrm{reg}}^3) = \R[2](\G_{\mathrm{reg}})~.
\end{align*}
\end{proof}

\paragraph{Second Price Auction with Reserve vs.\ Myerson's Auction.} Finally, we clarify that the worst-case performance guarantees of the second-price auction with a reserve and Myerson’s auction are generally not comparable for a given distribution $\G$ when $\Fiid[2](\G)$ is not regular above the reserve. Nevertheless, the performance of these two mechanisms on the consistent i.i.d.\ distribution $\Fiid[2](\G)^n$ provides natural lower and upper bounds for the robust optimization problem:
\[
\Rev(\spa(p^*), \Fiid[2](\G)^n) \le \R[2](\G) \le \Rev(\myerson(\Fiid[2](\G)^n), \Fiid[2](\G)^n)~.
\]
where the first inequality follows from Theorem~\ref{thm:second} and the second inequality arises from the optimality of $\myerson$ under $\Fiid[2](\G)^n$. Moreover, for i.i.d.\ distributions $\Fiid[2](\G)^n$, the second-price auction with an optimal reserve achieves at least half of the optimal revenue \citep{hartline2009simple}, so we obtain  
$$\Rev(\spa(p^*), \Fiid[2](\G)^n)\ge 0.5 \cdot \Rev(\myerson(\Fiid[2](\G)^n), \Fiid[2](\G)^n) \ge 0.5 \cdot \R[2](\G).$$
That is, $\spa(p^*)$ provides a 0.5-approximation to $\R[2](\G)$ even when $\Fiid[2](\G)$ is not regular above its optimal reserve.
Given the simplicity of the second-price auction with a reserve and its guaranteed improvement over the second-price auction without a reserve, we argue that this mechanism is well justified in practice. Indeed, as illustrated by the above example, the second-price auction without a reserve may already be robustly optimal, and switching to Myerson’s auction can even reduce the seller’s worst-case revenue.

\subsection{Extensions to General Intermediate Order Statistics and Applications}
\label{sec:k-statistic}

We prove Lemma~\ref{lem:general-order-statistic} for general $k$ and $i$, and discuss its applications in multi-unit, and position auctions.

\subsubsection{Proof of Lemma~\ref{lem:general-order-statistic}}
We consider $n$ independent random variables $v_1,v_2,\dots,v_n$ with (unknown) marginal distributions $\dists=\F_1 \times \F_2 \times \dots \times \F_n$. Let $v_{(k)}$ denote the $k$-th order statistic, i.e., the $k$-th largest value among $v_1,\dots,v_n$. We are given the distribution function $\G$ of $v_{(k)}$.

Fix a value $v\in\mathbb{R}$. We consider the following optimization problem:
\begin{align*}
\max_{\dists}:\quad
& \Pr[\vals\sim\dists]{v_{(i)} \le v}\\
\text{subject to}:\quad
& \Pr[\vals\sim\dists]{v_{(k)} \le v} = \G(v)
\end{align*}
This is an \emph{order-statistic constrained optimization} problem: among all choices of marginal distributions that realize the prescribed distribution of the $k$-th order statistic, we seek the one that minimizes the distribution of the $i$-th order statistic. 

A useful way to view the problem is to consider the induced binary random variables $X_j=\ind[v_j>v]$. Under a product distribution, the $X_j$'s are independent Bernoulli random variables, and the events $\{v_{(t)}\le v\}$ are completely determined by the sum $\sum_j X_j$. Thus, for each fixed $v$, the problem reduces to a finite-dimensional optimization over Bernoulli parameters $\{x_j\}$, where $x_j:=\Pr{v_j>v}$.
And we have
\[
\textstyle \Pr{v_{(t)} \le v} = \Pr{\sum_j X_j < t} \qquad \text{for every } t\in[n].
\]
Hence, for each fixed $v$, the optimization problem reduces to
\[
\begin{aligned}
\max_{\mathbf{x}\in[0,1]^n}: \quad & \textstyle \Pr{\sum_j X_j <i}\\
\text{subject to}: \quad & \textstyle \Pr{\sum_j X_j < k} = \G(v).
\end{aligned}
\]

An easy case is when $i>k$, as the objective is a probability, it is at most $1$.
On the other hand, take any feasible product distribution realizing the constraint $\Pr{\sum_j X_j<k}=\G(v)$ using only $k$ coordinates, and set the remaining $n-k$ coordinates to satisfy $x_j=0$ (i.e., $X_j \equiv 0$ a.s.). Then $\Pr{\sum_j X_j<i} = 1$ for $i>k$. This establishes the second part of the statement.

Next, we focus on the case $i<k$ and show that the above optimization problem admits an optimal solution satisfying $x_1 = x_2 = \cdots = x_n$.
Assuming this claim holds, the lemma follows immediately from the fact that
\[
\OS[i]\left(\Fiid(\G)^n\right)(v) \ge \OS[i](\dists)(v)
\]
for every $v \in \mathbb{R}$, which is precisely the definition of the asserted
first-order stochastic dominance.

It suffices to prove a pairwise equalization step: fixing $(x_3,\dots,x_n)$, the two-variable optimization over $(x_1,x_2)$ has an optimal solution with $x_1=x_2$. Iterating over pairs yields an optimal solution with all coordinates equal.

Fix $x_3,\dots,x_n$, and let $S := \sum_{j=3}^n X_j$. Then $S$ is independent of $(x_1,x_2)$, and for any integer $t$ we have
\begin{align}
\Pr{\sum_j X_j < t}
&=\Pr{S<t-2}+\Pr{S=t-2}\Pr{X_1+X_2\le 1}+\Pr{S=t-1}\Pr{X_1+X_2=0}\notag\\
&=\Pr{S<t-2}+\Pr{S=t-2}(1-x_1x_2)+\Pr{S=t-1}(1-x_1)(1-x_2)\notag\\
&=\Pr{S<t}-\Pr{S=t-1}(x_1+x_2)+\bigl(\Pr{S=t-1}-\Pr{S=t-2}\bigr)x_1x_2.
\label{eq:expand}
\end{align}
Apply \eqref{eq:expand} with $t=i$ and $t=k$.  Since $x_3,\dots,x_n$ are fixed,
the quantities
\[
A_t := \Pr{S<t},\quad
B_t := \Pr{S=t-1},\quad
C_t := \Pr{S=t-1}-\Pr{S=t-2}
\]
are constants (depending only on $S$).  Then the objective and constraint become
\[
\max_{x_1,x_2\in[0,1]}\; A_i - B_i(x_1+x_2) + C_i x_1x_2
\quad\text{s.t.}\quad
A_k - B_k(x_1+x_2) + C_k x_1x_2 \;=\; \G(v).
\]
Dropping the additive constant $A_i$ in the objective and dividing both objective and constraint by $B_i>0$ and $B_k>0$ respectively (the degenerate case only happens when $\G(v)=0$ and can be handled trivially), define
\[
a := \frac{C_i}{B_i} = 1-\frac{\P[S=i-2]}{\P[S=i-1]},
\quad
b := \frac{C_k}{B_k} = 1-\frac{\P[S=k-2]}{\P[S=k-1]},
\quad
c := \frac{\G(v)-A_k}{B_k},
\]
so the two-variable problem is equivalent to
\begin{equation}\label{eq:reduced}
\max_{x_1,x_2\in[0,1]} \; f(x_1,x_2):=-(x_1+x_2)+a\,x_1x_2
\quad\text{s.t.}\quad
-(x_1+x_2)+b\,x_1x_2 = c.
\end{equation}

\begin{claim}
    We have $1\ge a \ge b$.
\end{claim}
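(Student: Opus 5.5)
The claim is $1\ge a\ge b$, where $a = 1-\P[S=i-2]/\P[S=i-1]$ and $b = 1-\P[S=k-2]/\P[S=k-1]$ with $i<k$, and $S=\sum_{j\ge 3}X_j$ is a sum of independent Bernoulli variables (a Poisson binomial variable).

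\textbf{Upper bound.} The bound $a\le 1$ is immediate, since $\P[S=i-2]\ge 0$ and $\P[S=i-1]>0$; the latter is the nondegenerate case already assumed when dividing by $B_i$.

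\textbf{Reduction.} For $a\ge b$, it suffices to show that the ratio
\[
r_t := \frac{\P[S=t-2]}{\P[S=t-1]}
\]
is nondecreasing in $t$ whenever the denominators are positive. Then $i<k$ gives $r_i\le r_k$, hence $a=1-r_i\ge 1-r_k=b$. Writing $p_m:=\P[S=m]$, monotonicity of $r_t$ is equivalent to $p_{m-1}p_{m+1}\le p_m^2$ for every $m$, i.e.\ log-concavity of the sequence $(p_m)$.

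Indeed, log-concavity with no internal zeros gives
\[
\frac{p_{m-1}}{p_m}\le \frac{p_m}{p_{m+1}},
\]
and chaining these inequalities from $m=i-1$ up to $m=k-2$ yields $r_i\le r_k$.

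\textbf{Log-concavity.} This is the key step. I will invoke the classical fact that a Poisson binomial distribution is log-concave (Newton's inequalities). The generating polynomial $\prod_{j\ge3}\bigl((1-x_j)+x_j z\bigr)$ has only real nonpositive roots, so its coefficients satisfy
\[
p_m^2\ \ge\ p_{m-1}p_{m+1}\Bigl(1+\tfrac1m\Bigr)\Bigl(1+\tfrac1{n-2-m}\Bigr)\ \ge\ p_{m-1}p_{m+1}.
\]

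As a self-contained alternative, I would argue by induction on the number of summands, since convolution of log-concave sequences is log-concave. Each Bernoulli sequence $(1-x_j,\,x_j)$ is log-concave with no internal zeros, and convolution preserves both properties.

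\textbf{Edge cases.} The remaining care is with zeros of the support. The support of $S$ is a contiguous integer interval $\{L,\dots,U\}$, where $L$ is the number of $x_j=1$ and $U=L+{}$(number of $x_j\in(0,1)$). The nondegeneracy assumptions guarantee that $i-1$ and $k-1$ lie in $[L,U]$. Within this interval, $r_t$ is well defined and the chained inequalities apply. If $i-2<L$, then $r_i=0\le r_k$ trivially.

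The main obstacle is precisely this log-concavity step. Everything else is bookkeeping about the support.
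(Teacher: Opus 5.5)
Your proposal is correct and follows essentially the same route as the paper: log-concavity of the Poisson-binomial pmf via Newton's inequalities gives the monotone ratios $p_{m-1}/p_m$, so $r_i \le r_k$ when $i<k$, and $a\le 1$ follows from nonnegativity of the ratio. Your additional bookkeeping is welcome and fills in details the paper glosses over: the contiguous support, the positivity of the intermediate $p_m$ needed to chain the inequalities, and the case $x_j=1$, where the paper's rescaling by $x_j/(1-x_j)$ breaks down.
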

\begin{proof}
Recall that $S=\sum_{j=3}^n X_j$ is a sum of independent Bernoulli random variables, i.e., a Poisson-binomial random variable.
Let $q_\ell := \Pr{S=\ell}$ denote its probability mass function. It is well known that the pmf of a Poisson-binomial distribution is log-concave: for every integer $\ell$,
\begin{equation}\label{eq:logconcave}
q_\ell^2 \ge q_{\ell-1} q_{\ell+1}.
\end{equation}
One way to see this is via Newton's inequalities. Consider the probability generating function
\[
P(z) := \E[z^S] = \prod_{j=3}^n \bigl((1-x_j)+x_j z\bigr)
= \sum_{\ell=0}^{n-2} q_\ell z^\ell.
\]
The coefficients $\{q_\ell\}$ are (up to a common positive scaling) elementary symmetric polynomials in the nonnegative numbers $\{\tfrac{x_i}{1-x_i}\}$, and Newton's inequalities imply that the coefficient sequence of such a polynomial is log-concave, yielding \eqref{eq:logconcave}.
Consequently, for every $\ell$ with $1\le \ell \le n-3$,$\frac{q_{\ell-1}}{q_\ell} \le \frac{q_\ell}{q_{\ell+1}}$.
Recall the definitions
\[
a := 1-\frac{\P[S=i-2]}{\P[S=i-1]} = 1-\frac{q_{i-2}}{q_{i-1}},
\qquad
b := 1-\frac{\P[S=k-2]}{\P[S=k-1]} = 1-\frac{q_{k-2}}{q_{k-1}}.
\]
Since each ratio $q_{\ell-1}/q_{\ell}$ is nonnegative, we have $a\le 1$ and $b\le 1$. Moreover, because $q_{\ell-1}/q_\ell$ is non-decreasing and $j<k$, we have $a \ge b$. Finally, note that $b$ (and also $a$) may be negative if the corresponding ratio exceeds $1$.
\end{proof}

Let $s := x_1+x_2$ and $q := x_1 x_2$.
In terms of $(s,q)$, the constraint in \eqref{eq:reduced} can be written as
$-s + b q = c$, or equivalently, $s = b q - c$.
Therefore, along the feasible set, the objective function satisfies
\[
f(x_1,x_2) = -s + a q = -(b q- c) + a q = c + (a-b)q.
\]
Since $c$ is fixed by the constraint and $a\ge b$, the objective is non-decreasing in $q$.
Consequently, maximizing $f(x_1,x_2)$ over the feasible set is equivalent to maximizing the product $q=x_1 x_2$ subject to the same constraint.

By AM--GM, for any $x_1,x_2\ge 0$,
\[
s=x_1+x_2 \ge 2\sqrt{x_1x_2}=2\sqrt q,
\]
with equality if and only if $x_1=x_2$.
Combining with $s=bq-c$, every feasible pair must satisfy
\begin{equation}\label{eq:h-nonneg}
h(q):=bq-c-2\sqrt q \ge 0.
\end{equation}
Since $b<1$, on the interval $q\in(0,1]$, we have
\[
h'(q)=b-\frac{1}{\sqrt q} \le b-1<0~.
\]
Thus, $h(q)$ is strictly decreasing on $(0,1]$. Now take any feasible $(x_1,x_2)$ with product $q$. If $x_1\ne x_2$, then AM--GM is strict and hence $h(q)>0$. Since $h$ is strictly decreasing, there exists $q'>q$ (still with $p'\le 1$) such that $h(q')=0$.
Let $x=\sqrt{q'}$. Then $x\in[0,1]$ and
\[
2x=2\sqrt{q'}=bq'-c,
\]
so $(x_1,x_2)=(x,x)$ satisfies the constraint and has product $x^2=q'>q$.
Therefore, any feasible point with $x_1\ne x_2$ cannot maximize $q$. We conclude that the maximum of $q=x_1x_2$ over the feasible set is attained when $x_1=x_2$.

To sum up, we have shown that, fixing all other coordinates, replacing any pair $(x_j,x_{j'})$ by an equal pair (while maintaining feasibility) does not decrease the objective. Applying this pairwise equalization iteratively yields an optimal solution with $x_1=\cdots=x_n$. This completes the proof.

\subsubsection{Applications}
Beyond its theoretical interest, Lemma~\ref{lem:general-order-statistic} for general $i$ and $k$ has important implications for practical mechanism design.
In many widely used mechanisms, the realized outcomes, especially the transfers, depend only on a small number of leading order statistics of bidders' valuations. 
In Section~\ref{sec:second}, we establish the robustness of the second-price auction with a reserve by exploiting the fact that its revenue depends only on the distributions of the first- and second-order statistics.
We generalize this observation by defining a class of mechanisms whose total revenue depends only on the highest $k$ valuations. For this broad class of monotone and symmetric mechanisms, we prove that their worst-case revenue guarantee is attained in the i.i.d.\ case. Consequently, if the revenue-maximizing auction under the i.i.d.\ distribution admits an implementation that depends only on the top $k$ order statistics, then it is robustly optimal within our framework.

\begin{definition}
\label{def:k-order-statistic}
We say that a mechanism $\mech$ is a \emph{top-$k$ order-statistic mechanism} if its total payment depends only on the top $k$ order statistics of the valuation profile $\vals$ and is separable across these coordinates. Formally, there exist non-decreasing functions $f_j : \mathbb{R} \to \mathbb{R}$ for each $j \in [k]$ such that, for every valuation profile $\vals$,
\[
\sum_{i \in [n]} p_i(\vals) = \sum_{j \in [k]} f_j\left(v_{(j)}\right).
\]
We use $\mathcal{T}_k$ to denote the class of all top-$k$ order-statistic mechanisms.
\end{definition}

\begin{theorem}
    For every $\mech \in \mathcal{T}_k$ and every $\ell \ge k$, we have
    \[
    \min_{\dists \in \Pi_\ell(\G)} \Rev(\mech, \dists) = \Rev(\mech, \Fiid[\ell](\G)^n) 
    \]
\end{theorem}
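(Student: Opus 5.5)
The plan is to reduce the revenue of a top-$k$ mechanism to a sum of one-dimensional expectations of order statistics, and then bound each term using Lemma~\ref{lem:general-order-statistic}. By Definition~\ref{def:k-order-statistic}, for every $\dists$,
\[
\Rev(\mech,\dists) = \sum_{j\in[k]} \Ex[\vals\sim\dists]{f_j\bigl(v_{(j)}\bigr)},
\]
and each summand depends on $\dists$ only through the marginal $\OS[j](\dists)$. This separability is the key point. We never need a joint coupling of all order statistics, which does not exist in general for intermediate $\ell$. We only need the marginal-wise minimality provided by the lemma.

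Next, fix $\ell \ge k$ and $\dists\in\Pi_\ell(\G)$, and consider each $j\in[k]$. If $j<\ell$, the first bullet of Lemma~\ref{lem:general-order-statistic} (applied with $k$ replaced by $\ell$ and $i=j$) gives $\OS[j](\dists)\succeq \OS[j](\Fiid[\ell](\G)^n)$. If $j=\ell$, which can only happen when $j=k=\ell$, then $\OS[\ell](\dists)=\G=\OS[\ell](\Fiid[\ell](\G)^n)$ by definition of the ambiguity set and of $\Fiid[\ell](\G)$. In both cases, first-order stochastic dominance together with the monotonicity of $f_j$ yields
\[
\Ex[\vals\sim\dists]{f_j(v_{(j)})} \ge \Ex[\vals\sim\Fiid[\ell](\G)^n]{f_j(v_{(j)})}.
\]
This step uses the standard quantile coupling of one-dimensional distributions. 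It requires the expectations to be well defined, e.g., $f_j$ bounded below on the support, which we assume implicitly as the paper does for revenues. Summing over $j\in[k]$ gives $\Rev(\mech,\dists)\ge \Rev(\mech,\Fiid[\ell](\G)^n)$.

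Finally, $\Fiid[\ell](\G)^n\in\Pi_\ell(\G)$ by Claim~\ref{clm:iid-inversion}, so the lower bound is attained and the minimum equals $\Rev(\mech,\Fiid[\ell](\G)^n)$.

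The only substantive ingredient is the $i<\ell$ case of Lemma~\ref{lem:general-order-statistic}, which we take as given; the remaining steps are bookkeeping. The main subtlety to watch is the boundary index $j=\ell$, handled above by the constraint itself. One should also note why $\ell\ge k$ matters. If some $j>\ell$ appeared, the second bullet of the lemma would make the worst case degenerate (order statistics equal to $\delta_0$) rather than i.i.d., so the argument would break.
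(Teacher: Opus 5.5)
Your proposal is correct and follows the paper's proof essentially step for step. You write the revenue as $\sum_{j\in[k]}\E[f_j(v_{(j)})]$, bound each term using Lemma~\ref{lem:general-order-statistic} together with the monotonicity of $f_j$, and note that $\Fiid[\ell](\G)^n\in\Pi_\ell(\G)$ to get attainment. You also handle the boundary index $j=\ell$ explicitly: neither bullet of the lemma covers it, but it is an equality by the constraint. The paper leaves this implicit, and it is exactly the case used in its multi-unit and position-auction applications.
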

\begin{proof}
Let $\{f_j\}_{j \in [k]}$ be the corresponding payment functions as defined in Definition~\ref{def:k-order-statistic}. Then we have
\begin{align*}
\Rev(\mech, \dists) & = \Ex[\vals \sim \dists]{\sum_{j \in [k]} f_j(v_{(j)})} = \sum_{j \in [k]} \Ex[v \sim \Phi_j(\dists)]{f_j(v)} \\
& \ge \sum_{j \in [k]} \Ex[v \sim \Phi_j(\bar{\F}_\ell(\G)^n)]{f_j(v)} = \Rev(\mech, \Fiid[\ell](\G)^n)~,
\end{align*}
where the inequality follows by Lemma~\ref{lem:general-order-statistic} and the monotonicity of $f_j$.
\end{proof}
This theorem implies that any top-$k$ order-statistic mechanism designed for the consistent i.i.d.\ distribution is robust over the entire ambiguity set with $l$-th order statistic information for $l\ge k$. In particular, the second-price auction with a reserve belongs to the class $\cT_2$, while Myerson's auction does not belong to $\cT_k$ for all $k<n$, which implies the robustness of $\spa$. Finally, we provide two other applications of top-$k$ order statistic mechanisms as a generalization of our previous results.

\paragraph{Multi-Unit Auction.} Consider a multi-unit auction where the seller has $k$ homogeneous units and each buyer is unit-demand. The seller allocates the units to the first $k$ bidders with highest bids and all winners pay a price of the $(k+1)$-th highest bid $v_{(k+1)}$. Then the seller could collect the distribution of the $(k+1)$-th order statistic. Now the seller can adopt this information to set a reserve price $r$ to improve his revenue, i.e., all units are sold at a price of $\max\offf{v_{(k+1)}, r}$. 
The total revenue in this mechanism is determined entirely by the realization of the top $k+1$ bids. 
\[
\sum_{j\in [n]}p_j(\vals) = r\cdot \sum_{j\in[k]} \mathbbm{1}[v_{(j)}\ge r] + k \cdot (v_{(k+1)}-r)^+ ~,
\]
which is increasing in each order statistic $v_{(j)}$ for $j=1,\dots, k+1$ and separable across these coordinates.
Consequently, if the seller wishes to optimize the reserve price $r$ given only information about the $(k+1)$-th order statistic, our theorem suggests that the minimum expected revenue is attained under the i.i.d.\ distribution consistent with this order-statistic information. Therefore, the robustly optimal reserve can be computed by solving the seller’s revenue-maximization problem under the i.i.d.\ assumption.

Alternatively, instead of setting a reserve price, the seller may choose
to optimize the number of units sold. Specifically, the seller may solve
\[
j^* = \argmax_{j \in [k]}\Ex{j \cdot v_{(j+1)}}~,
\]
and sell only $j^*$ units via a $(j^*+1)$-price auction.
That is, rather than allocating all $k$ units, the seller deliberately
restricts supply and sells only $j^*$ units.
This mechanism also belongs to the class $\mathcal{T}_{k+1}$, and our theorem
again implies that the worst-case expected revenue is attained under the i.i.d.\ distribution consistent with this order-statistic information.

\paragraph{Position Auction.}
Position auctions are foundational to the sponsored search markets operated by platforms such as Google and Bing. Consider a setting with $k$ advertising slots with click-through rates $\alpha_1\ge \alpha_2\ge \dots\ge\alpha_{k}$. The laddered auction proposed in \cite{aggarwal2006truthful} allocates slot $i$ to the bidder with the $i$-th highest value $v_{(i)}$. The payment charged to the $i$-th highest bidder is 
\[
p_{(i)}(\vals) = \sum_{j=i}^{k} (\alpha_j-\alpha_{j+1})\cdot v_{(j+1)}
\]
where $\alpha_{k+1}=0$. Therefore, the total payment in this mechanism depends only on the first $k+1$ order statistics, so the laddered auction also belongs to the class of \emph{top-$(k+1)$ order-statistic mechanisms}. 
More generally, consider a laddered auction with a reserve price $r\ge 0$. 
In this case, bidder $i$ is assigned a slot only if $v_{(i)}\ge r$ and her payment is
\[
p_{(i)}(\vals) = \mathbbm{1}\offf{v_{(i)}\ge r}\cdot 
\sum_{j=i}^{k} (\alpha_j-\alpha_{j+1})\cdot \max\offf{v_{(j+1)}, r}.
\]
The total payment collected by the seller remains a function of the first $k+1$ order statistics and is monotone in each of these statistics. Consequently, the laddered auction with reserve also belongs to the class of top-$(k+1)$ order-statistic mechanisms.
The seller can therefore determine the revenue-optimal reserve price by solving the problem under the assumption that bidders’ valuations are i.i.d., which achieves the optimal performance against all distributions consistent with the observed $(k+1)$-th order statistic.

\section{Conclusion and Open Questions}
In this work, we propose a framework for robust auction design under order-statistic information and characterize robust mechanisms corresponding to different order statistics.
We conclude by discussing several natural directions for future research.

\begin{itemize}
\item For the setting in which an intermediate order statistic is known, we establish the robustness of the second-price auction with a reserve and prove its conditional optimality. A natural open question is to characterize the optimal mechanism in full generality for this setting.

Furthermore, as discussed in Section~\ref{sec:example}, the robust benchmark $\R(\G)$ is sandwiched between the revenue generated by the second-price auction with a reserve and that achieved by Myerson’s auction on the consistent i.i.d.\ distribution. It follows immediately that the second-price auction with a reserve is a $0.5$-approximation to the robust mechanism design problem. Given the simplicity and practical relevance of this mechanism, it is of interest to sharpen the analysis of its worst-case approximation ratio. We conjecture that this ratio is strictly greater than $0.5$ in general, motivated by our examples showing that the revenue achieved by Myerson’s auction can be strictly larger than $\R(\G)$. This direction fits into the theme of simple versus optimal mechanisms, where simple mechanisms are provably approximately optimal~\citep{hartline2009simple,alaei2019optimal,jin2019tight,jin2020tight}.

\item Another common approach in robust mechanism design is to evaluate mechanisms via approximation ratios, defined as the worst-case ratio between the revenue generated by a mechanism and the optimal revenue under full distributional knowledge \citep{azar2012optimal, azar2013optimal}. 
Designing optimal approximate mechanisms when only the first order statistic is known is a promising yet technically challenging direction, as it requires moving beyond the degenerate instances considered in our framework. Notably, this approach does not readily extend to higher order statistics: for $k \ge 2$, it is possible that a single buyer with a sufficiently large valuation dominates the optimal revenue, while such information is not revealed by the $k$-th order statistic.

\item Throughout this work, we assume that the distribution of a single order statistic is known. A natural extension is to study settings in which the marginal or joint distributions of multiple order statistics are available. A particularly relevant example is the case where the joint distribution of $(v_{(1)}, v_{(2)})$ is known, which can be reliably collected from sealed-bid second-price auctions.

\item Finally, motivated by growing concerns over privacy, anonymity and non-discrimination, we advocate modeling informational constraints directly at the level of available data without imposing explicit restrictions on the mechanism itself. 
An important open direction is to explore robust mechanism design under richer forms of anonymous and aggregated data, beyond order-statistic information.
\end{itemize}

\newpage

\bibliographystyle{plainnat}
\bibliography{myref}

\end{document}